\documentclass[11pt]{article}%
\pdfoutput=1
\usepackage{hyperref}
\usepackage{bbm,url,braket,microtype,amssymb,amsthm,mathtools,mathrsfs,cleveref,graphicx,float,authblk}
\usepackage{fullpage}
\usepackage{lmodern}
\usepackage[T1]{fontenc}
\usepackage[USenglish]{babel}
\usepackage[dvipsnames]{xcolor}
\usepackage[titletoc]{appendix}
\usepackage{tikz}
\usepackage{caption}
\usepackage{subcaption}
\usetikzlibrary{shapes.geometric}
\usetikzlibrary{angles}
\usetikzlibrary{quantikz, patterns}
\usetikzlibrary{calc}

\setcounter{MaxMatrixCols}{30}
\providecommand{\U}[1]{\protect\rule{.1in}{.1in}}
\newtheorem{thm}{Theorem}\crefname{thm}{Theorem}{Theorems}
\newtheorem{lem}[thm]{Lemma}\crefname{lem}{Lemma}{Lemmas}
\newtheorem{prp}[thm]{Proposition}\crefname{prp}{Proposition}{Propositions}
\newtheorem{cor}[thm]{Corollary}\crefname{cor}{Corollary}{Corollaries}
\crefname{prb}{Problem}{Problems}
\newtheorem{dfn}[thm]{Definition}\crefname{dfn}{Definition}{Definitions}
\newtheorem{rmk}[thm]{Remark}\crefname{rmk}{Remark}{Remarks}
\crefname{section}{Section}{Sections}
\crefname{appendix}{Appendix}{Appendices}
\numberwithin{equation}{section}
\let\oldref\ref
\renewcommand{\ref}[1]{(\oldref{#1})}
\DeclareMathOperator{\tr}{tr}
\DeclareMathOperator{\spec}{spec}

\newcommand{\abs}[1]{{\left\vert{#1}\right\vert}}

\title{Randomized Benchmarking Beyond Groups}

\author[1]{Jianxin Chen}
\author[2]{Dawei Ding}
\author[1]{Cupjin Huang}
\affil[1]{Alibaba Quantum Laboratory, Alibaba Group USA, Bellevue, Washington 98004, USA}
\affil[2]{Alibaba Quantum Laboratory, Alibaba Group USA, Sunnyvale, California 94085, USA}

\date{}

\setcounter{tocdepth}{2}
\begin{document}

\maketitle
\begin{abstract}
Randomized benchmarking (RB) is the gold standard for experimentally evaluating the quality of quantum operations. The current framework for RB is centered on groups and their representations, but this can be problematic. For example, Clifford circuits need up to $O(n^2)$ gates, and thus Clifford RB cannot scale to larger devices. Attempts to remedy this include new schemes such as linear cross-entropy benchmarking (XEB), cycle benchmarking, and non-uniform RB, but they do not fall within the group-based RB framework. In this work, we formulate the \emph{universal randomized benchmarking (URB) framework} which does away with the group structure and also replaces the recovery gate plus measurement component with a general ``post-processing'' POVM. Not only does this framework cover most of the existing benchmarking schemes, but it also gives the language for and helps inspire the formulation of new schemes. We specifically consider a class of URB schemes called \emph{twirling schemes}. For twirling schemes, the post-processing POVM approximately factorizes into an intermediate channel, inverting maps, and a final measurement. This leads us to study the twirling map corresponding to the gate ensemble specified by the scheme. We prove that if this twirling map is strictly within unit distance of the Haar twirling map in induced diamond norm, the probability of measurement as a function of gate length is a single exponential decay up to small error terms. The core technical tool we use is the matrix perturbation theory of linear operators on quantum channels.
\end{abstract}

\tableofcontents
\section{Introduction}
With recent breakthroughs in the hardware development of quantum processors, it becomes increasingly important to be able to efficiently characterize their performance. Such a task, often called \emph{benchmarking}, is essential in directing future quantum hardware development. An efficient and reliable benchmarking scheme not only allows comparison between different physical platforms and designs, but also provides useful feedback for device calibration and error diagnosis. This in turn provides useful information for future hardware designs, and, eventually, achieving fault-tolerant quantum computing.

A large number of such benchmarking schemes is collectively called \emph{randomized benchmarking} (RB). Randomized benchmarking aims to extract information about a certain gate or a collection of gates through runs of random gate sequences, while isolating the effect of state preparation and measurement (SPAM) errors. Implementing the RB experiment usually produces an exponential decay curve with respect to the length of the random gate sequences. It is widely assumed that the decay rate of this curve, which can be obtained via fitting the experimental data, indicates a certain fidelity measure of the given gate set~\cite{emerson2005scalable}. Since its first proposal in~\cite{emerson2005scalable}, randomized benchmarking has been thoroughly analyzed for its theoretical correctness and robustness~\cite{knill2008randomized, magesan2011scalable, wallman2018randomized, merkel2021randomized, helsen2020general}, and it has been extensively validated by experiments~ \cite{knill2008randomized,gaebler2012randomized,gambetta2012characterization,mckay2019three,garion2021experimental}. Many variants of the original RB protocol have been proposed due to its great flexibility and adaptability to different experimental scenarios~\cite{magesan2011scalable,carignan2015characterizing, cross2016scalable, wallman2015estimating, wallman2016robust}, 

One central question in the study of RB is choosing the probability distributions according to which the random gate sequences are chosen. Most RB schemes require that the gate set forms a group,\footnote{In fact, most RB schemes specify that all but the final gates are drawn i.i.d.\ uniformly or Haar-randomly from the group.} as the final gate in most RB experiments inverts all the previously applied gates, hence requiring the gate set to be closed under inversion of products.  However, requiring that the gate set form a group is not always feasible nor necessary in an experimental setting. We give several reasons:
\begin{itemize}
    \item Realizing an inverse gate can be experimentally challenging. In superconducting devices, arbitrary single-qubit gates can be easily implemented with high fidelities, whereas realizing a new two-qubit gate usually requires significant additional control and calibration. As a result, a connected qubit pair usually only admits a single calibrated two-qubit gate. For two-qubit gates that are locally equivalent to their inverses (or equivalently, locally equivalent to a real rotation in $SO(4)$~\cite{zhang2003geometric}), inverting them only requires appending the corresponding single-qubit gates. However, this is not true for all two-qubit gates. Counterexamples include the fSim gates which are indeed used in benchmarking experiments~\cite{arute2019quantum}.
    \item Group structure can be very costly to implement. A typical Haar random element in $SU(2^n)$ would require a circuit of exponential length and depth to realize. A random Clifford gate on $n$ qubits on average requires $\Omega(n^2)$ size and $\Omega(n/\log n)$ depth~\cite{jiang2020optimal}. Neither of these groups can be used to extract any useful information given the noise levels on current physical devices; the signal would be long gone after implementing just a few group elements.
    \item Recently, linear cross-entropy benchmarking~\cite{arute2019quantum} was proposed as an experiment-friendly alternative to RB on large quantum devices with tens of qubits. Linear XEB uses random shallow circuits whose unitaries clearly do not form a group, but an exponential decay was nevertheless observed experimentally. In fact, multiple existing benchmarking schemes~\cite{proctor2021scalable, proctor2019direct, erhard2019characterizing} rely on random sequences of shallow circuits whose unitaries do not form groups.
\end{itemize}

In this paper, we extend the RB framework beyond groups, and propose a \emph{universal randomized benchmarking (URB) framework} which incorporates most of existing RB schemes,\footnote{With the exception of leakage benchmarking protocols, the correct formulation of which might require quantum channels on infinite-dimensional Hilbert spaces. } in particular ones that cannot be easily formulated as group-based protocols. Most importantly, the random gates do not need to form a group but can be any set. Furthermore, instead of a single recovery gate that is often applied at the end of the group-based RB sequence (which is often well-defined given the group structure), we allow for a more general \emph{post-processing POVM} that depends on the random set elements chosen. For known RB schemes, intuitively the post-processing POVM \emph{verifies} that the gate sequence was perfectly applied, which is the case for group-based  RB where the post-processing POVM factors into the recovery gate and the final measurement. However, under our framework any gate sequence dependent POVM is allowed, even ones that do not follow the intuition of verification. The general, possibly not a group, gate set and the general post-processing POVM define the essence of the generality provided by the URB framework. This helps open the doors to thinking about RB from a new angle and may motivate fundamentally different schemes.

Our main technical contribution is determining conditions under which a URB scheme experiment gives rise to a single exponential decay. The usual strategy to prove exponential decay for an RB scheme is to first formulate the RB measurement probability as a linear functional of powers of a linear operator. The linear operator represents the average effect of one random gate, and the power is the sequence length. The spectral properties of the linear operator then gives the exponential decay. For group-based RB with gate-independent noise, this operator is the twirled noise channel~\cite{emerson2005scalable,magesan2011scalable, carignan2015characterizing}. For group-based RB with gate-dependent noise it is the Fourier transformation of the implementation map~\cite{merkel2021randomized,helsen2020general,kong2021framework}. For our case where a group structure is not present, we assume the post-processing POVM involves in a certain sense inverting the random gates, leading us to study linear operators on the set of quantum channels known as \emph{twirling maps}:
$$\Lambda^*_R:\mathcal{N}\mapsto \int_{g\sim \mu}dg \omega(g)^\dagger\circ \mathcal{N}\circ \omega(g),$$
where $\mu$ is some measure over a gate set and $\omega$ is a map from the gate set to $SU(d)$. Intuitively, these are higher-order operators representing the averaged joint action of a random gate and its inversion. They appeared in~\cite{boone2019randomized} to analyze a specific scheme without group structure. We can prove that single exponential decays can be observed in general when the corresponding twirling map is within $\gamma$ of the Haar twirling map (set $\mu$ as the Haar measure over $SU(d)$) in induced diamond norm with $\gamma<1$. This does not involve any underlying group structure, but instead the matrix perturbation theory of higher-order operators.


The rest of the paper is organized as follows. We first introduce basic notation, introducing operator norms and matrix perturbation theory of higher-order operators in \Cref{sec:prelim}. We then introduce the URB framework, how an experiment would run, and state sufficient conditions to obtain a single exponential decay in \Cref{sec:framework}, together with examples of existing benchmarking schemes formulated as URB schemes. \Cref{sec:result} presents the proof of the single exponential decay under the assumptions and other relevant technical details. \Cref{sec:discussion} concludes with a discussion and open problems.

\section{Preliminaries}
\label{sec:prelim}
\subsection{Linear Operators and Norms}
Here we give an exposition of different norms on linear operators that we will use.

\paragraph{Norms on Hermitian matrices.}
The spectrum of a Hermitian matrix is a vector in $\mathbb{R}^d$. We can then define norms of a Hermitian matrix via vector $\ell_p$-norms of its spectrum. For general matrices, these are known as Schatten $p$-norms. We consider three norms:
\begin{itemize}
    \item Trace norm $$\Vert \rho \Vert_{1}:= \Vert \spec(\rho)\Vert_{\ell_1} =\tr \vert \rho \vert.$$
    \item Frobenius norm $$\Vert \rho \Vert_{F}:= \Vert \spec(\rho)\Vert_{\ell_2} =\sqrt{\tr [\rho^2]}.$$
    \item Spectral norm $$\Vert \rho \Vert_{\infty}:= \Vert \spec(\rho)\Vert_{\ell_\infty} = \max_i \vert \lambda_i(\rho)\vert.$$
\end{itemize}
It is easy to see that
\begin{align}
\label{eq:hermitian_norm_inequality}
    \Vert\rho\Vert_\infty\leq\Vert\rho\Vert_F\leq\Vert \rho\Vert_{1}\leq \sqrt{d}\cdot \Vert\rho\Vert_F.
\end{align}
Quantum states are positive semidefinite operators with unit trace, and a positive operator-valued measurement (POVM) element is a positive semidefinite operator with spectral norm upper bounded by 1.

Define the Hilbert-Schmidt inner product on Hermitian matrices as $\langle\cdot,\cdot\rangle_{HS}:(\rho,\sigma)\mapsto\tr[\rho\sigma]$. It is easy to see that the Frobenius norm of Hermitian matrices is induced by this inner product.

\paragraph{Norms on real superoperators}
Consider linear maps $\mathcal{C}$ on Hermitian matrices, which we call \emph{real superoperators}. We can treat them as usual linear maps and define the corresponding operator norms:
\begin{itemize}
    \item Induced trace norm $$\Vert\mathcal{C}\Vert_{\tr} :=\max_{\rho \neq 0}\frac{\Vert \mathcal{C}(\rho)\Vert_{1}}{\Vert\rho\Vert_{1}}.$$
    \item Induced Frobenius norm $$\Vert\mathcal{C}\Vert_{2} :=\max_{\rho \neq 0}\frac{\Vert \mathcal{C}(\rho)\Vert_{F}}{\Vert\rho\Vert_{F}}.$$
\end{itemize}
We can also define another norm via the \emph{superoperator} inner product: $$\langle\cdot,\cdot\rangle_\mathrm{SO}:(\mathcal{C},\mathcal{D})\mapsto\sum_i\langle \mathcal{C}(X_i),\mathcal{D}(X_i)\rangle_\mathrm{HS},$$
where $\{X_i\}_i$ is an orthonormal basis of matrices under the Hilbert-Schimidt inner product. Note that $\{X_i\}_i$ can be any orthonormal basis of matrices, and we can extend the action of $\mathcal C$ to non-Hermitian matrices via linearity. We can then define
\begin{align*}
    \Vert \mathcal C \Vert_\mathrm{SO} := \sqrt{\langle \mathcal C,\mathcal C\rangle_\mathrm{SO}}.
\end{align*}
This norm is the analogue of the Frobenius norm for matrices. Another norm, called the \emph{diamond norm}, is defined on composite systems:
$$\Vert\mathcal{C}\Vert_{\diamond} :=\max_{\rho \neq 0}\frac{\Vert (\mathcal{C}\otimes\mathrm{id})(\rho)\Vert_{1}}{\Vert\rho\Vert_{1}},$$
where $\mathrm{id}$ is the identity map on $\mathbb{C}^{d\times d}$.

From the definitions and~\Cref{eq:hermitian_norm_inequality} we have the following relations between the norms:
\begin{align}
    \Vert\mathcal{C}\Vert_{\tr}&\leq\Vert \mathcal{C}\Vert_{\diamond},\nonumber \\
\sqrt{d^{-1}}\Vert \mathcal{C}\Vert_{2}&\leq \Vert \mathcal{C}\Vert_{\tr}\leq \sqrt{d}\Vert \mathcal{C}\Vert_{2},\nonumber\\
 d^{-1} \Vert \mathcal C\Vert_2 &\leq \Vert \mathcal{C}\Vert_{\diamond}\leq d\Vert \mathcal{C}\Vert_{2}.\label{eqn:channel_dim}
\end{align}
We can also prove the following norm inequality:
\begin{align*}
    & \Vert \mathcal C\Vert_\mathrm{SO}^2 \nonumber\\
    & =  \sum_{i,j} \Vert\mathcal C(E(i,j))\Vert_F^2 \nonumber\\
    & \leq \sum_{i,j} \Vert \mathcal C(E(i,j))\Vert_1^2 \nonumber\\
    & = \sum_{i,j} \Vert \mathcal C(E(i,j)_H + E(i,j)_{AH})\Vert_1^2 \nonumber\\
    &  \leq \sum_{i,j} \Vert \mathcal C(E(i,j)_H) \Vert^2_1+ 2 \Vert \mathcal C(E(i,j)_H) \Vert_1 \Vert i \mathcal C(E(i,j)_{AH} /i )\Vert_1+ \Vert i \mathcal C(E(i,j)_{AH} /i)\Vert_1^2 \nonumber\\
    & \leq 4 d^2 \Vert \mathcal C\Vert_{\tr}^2,
\end{align*}
where $E(i,j)_{kl} := \delta_{ik}\delta_{jl}$ are the elementary matrices, which are orthonormal under the Hilbert-Schmidt inner product and have unit trace norms, and $_H$ and $_{AH}$ denote the Hermitian and anti-Hermitian parts, respectively. We conclude
\begin{align}
    \label{eq:SO_tr_inequality}
    \Vert \mathcal C \Vert_\mathrm{SO} \leq 2 d \Vert \mathcal C \Vert_{\tr}.
\end{align}

For the other direction, we argue
\begin{align*}
\Vert \mathcal{C}\Vert_{\tr} \leq \sqrt{d} \Vert \mathcal C \Vert_2 \leq \sqrt{d}\Vert \mathcal{C}\Vert_{SO},
\end{align*}
where the second inequality follows since any Hermitian matrix can be extended to a basis.

A quantum channel is a completely positive, trace-preserving (CPTP) map, and consequently has unit induced trace norm and unit diamond norm. Furthermore, a quantum channel has unit induced Frobenius norm if it is a unitary, and sub-unit induced Frobenius norm if it is a mixture of unitaries. Throughout the paper we will mainly consider the Hilbert space spanned by all channels equipped with the $\langle \cdot,\cdot\rangle_\mathrm{SO}$ inner product, which we denote as $V(d)$. There is a subspace $V_0(d)$ with codimension 1 spanned by all \emph{differences} of quantum channels. This with an arbitrary quantum channel spans the whole $V(d)$. All norms defined on real superoperators natrually carries to $V(d)$ and $V_0(d)$.

It can be verified that the above norms are all bona fide matrix norms, namely
$$\Vert\mathcal{C}+\mathcal{D}\Vert\leq\Vert\mathcal{C}\Vert+\Vert\mathcal{D}\Vert, \Vert\mathcal{C}\circ\mathcal{D}\Vert\leq \Vert\mathcal{C}\Vert\Vert\mathcal{D}\Vert$$
for arbitrary $\mathcal{C}$ and $\mathcal{D}$. Furthermore, for the SO norm and the induced Frobenius norm we have
\begin{align}
\label{eq:SO_2_submultiplicative}    
\Vert \mathcal{C}\circ\mathcal{D}\Vert_{SO}\leq \Vert \mathcal{C}\Vert_2\Vert \mathcal{D}\Vert_\mathrm{SO}, \Vert \mathcal C\Vert_\mathrm{SO} \Vert \mathcal D\Vert_2.
\end{align}

\paragraph{Norms on linear maps on real superoperators}
In this work, we investigate \emph{linear operators on} real superoperators, which we call \emph{twirling maps}. Again, we can treat them like linear operators. One can define the  \emph{induced diamond norm} of a twirling map $\Lambda$ as\footnote{Without specifying otherwise, all norms are induced from real superoperators. Note that restricting to subspaces $V(d)$ or $V_0(d)$ does not change the inequalities and does not increase the induced norms.}
$$||| \Lambda|||_{\diamond}:=\max_{\mathcal{C}\neq 0}\frac{\Vert \Lambda(\mathcal{C})\Vert_{\diamond}}{\Vert \mathcal{C}\Vert_{\diamond}}.$$
We can similarly define the induced SO norm $|||\cdot|||_2:=\Vert\cdot\Vert_{SO\rightarrow SO}$ and induced trace norm $|||\cdot|||_{\tr}:=\Vert\cdot\Vert_{\tr{}\rightarrow \tr{}}$, where to avoid awkwardness, we omitted the second ``induced''. Again, these are all bona fide matrix norms. Note that the $||| \cdot |||_2$ norm corresponds to the usual spectral norm under the superoperator inner product. Although we do not use them here, a detailed discussion involving norms and approximate twirls and unitary designs is given in~\cite{low2010pseudo}. 
We have the following relation similar to \Cref{eqn:channel_dim}:
\begin{align}
d^{-3/2}||| \Lambda|||_{2}\leq& ||| \Lambda|||_{\tr}\leq d^{3/2}||| \Lambda|||_{2}.\label{eqn:twirl_dim}
\end{align}
Moreover, twirling maps have the following property.
\begin{prp}[Data Processing Inequality]\label{prp:diamond_norm_bound}
If a twirling map $\Lambda$ can be decomposed as
$$\Lambda(\cdot)=\int_g dg \mathcal{C}(g)\circ\cdot\circ \mathcal{D}(g),$$
then 
$$||| \Lambda|||\leq \int_g dg\Vert \mathcal{C}(g)\Vert\Vert \mathcal{D}(g)\Vert,$$
where we have the appropriate correspondence between twirling map and real superoperator norms. Moreover, we have a tighter bound regarding the induced SO norm on twirling maps and induced Frobenius norm on real superoperators:
$$|||\Lambda|||_2\leq \int_{g}dg\Vert \mathcal{C}(g)\Vert_2\Vert \mathcal{D}(g)\Vert_2.$$
\end{prp}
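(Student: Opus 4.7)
The plan is to combine the integral triangle inequality with submultiplicativity of each norm under superoperator composition, which is exactly what makes every one of the relevant norms a bona fide matrix norm. Concretely, for any nonzero real superoperator $\mathcal{N}$, I would first bound
\[
\Vert \Lambda(\mathcal{N})\Vert \leq \int_g dg\, \Vert \mathcal{C}(g)\circ \mathcal{N} \circ \mathcal{D}(g)\Vert \leq \int_g dg\, \Vert \mathcal{C}(g)\Vert\, \Vert \mathcal{N}\Vert\, \Vert \mathcal{D}(g)\Vert,
\]
then divide through by $\Vert \mathcal{N}\Vert$ and take the supremum over $\mathcal{N}\neq 0$. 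This yields the first inequality uniformly for the pairs $(|||\cdot|||_\diamond, \Vert\cdot\Vert_\diamond)$, $(|||\cdot|||_{\tr}, \Vert\cdot\Vert_{\tr})$, and $(|||\cdot|||_2, \Vert\cdot\Vert_{SO})$, in each case with the correspondence noted in the preceding exposition. The restriction to $V(d)$ or $V_0(d)$ causes no trouble since the induced norm is unchanged on these subspaces.

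For the sharper inequality the asymmetry is essential: the right-hand side measures $\mathcal{C}(g)$ and $\mathcal{D}(g)$ in the induced Frobenius norm $\Vert\cdot\Vert_2$, while the induced SO norm $|||\cdot|||_2$ on the twirling map corresponds to the SO norm on $\mathcal{N}$. The key tool is the mixed submultiplicativity recorded in \Cref{eq:SO_2_submultiplicative}, namely $\Vert \mathcal{A}\circ\mathcal{B}\Vert_{SO}\leq \Vert\mathcal{A}\Vert_2\Vert\mathcal{B}\Vert_{SO}$ and $\Vert\mathcal{A}\circ\mathcal{B}\Vert_{SO}\leq\Vert\mathcal{A}\Vert_{SO}\Vert\mathcal{B}\Vert_2$. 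Applying the first form with $(\mathcal{A},\mathcal{B}) = (\mathcal{C}(g),\, \mathcal{N}\circ\mathcal{D}(g))$ peels off $\mathcal{C}(g)$ in $\Vert\cdot\Vert_2$, and applying the second form with $(\mathcal{A},\mathcal{B})=(\mathcal{N},\mathcal{D}(g))$ to the remaining factor isolates $\mathcal{N}$ in the SO norm and leaves $\mathcal{D}(g)$ in $\Vert\cdot\Vert_2$. Invoking the integral triangle inequality, dividing by $\Vert\mathcal{N}\Vert_{SO}$, and maximizing over $\mathcal{N}$ gives the sharper bound.

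The argument is essentially mechanical, and I do not expect a real obstacle; every ingredient has been collected in the preceding preliminaries. The only step that requires any thought is choosing the correct order of peeling so that the asymmetric inequalities in \Cref{eq:SO_2_submultiplicative} produce the desired $\Vert\mathcal{N}\Vert_{SO}$ on the right-hand side rather than an unwanted $\Vert\mathcal{N}\Vert_2$ (which would be strictly weaker and would fail to give an honest bound on the induced SO norm of $\Lambda$).
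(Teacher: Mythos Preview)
Your proposal is correct and matches the paper's own proof, which tersely states that the first inequality follows from the triangle inequality plus submultiplicativity and the second from the triangle inequality together with \Cref{eq:SO_2_submultiplicative}. You have simply spelled out in detail the peeling argument that the paper leaves implicit.
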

\begin{proof}
    The first inequality follows from the triangle inequality plus submultiplicativity, while the second inequality follows from the triangle inequality and~\Cref{eq:SO_2_submultiplicative}.
\end{proof}

\subsection{Matrix Perturbation Theory}
We first state a few results regarding the perturbation of matrices. Let $\mathcal{H}$ be a finite dimensional Hilbert space and $V_1, V_2$ be subspaces such that $V_1 \oplus V_2 = \mathcal H$. Let $A_1$ and $A_2$ be operators on $V_1$ and $V_2$ respectively. Let $\mathcal{M}(V_2,V_1)$ be the set of linear operators from $V_2$ to $V_1$, that is, linear operators $P=X_1PX_2$ where $X_1$ and $X_2$ are the projectors onto $V_1$ and $V_2$ respectively. For a norm $\Vert\cdot\Vert$ defined on $\mathcal{H}$, we define the corresponding seperation function $\mathrm{sep}(\cdot,\cdot)$ as
$$\mathrm{sep}(A_1,A_2):=\inf_{P\in \mathcal{M}(V_2,V_1)}\setminus\{0\}\frac{\Vert A_1P-PA_2\Vert}{\Vert P\Vert}.$$ From~\cite{stewart1990matrix} we have the following result.
\begin{thm}[Stewart and Sun~\cite{stewart1990matrix}]\label{thm:perturbation}
Let $\mathcal{H}$ be a finite-dimensional Hilbert space with norm $\Vert \cdot\Vert$. This norm naturally induces a norm on linear operators. Let $V_1$ be a linear subspace of $\mathcal{H}$ and $V_2$ be its orthogonal complement.  Let $X_1$ and $X_2$ be the projectors onto $V_1$ and $V_2$ respectively. Let $A$ be a linear operator on $\mathcal{H}$ such that 
$$X_i AX_j=A_i\delta_{ij}, i,j\in\{1,2\}.$$
Let $E$ be an arbitrary operator. If $E$ satisfies 
$$\mathrm{sep}(A_1,A_2)-\Vert X_1EX_1\Vert - \Vert X_2 EX_2\Vert>0,$$
$$\frac{\Vert X_1EX_2\Vert\Vert X_2EX_1\Vert}{(\mathrm{sep}(A_1,A_2)-\Vert X_1 EX_1\Vert-\Vert X_2 EX_2\Vert)^2}<\frac{1}{4},$$
then there exist operators $P_1, P_2$ such that 
$$X_2 P_1X_1=P_1, X_1P_2X_2=P_2,$$
$$\Vert P_1\Vert\leq \frac{2\Vert X_2EX_1\Vert}{\mathrm{sep}(A_1,A_2)-\Vert X_1EX_1\Vert-\Vert X_2EX_2\Vert},$$
$$\Vert P_2\Vert\leq \frac{\Vert X_2 EX_1\Vert}{\mathrm{sep}(A_1,A_2)-\Vert X_1EX_1\Vert-\Vert X_2EX_2\Vert-2\Vert P_1\Vert\Vert X_1EX_2\Vert},$$
such that $A+E$ can be diagonalized as
$$L_i^\dagger (A+E) R_j = A'_i \delta_{ij},$$
where $$L_i^\dagger = \begin{pmatrix}
I&-P_2\\0&I
\end{pmatrix}\begin{pmatrix}
I&0\\-P_1&I
\end{pmatrix}X_i^\dagger,$$
$$R_i = X_i\begin{pmatrix}
I&0\\P_1&I
\end{pmatrix}\begin{pmatrix}
I&P_2\\0&I
\end{pmatrix},i\in\{1,2\},$$
and
$$A'_1=A_1+X_1EX_1+X_1EP_1,$$
$$A'_2=A_2+X_2EX_2+P_1EX_2.$$
\end{thm}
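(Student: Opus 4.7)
The plan is to construct the block-diagonalizing similarity for $A+E$ as a composition of two elementary shears: first a lower shear parameterized by $P_1\in\mathcal{M}(V_1,V_2)$ that kills the $(2,1)$-block of $A+E$, then an upper shear parameterized by $P_2\in\mathcal{M}(V_2,V_1)$ that kills the resulting $(1,2)$-block. Each shear is determined by a Sylvester-type equation whose solvability and norm bound come directly from the separation function and the hypotheses.

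I would first write $A+E$ in block form as $\begin{pmatrix} A_1+E_{11} & E_{12} \\ E_{21} & A_2+E_{22} \end{pmatrix}$ with $E_{ij}:=X_iEX_j$. Conjugating by the lower shear with parameter $P_1$, the new $(2,1)$-block becomes a quadratic expression in $P_1$, and setting it to zero yields the Riccati equation
\begin{equation*}
(A_2+E_{22})P_1 - P_1(A_1+E_{11}) = -E_{21} + P_1 E_{12} P_1.
\end{equation*}
Writing $T(P):=(A_2+E_{22})P-P(A_1+E_{11})$ for the associated linear Sylvester operator on $\mathcal{M}(V_1,V_2)$, this rearranges to the fixed-point equation $P_1=\Phi(P_1)$ with $\Phi(P):=T^{-1}(-E_{21}+PE_{12}P)$.

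The central step is to solve this fixed point by Banach's contraction principle. The key norm estimate is $\Vert T^{-1}\Vert\leq 1/\delta$ with $\delta:=\mathrm{sep}(A_1,A_2)-\Vert X_1EX_1\Vert-\Vert X_2EX_2\Vert$, obtained by a one-line triangle inequality on the definition of $\mathrm{sep}$; this lower bound is positive by the first hypothesis. I would then verify that $\Phi$ maps the closed ball of radius $r:=2\Vert X_2EX_1\Vert/\delta$ in $\mathcal{M}(V_1,V_2)$ into itself and is a strict contraction on that ball; both conditions reduce to the quadratic smallness inequality $\Vert X_1EX_2\Vert\Vert X_2EX_1\Vert/\delta^2<1/4$ assumed in the theorem. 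Banach's theorem then produces a unique fixed point $P_1$ satisfying the claimed bound $\Vert P_1\Vert\leq r$, and by construction $P_1\in\mathcal{M}(V_1,V_2)$, i.e.\ $X_2 P_1 X_1 = P_1$. With $P_1$ in hand, the lower shear produces a block-triangular matrix whose diagonal blocks are $A'_1=A_1+X_1EX_1+X_1EP_1$ and an analogous expression for $A'_2$, matching the conclusion of the theorem up to sign convention.

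Conjugating further by the upper shear with parameter $P_2$ and requiring the $(1,2)$-block to vanish then yields the purely linear Sylvester equation $A'_1P_2-P_2A'_2=-E_{12}$. Since the diagonal perturbations from $A_i$ to $A'_i$ have total norm at most $\Vert X_1EX_1\Vert+\Vert X_2EX_2\Vert+2\Vert P_1\Vert\Vert X_1EX_2\Vert$, the same triangle inequality for $\mathrm{sep}$ gives $\mathrm{sep}(A'_1,A'_2)\geq\delta-2\Vert P_1\Vert\Vert X_1EX_2\Vert$, and inverting the linear Sylvester operator delivers the stated bound on $\Vert P_2\Vert$. Composing the two shears then recovers the claimed factorizations for $L$ and $R$ and produces the block-diagonal form $L_i^\dagger(A+E)R_j=A'_i\delta_{ij}$. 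I expect the main obstacle to be the Riccati fixed-point step: one must carefully balance the quadratic nonlinearity $P\mapsto PE_{12}P$ against the linear source $E_{21}$, and the factor of $1/4$ in the hypothesis is precisely what makes $\Phi$ simultaneously a self-map of the ball of radius $r$ and a strict contraction on it. Every remaining step is block-matrix bookkeeping together with the triangle inequality for $\mathrm{sep}$.
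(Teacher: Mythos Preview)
The paper does not prove this theorem; it is quoted verbatim from Stewart and Sun's monograph and used as a black box for \Cref{cor:simplified_perturbation}. Your proposal reproduces the standard Stewart--Sun argument: reduce block-diagonalization to a Riccati equation for the lower shear $P_1$, solve it by a Banach fixed-point argument on a ball whose radius is dictated by the $1/4$ hypothesis, and then solve a genuinely linear Sylvester equation for the upper shear $P_2$. This is correct, and your identification of the Riccati step as the only nontrivial one is accurate; everything else is block-matrix bookkeeping.

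One small point worth tightening: the separation functions governing $P_1$ and $P_2$ live on different spaces ($\mathcal{M}(V_1,V_2)$ versus $\mathcal{M}(V_2,V_1)$), and for a general induced norm $\mathrm{sep}(A_1,A_2)$ need not equal $\mathrm{sep}(A_2,A_1)$. Your triangle-inequality bound $\mathrm{sep}(A'_1,A'_2)\geq \delta - 2\Vert P_1\Vert\Vert X_1EX_2\Vert$ implicitly uses the same $\mathrm{sep}$ for both steps; in a fully rigorous write-up you should either note that the relevant separation for $P_2$ satisfies the analogous lower bound by the same one-line argument, or observe that the theorem as stated in the paper already bakes in a single $\mathrm{sep}(A_1,A_2)$, so you are matching the statement as given. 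You have also correctly flagged the sign/convention mismatch in the expression for $A'_2$; the paper's stated formula $A'_2 = A_2 + X_2EX_2 + P_1EX_2$ appears to carry a sign inconsistency relative to the direct block computation, but this does not affect any of the norm bounds used downstream.
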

\noindent We will actually make use of a corollary that has stronger assumptions:
\begin{cor}
\label{cor:simplified_perturbation}
Let $\mathcal{H}$ be a finite-dimensional Hilbert space with norm $\Vert \cdot\Vert$. This norm naturally induces a norm on linear operators.  Let $X_1, A, X_2 = I-X_1$ be linear operators on $\mathcal{H}$ such that\\
\noindent (1) $X_1$  is an orthogonal projector, (2) $X_i AX_j=A_i\delta_{ij}, i,j\in\{1,2\}$, (3) $\Vert A_2\Vert = \gamma<1$,\\ (4) $\Vert E\Vert\leq \delta$, (5) $A_1 = X_1$, (6) $\Vert X_1\Vert \leq 1$, (7) $\delta \leq  \frac{1-\gamma}{11}$. 

\noindent Then we obtain the conclusion of~\Cref{thm:perturbation} with $\Vert P_1\Vert \leq 1,\Vert P_2\Vert\leq 1$. Furthermore,
\begin{itemize}
    \item All eigenvalues of $A'_1$ is $2\delta$-close to $1$,
    \item $\Vert A'_2\Vert\leq  (\gamma+6\delta)$,
    \item 
 $\kappa :=\Vert L_2\Vert\Vert R_2\Vert\leq 16$.
\end{itemize}
\end{cor}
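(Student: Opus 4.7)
The plan is to verify the hypotheses of~\Cref{thm:perturbation} under the strengthened conditions (1)--(7), and then read off each claimed conclusion with the aid of the threshold $\delta \leq (1-\gamma)/11$.

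The key preliminary is a lower bound on $\mathrm{sep}(A_1, A_2)$. Since $A_1 = X_1$ acts as the identity on $V_1$, for any nonzero $P \in \mathcal{M}(V_1, V_2)$ one has $A_1 P = P$, and hence $\Vert A_1 P - PA_2\Vert \geq \Vert P\Vert - \Vert P\Vert\Vert A_2\Vert = (1-\gamma)\Vert P\Vert$, giving $\mathrm{sep}(A_1,A_2)\geq 1-\gamma$. Combined with $\Vert X_i\Vert \leq 1$ (which holds for $X_1$ by hypothesis, and for $X_2 = I - X_1$ because in our intended applications the norm is induced by an inner product, under which orthogonal projectors have unit norm), each of the four blocks $\Vert X_i E X_j\Vert$ is at most $\delta$. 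The two quantitative hypotheses of~\Cref{thm:perturbation} then reduce to $(1-\gamma) - 2\delta > 0$ and $\delta^2/((1-\gamma)-2\delta)^2 < 1/4$, both of which follow immediately from $\delta \leq (1-\gamma)/11$, since this gives $(1-\gamma) - 2\delta \geq 9(1-\gamma)/11$ and thus $\delta/((1-\gamma)-2\delta) \leq 1/9$.

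With~\Cref{thm:perturbation} applicable, I would next substitute these numerical bounds into the formulas for $P_1$ and $P_2$. One obtains $\Vert P_1\Vert \leq 2\delta/((1-\gamma)-2\delta) \leq 2/9$, and plugging this into the denominator of the $P_2$ bound gives $\Vert P_2\Vert \leq 9/77$, both strictly below $1$. For the eigenvalues of $A_1'$, since $A_1$ is the identity on $V_1$, the shift $A_1' - A_1 = X_1 E X_1 + X_1 E P_1$ has norm at most $\delta(1 + \Vert P_1\Vert) \leq 11\delta/9 \leq 2\delta$; the standard spectral radius bound for submultiplicative matrix norms (applied to $A_1' - I$, whose eigenvalues are $\lambda - 1$) then places every eigenvalue of $A_1'$ within $2\delta$ of $1$. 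The corresponding triangle inequality on $A_2' = A_2 + X_2 E X_2 + P_1 E X_2$ yields $\Vert A_2'\Vert \leq \gamma + \delta + \Vert P_1\Vert \delta \leq \gamma + 11\delta/9 \leq \gamma + 6\delta$, with plenty of slack.

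Finally, I would bound $\kappa$ by a direct computation from the block formulas in~\Cref{thm:perturbation}. Expanding the products and simplifying with the identities $P_1 = X_2 P_1 X_1$ and $P_2 = X_1 P_2 X_2$ yields $L_2^\dagger = X_2^\dagger - P_1 X_1^\dagger$ and $R_2 = X_2 + P_2 + P_1 P_2$. The triangle inequality together with $\Vert X_i\Vert, \Vert P_i\Vert \leq 1$ gives $\Vert L_2\Vert \leq 2$ and $\Vert R_2\Vert \leq 3$, so $\kappa \leq 6 \leq 16$. No step is conceptually deep; the entire argument is careful constant-chasing, and the specific threshold $(1-\gamma)/11$ has been engineered precisely so that every denominator remains safely bounded away from zero and the $O(\delta)$ slack absorbs cleanly into the stated constants $2\delta$, $6\delta$, and $16$.
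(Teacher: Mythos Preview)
Your argument has a genuine gap in the bound on $\Vert X_2\Vert$. You assert $\Vert X_2\Vert \leq 1$ ``because in our intended applications the norm is induced by an inner product,'' but this is precisely what \emph{fails} in the paper's main application: there $\mathcal{H}=V(d)$ with the SO inner product, yet the norm used is the induced diamond norm $|||\cdot|||_\diamond$, which is \emph{not} the inner-product norm. Indeed, the very presence of hypothesis~(6), $\Vert X_1\Vert\leq 1$, signals that the norm is not assumed to be the Hilbert-space norm (for which this would be automatic). In that setting $X_2=I-\Lambda^*$ applied to a unit-diamond-norm channel yields a difference of channels, whose diamond norm can approach $2$; the only general bound is the triangle inequality $\Vert X_2\Vert\leq\Vert I\Vert+\Vert X_1\Vert\leq 2$, which is exactly what the paper uses.

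Once you replace $\Vert X_2\Vert\leq 1$ by $\Vert X_2\Vert\leq 2$, all of your block estimates change: $\Vert X_1EX_2\Vert,\Vert X_2EX_1\Vert\leq 2\delta$ and $\Vert X_2EX_2\Vert\leq 4\delta$. The denominator in the $P_i$ bounds becomes $(1-\gamma)-5\delta\geq 6\delta$ (this is where the constant $11$ is actually needed), yielding $\Vert P_1\Vert\leq 1$ and $\Vert P_2\Vert\leq 1$ with essentially no slack. Likewise $\Vert A_2'-A_2\Vert\leq\Vert X_2\Vert\,\Vert E\Vert\,(\Vert X_2\Vert+\Vert P_1\Vert)\leq 6\delta$ is tight, and $\Vert L_2\Vert\leq 3$, $\Vert R_2\Vert\leq 5$ give $\kappa\leq 15\leq 16$. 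Your overall strategy is the same as the paper's; the only error is this single unjustified norm bound, but it propagates through every constant and makes your stated bounds ($\kappa\leq 6$, etc.) invalid under the corollary's actual hypotheses.
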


\begin{proof}
    We first prove $\mathrm{sep}(A_1,A_2)\geq1-\gamma$. This is because $A_1P=X_1P=P$ and thus
    $$\mathrm{sep}(A_1,A_2):=\inf_{P\in \mathcal{M}(V_2,V_1),\Vert P\Vert = 1}1-\Vert PA_2\Vert\geq 1-\gamma.$$
    Furthermore, $\Vert X_2\Vert = \Vert I-X_1\Vert \leq 2$. Therefore $\Vert X_1EX_1\Vert\leq \delta, \Vert X_2 EX_1\Vert \leq 2\delta, \Vert X_1EX_2\Vert\leq 2\delta$ and $\Vert X_2 EX_2\Vert\leq 4\delta$. It is easy to verify that both assumptions in \Cref{thm:perturbation} hold, and 
    $$\Vert P_1\Vert\leq \frac{4\delta}{11\delta-5\delta}\leq 1,$$
    $$\Vert P_2\Vert\leq \frac{2\delta}{11\delta-5\delta-4\delta}\leq 1.$$
    We have
    $$\Vert A'_1-A_1\Vert\leq \Vert X_1\Vert \Vert E\Vert (\Vert X_1\Vert + \Vert P_1\Vert)\leq 2\delta,$$
    $$\Vert A'_2-A_2\Vert\leq \Vert X_2\Vert \Vert E\Vert (\Vert X_2\Vert + \Vert P_1\Vert)\leq 6\delta\Rightarrow \Vert A'_2\Vert\leq \Vert A_2\Vert + 6\delta\leq \gamma+6\delta.$$
    From \Cref{thm:perturbation} we have $A'_1=X_1A'_1X_1$. Therefore for any eigenvector $v$ of $A'_1$ with eigenvalue $\Lambda$, we have $A_1v = v$. Therefore
    $$\Vert (A_1-A'_1)v\Vert=|1-\lambda|\Vert v\Vert \leq \Vert A_1-A'_1\Vert\Vert v\Vert\Rightarrow |1-\lambda|\leq \Vert A_1-A'_1\Vert\leq 2\delta.$$
    Finally,
    \begin{align*}
        \Vert L_2\Vert &=\Vert -P_1X_1+X_2\Vert\leq 3,\\
        \Vert R_2\Vert &=\Vert X_1P_2+X_2P_1P_2+X_2\Vert\leq 5,\\
    \end{align*}
    proving that $\kappa\leq 16$.
\end{proof}
\noindent Note that assumption (1) of~\Cref{cor:simplified_perturbation} implies the setting of~\Cref{thm:perturbation}.

\section{The URB Framework}
\label{sec:framework}
\subsection{URB Scheme and Experiment}
Let $\mathcal C(d)$ be the set of qudit channels and $H(d)$ be the set of $d \times d$ Hermitian matrices.
\begin{dfn}[URB scheme]
A \emph{univeresal randomized benchmarking (URB) scheme} $R$ on a $d$-dimensional quantum system can be expressed as a tuple $(S, \mu, \phi, M, \rho_0)$ consisting of:
\begin{itemize}
    \item A gate set $S$, encoding the gates to be applied in the URB scheme. For sake of generality, we do not restrict $S$ to be a subset of the unitary group $SU(d)$; instead $S$ can encode any description that leads to an implementation of the gate independent of other gates in a circuit.
    \item A probability distribution $\mu$ over the gate set $S$.
    \item An implementation map $\phi:S\rightarrow \mathcal{C}(d)$, assuming a gate-dependent yet Markovian noise model. 
    \item A post-processing POVM $M:S^*\rightarrow H(d)$ taking a finite string of elements from the set to a Hermitian operator on $\mathbb{C}^d$.
    \item An initial state $\rho_0\in H(d)$.
\end{itemize}
\end{dfn}
\begin{dfn}[URB experiment]
A URB scheme $R=(S,\mu,\phi,M,\rho_0)$ gives rise to the following experiment protocol:
\begin{enumerate}
    \item Given sequence length $m$, choose $m$ random elements $g_1,\cdots, g_m\in S$ i.i.d.\ according to the probability distribution $\mu$.
    \item Apply $\phi(g_1),\cdots, \phi(g_m)$ sequentially on a prepared initial state $\rho_0$.
    \item Perform a measurement $M(g_1,\cdots, g_m)$ on the final state and get a binary result.
    \item Repeat steps 1 to 3 to get an estimation $\hat{p}(m)$ of the success probability.
    \item Repeat steps 1 to 4 over appropriately chosen length parameters $m_1,\cdots, m_k$ to get estimations $\hat{p}(m_1),\cdots,\hat{p}(m_k)$. Return the estimations.
\end{enumerate}
\end{dfn}

It is easy to see that each $\hat{p}(m)$ is an unbiased estimator of the quantity
$$p_R(m):=\mathbb{E}_{g_1,\cdots, g_m
\sim\mu}\tr[M(g_1,\cdots, g_m)\cdot (\phi(g_m)\circ\cdots\circ\phi(g_1))(\rho_0)].$$
The output estimations $\hat{p}(m_1),\cdots, \hat{p}(m_k)$ are in practice assumed to be sufficiently close to $p_R(m_1),\cdots, p_R(m_k)$ and are fit to a single exponential decay curve $f(m) = A+B\cdot u^m$ to extract the decay rate $u$. The decay rate $u$ is commonly believed to indicate an ``average fidelity'' of the gate ensemble (see~\Cref{subsec:decay_exp} for the interpretation of the RB value). However, the above URB framework itself does not guarantee a single exponential decay. To guarantee such a decay, we need the following.

\begin{dfn}
A URB scheme is called an \emph{$(\epsilon, \delta, \gamma)$ twirling scheme} if the following hold:
\begin{description}
\item[Approximate factoring of the post-processing POVM into a triple $(M_0,\phi^*,\mathcal{I})$]: a \emph{final measurement} $M_0\in H(d)$, an \emph{inverting map} $\phi^*:S\rightarrow \mathcal{C}(d)$ and an intermediate channel $\mathcal{I}\in\mathcal{C}(d)$ such that the post-processing POVM can be approximately factored into three parts:
$$\sup_{m\in\mathbb{N}}\mathbb{E}_{g_1,\cdots, g_m\sim \mu}\Vert M(g_1,\cdots, g_m) - M_0\cdot \phi^*(g_1)\circ\cdots\circ \phi^*(g_m)\circ \mathcal{I}\Vert_{\infty}\leq \epsilon,$$
where we use the notation for a Hermitian matrix $\rho$ and real superoperator $\mathcal C$,
\begin{align*}
    \rho \cdot \mathcal C := \mathcal C^\dagger(\rho),
\end{align*}
$\mathcal C^\dagger$ being the adjoint superoperator with respect to the Hilbert-Schmidt norm.\footnote{We will use $\cdot$ to denote scalar multiplication, matrix multiplication, or the above shorthand. These can be differentiated by context. }
\item[Near-ideal implementation] There exists an \emph{ideal map} $\omega: S\rightarrow \mathcal{C}(d)$ into unitary channels, such that the implementation map is close to the ideal map and the inverting map is close to its adjoint:
$$\mathbb{E}_{g\sim \mu}\left[\Vert\phi(g) - \omega(g)\Vert_\diamond+\Vert\phi^*(g) - \omega(g)^\dagger\Vert_\diamond\right]\leq \delta.$$
\item[Approximate twirling] The \emph{twirling map} 
$$\Lambda^*_R:\mathcal{N}\mapsto \int_{g\in \mu}dg \omega(g)^\dagger\circ \mathcal{N}\circ \omega(g)$$
is a \emph{$\gamma$-approximate twirl}: that
$$|||\Lambda^*_R - \Lambda^*|||\leq \gamma$$
under a certain norm $|||\cdot|||$ (usually taken as the diamond norm)
where $\Lambda^* :\mathcal{N}\mapsto \int_{g\sim\eta}dg \tilde{g}^\dagger\circ \mathcal{N}\circ\tilde{g}$ is the twirling map where $\eta$ is the Haar random distribution on $SU(d)$, and $\tilde{g}:\rho\mapsto g\rho g^{-1}$ is the unitary channel corresponding to the fundamental representation of $SU(d)$ for $g$.
\end{description}
\end{dfn}
We will also say that a URB scheme $R$ is an $(\epsilon, \delta, \gamma)$ twirling scheme with respect to the tuple $(M_0,\phi^*,\mathcal{I},\omega)$ in cases where the components are to be specified. The approximately factorizable post-processing POVM condition expresses that a twirling scheme is an RB-like scheme in that after the random gates, a series of inverting gates are applied followed by a measurement. However, this mathematical definition can be more inclusive than it may seem, as we will see that linear XEB also effectively uses a factorizable post-processing POVM. The near-ideal implementation condition simply expresses that the noise levels are bounded. The most interesting condition is the $\gamma$-approximate twirl, which expresses that our random gate distribution is constant distance from a unitary 2-design.~\Cref{fig:urb} visually illustrates the URB framework and twirling schemes. 

\begin{figure}
\centering
    \begin{subfigure}{0.45\textwidth}
    \centering
    \includegraphics[width=\textwidth,trim={0 1cm 0 0}]{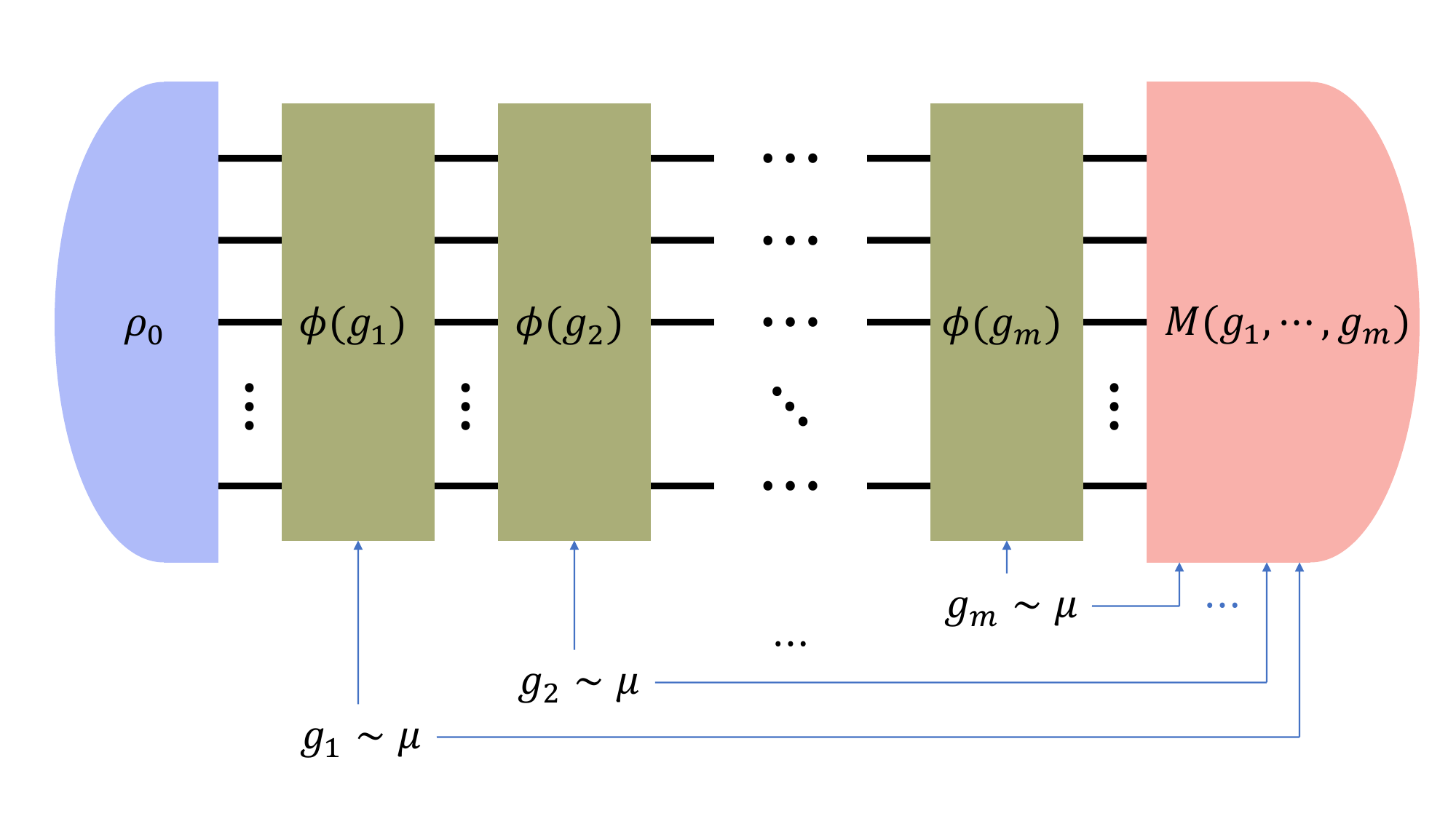}
    \caption{}
    \end{subfigure}
    \begin{subfigure}{0.45\textwidth}
    \centering
    \includegraphics[width=\textwidth,trim={0 1cm 0 0}]{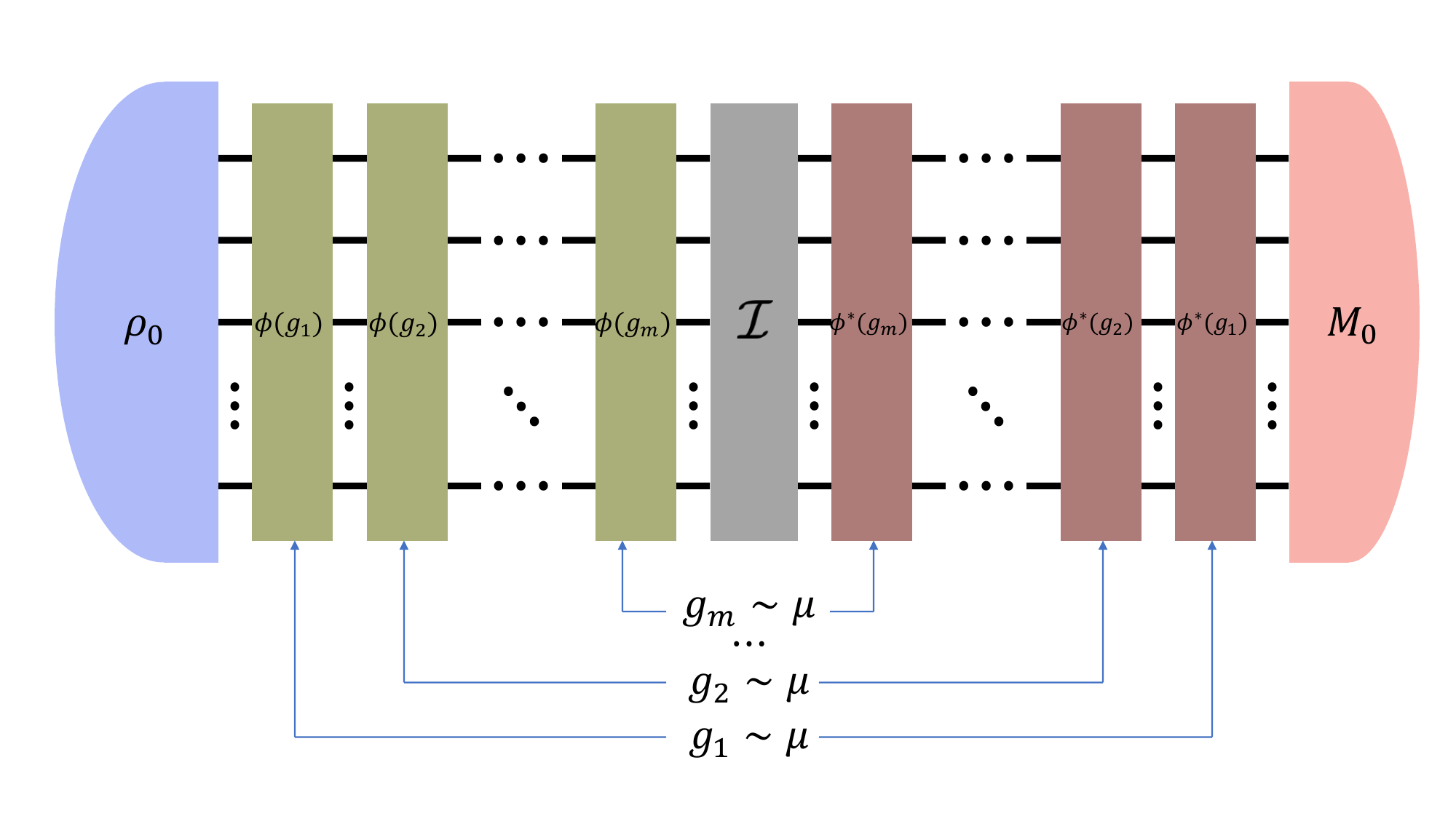}
    \caption{}
    \end{subfigure}
    
      \begin{subfigure}{0.45\textwidth}
    \centering
    \includegraphics[width=\textwidth,trim={0 5cm 0 0}]{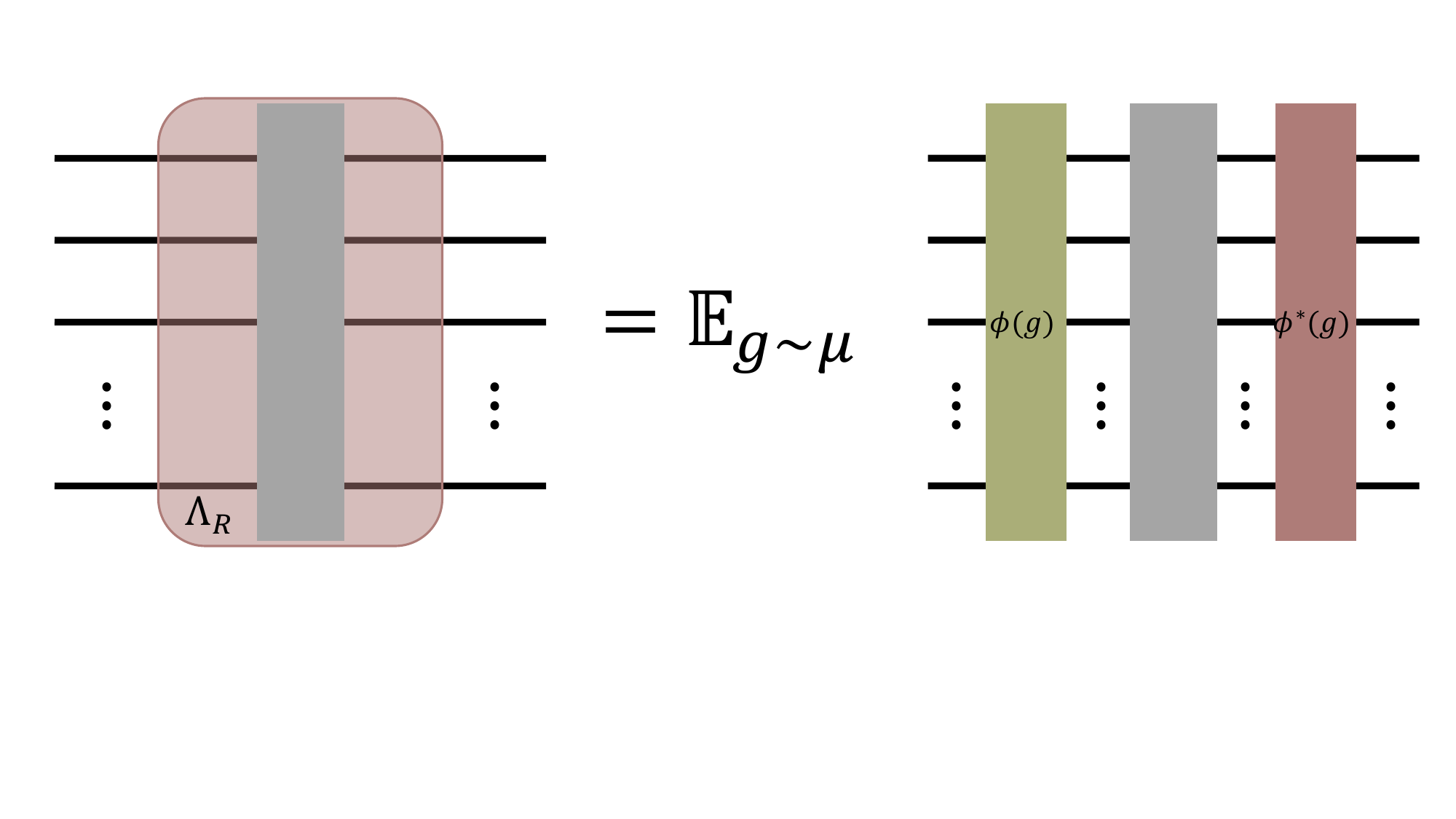}
    \caption{}
    \end{subfigure}  
    \begin{subfigure}{0.45\textwidth}
    \centering
    \includegraphics[width=\textwidth,trim={0 5cm 0 0}]{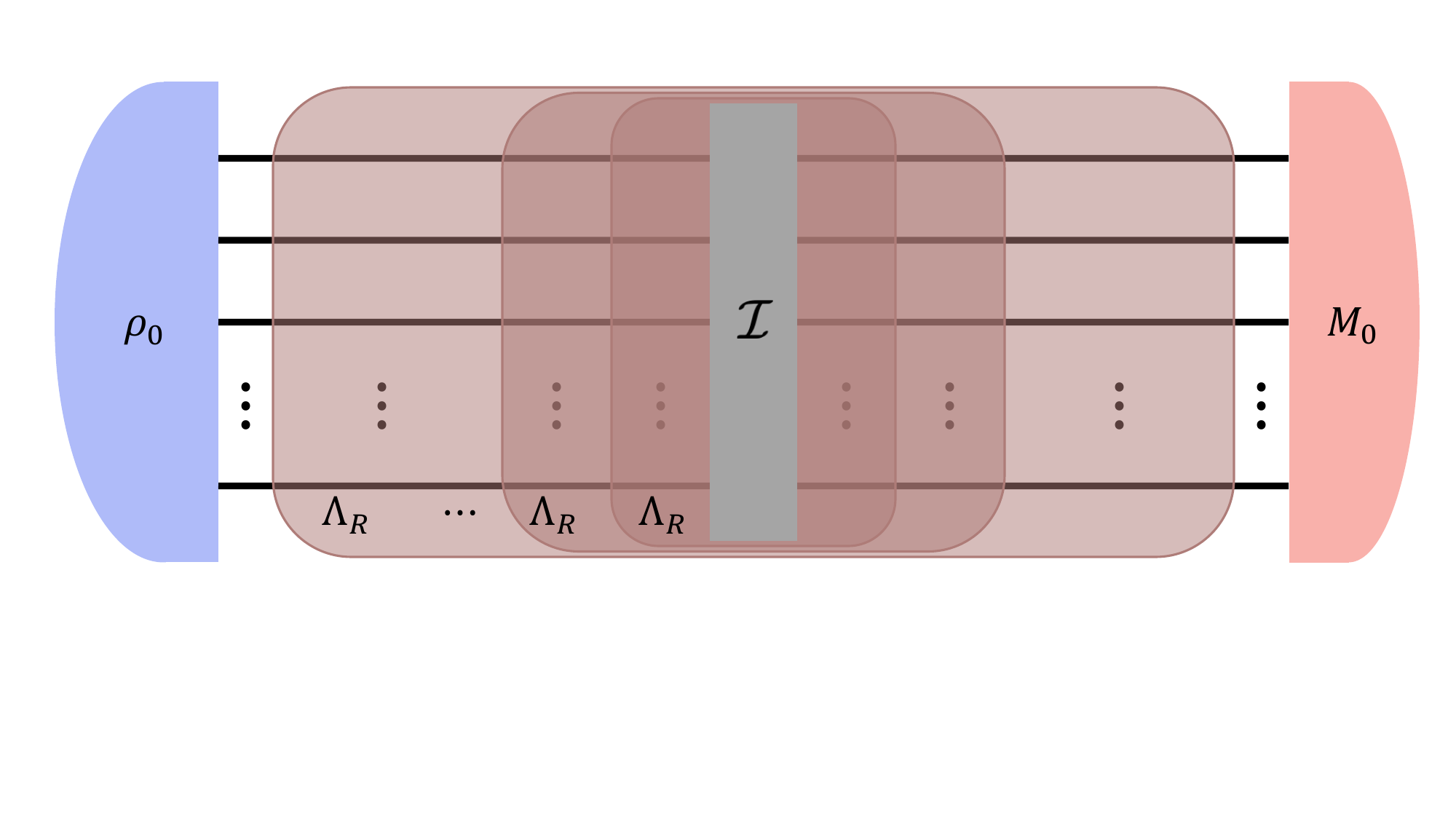}
    \caption{}
    \end{subfigure}
    \caption{A visual illustration of the URB framework and twirling schemes. (a) A generic URB scheme. Given a sequence length $m$, we randomly select $m$ i.i.d.\ samples from a distribution $\mu$. The experiment then applies $g_1,\cdots, g_m$ sequentially via an implementation map $\phi$ on a fixed initial state $\rho_0$, followed by a post-processing POVM $M$ that depends on the random elements $g_1,\cdots, g_m$. (b) Approximate factoring. When the post-processing POVM admits an approximate factoring, it can be approximately decomposed into an intermediate channel $\mathcal{I}$, a sequence of inverting operations $\phi^*(g_m),\cdots, \phi^*(g_1)$, and a fixed measurement operator $M_0$. (c) A twirling map. The average joint operation of $\phi(g)$ and $\phi^*(g)$ sandwiching a channel can be formulated as a linear operator $\Lambda_R$ on quantum channels. (d) A 
   URB scheme in terms of twirling maps. Expressing the probability of measurement in terms of the twirling map $\Lambda_R$ reduces the proof of the exponential decay to the study of its spectral properties.}
    \label{fig:urb}
\end{figure}

The main technical contribution of our work is a characterization of the exponential decay behavior provided that the URB scheme parameters satisfy certain constraints. More specifically we have the following main result.
\begin{thm}[\Cref{thm:main}, informal]
\label{thm:main_informal}
Let $R$ be an $(\epsilon, \delta, \gamma)$ twirling scheme with respect to the diamond norm, and assume that $\gamma \leq 1-11\delta$. Then there exists $A,B,p\in\mathbb{R}, p\in[1-2\delta, 1]$ such that
$$|p_R(m) - (A+B\cdot p^m)|\leq \epsilon + 16(\gamma+6\delta)^m.$$
\end{thm}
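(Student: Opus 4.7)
The plan is to combine the approximate factoring condition with matrix perturbation theory (\Cref{cor:simplified_perturbation}) applied to the twirling map of the noisy implementation.

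\textbf{Reduction to a twirled quantity.}
First I would invoke the approximate factoring condition together with Hölder's inequality and the fact that $\|\phi(g_m)\circ\cdots\circ\phi(g_1)(\rho_0)\|_1 \leq 1$ to replace $M(g_1,\ldots,g_m)$ inside the trace by the factored form $M_0\cdot \phi^*(g_1)\circ\cdots\circ\phi^*(g_m)\circ\mathcal{I}$, incurring total error at most $\epsilon$. Moving the adjoint superoperators through the trace and exploiting the i.i.d.\ structure of the $g_i$'s, the nested expectations collapse into iterated applications of the twirling map
$$\Lambda^*_\phi:\mathcal{N}\mapsto \mathbb{E}_{g\sim\mu}\,\phi^*(g)\circ\mathcal{N}\circ\phi(g),$$
leaving the cleaner target $\tilde p_R(m) := \tr[M_0\cdot (\Lambda^*_\phi)^m(\mathcal{I})(\rho_0)]$.

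\textbf{Block structure of $\Lambda^*_R$.}
Let $V_1 := \mathrm{image}(\Lambda^*)$, the two-dimensional fixed subspace of the Haar twirl (spanned by the identity channel and the completely depolarizing channel), and $V_2 := V_1^\perp$ under the SO inner product. I would show $\Lambda^*_R$ is block diagonal with respect to $V_1\oplus V_2$. Since every conjugation $T_g$ fixes $V_1$ pointwise, $\Lambda^*_R X_1 = X_1$, giving $A_1 = X_1$ and $X_2\Lambda^*_R X_1 = 0$. For the other off-diagonal block, a short calculation using the right-invariance of Haar measure (together with $\mu$ being a probability measure) yields $\Lambda^*\Lambda^*_R = \Lambda^*$, whence $X_1\Lambda^*_R X_2 = \Lambda^*\Lambda^*_R X_2 = \Lambda^* X_2 = 0$. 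Since $\Lambda^*$ vanishes on $V_2$, we have $\Lambda^*_R|_{V_2} = (\Lambda^*_R - \Lambda^*)|_{V_2}$, so $\|A_2\|_\diamond \leq \gamma$.

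\textbf{Applying the perturbation corollary.}
Setting $A := \Lambda^*_R$ and $E := \Lambda^*_\phi - \Lambda^*_R$, the telescoping identity
$$\phi^*(g)\circ\mathcal{N}\circ\phi(g) - \omega(g)^\dagger\circ\mathcal{N}\circ\omega(g) = (\phi^*(g)-\omega(g)^\dagger)\circ\mathcal{N}\circ\phi(g) + \omega(g)^\dagger\circ\mathcal{N}\circ(\phi(g)-\omega(g))$$
paired with \Cref{prp:diamond_norm_bound} and the near-ideal implementation assumption gives $|||E|||_\diamond \leq \delta$. The hypothesis $\gamma \leq 1 - 11\delta$ is exactly $\delta \leq (1-\gamma)/11$, so invoking \Cref{cor:simplified_perturbation} yields a decomposition $\Lambda^*_\phi = R_1 A_1' L_1^\dagger + R_2 A_2' L_2^\dagger$ with $\|A_2'\|_\diamond \leq \gamma+6\delta$, $\kappa := \|R_2\|\,\|L_2\|\leq 16$, and the two eigenvalues of $A_1'$ within $2\delta$ of $1$; the contraction bound $|||\Lambda^*_\phi|||_\diamond \leq 1$ (since $\Lambda^*_\phi$ is an average of compositions of channels) then confines them to $[1-2\delta,1]$.

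\textbf{Extracting the decay.}
Iterating gives $(\Lambda^*_\phi)^m = R_1(A_1')^m L_1^\dagger + R_2(A_2')^m L_2^\dagger$. The $V_2'$-block contributes at most $\kappa(\gamma+6\delta)^m \leq 16(\gamma+6\delta)^m$ to $\tilde p_R(m)$, via submultiplicativity combined with $\|M_0\|_\infty, \|\rho_0\|_1, \|\mathcal{I}\|_\diamond \leq 1$. For the two-dimensional $A_1'$-block I would argue that one of its eigenvalues equals $1$ exactly: any fixed state $X_*$ of the averaged channel $\bar\phi := \mathbb{E}_g\phi(g)$ (existing by Brouwer applied to the CPTP map $\bar\phi$ acting on the compact convex set of density matrices) produces, via trace preservation of $\phi^*(g)$, a left eigenvector $\mathcal{N}\mapsto \tr[\mathcal{N}(X_*)]$ of $\Lambda^*_\phi$ at eigenvalue $1$; this eigenvalue cannot lie in the spectrum of $A_2'$ since $\|A_2'\|_\diamond<1$, so it must belong to $A_1'$. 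Hence $(A_1')^m$ decomposes (in the generic diagonalizable case) as $\Pi_1 + p^m\Pi_p$ with $p\in[1-2\delta,1]$, yielding the form $A + Bp^m$ and combining with the $V_2'$ error to produce the claimed bound. The most delicate step is this last one, as it requires ruling out a nontrivial Jordan block at eigenvalue $1$ (a defective $A_1'$ with both eigenvalues $1$), which would contribute a linearly growing term incompatible with the single-exponential ansatz.
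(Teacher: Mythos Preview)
Your approach mirrors the paper's almost exactly: reduce $p_R(m)$ to $\tr[M_0\cdot\Lambda_R^m(\mathcal I)(\rho_0)]$ via the factoring assumption, view $\Lambda_R$ as a perturbation of the block-diagonal $\Lambda^*_R$ (with $X_1=\Lambda^*$ and $A_2=\Lambda^*_R-\Lambda^*$), invoke \Cref{cor:simplified_perturbation}, and then analyze the two-dimensional block $A_1'$. Your eigenvalue-$1$ argument is a pleasant dual variant of the paper's: the paper applies Brouwer to $\mathbb E_g\phi^*(g)$ and lifts the fixed state to a \emph{right} eigen-channel $\mathcal E_{\rho^*}$, whereas you apply Brouwer to $\mathbb E_g\phi(g)$ and build a \emph{left} eigenfunctional; both are valid.

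There are two points where your sketch stops short of what is actually needed. First, your claim that the contraction bound ``confines [the eigenvalues] to $[1-2\delta,1]$'' presupposes that the second eigenvalue of $A_1'$ is real; the constraints $|\lambda-1|\le 2\delta$ and $|\lambda|\le 1$ alone cut out a lens in $\mathbb C$, not an interval. The paper closes this by observing that $\Lambda_R,\Lambda^*_R,\Lambda^*$ (and hence $A_1'$) are real matrices in a suitable basis, so the $2\times 2$ block with one eigenvalue equal to $1$ must have its other eigenvalue real as well. Second, you correctly flag the defective case (a Jordan block at $1$) as the delicate step but do not dispatch it. The paper's resolution is exactly the growth argument you hint at: if $A_1'=\begin{pmatrix}1&a\\0&1\end{pmatrix}$ with $a\neq 0$, then $\tr[M_0\cdot L_1(A_1')^mR_1(\mathcal I)(\rho_0)]$ is an affine function of $m$ that is unbounded for some choice of $M_0,\mathcal I,\rho_0$, contradicting the fact that it equals the bounded quantity $\tr[M_0\cdot\Lambda_R^m(\mathcal I)(\rho_0)]-\tr[M_0\cdot L_2(A_2')^mR_2(\mathcal I)(\rho_0)]$.
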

\noindent We see that the crucial property we need to establish to apply~\Cref{thm:main_informal} is $\gamma<1$, so that for sufficiently low experimental error $\delta$, $\gamma\leq 1-11\delta$. One may mistakenly think that our result is essentially saying there is a single exponential decay if our twirling map is close to the Haar twirl $\Lambda^*$, which intuitively means the unitary ensemble defined by $\mu, \omega$ is close to a unitary 2-design. This is of course unsurprising. However, we stress that we do not require $\gamma$ to be small, \emph{but just less than 1}. This is not as strong of a requirement as being close to a unitary 2-design. 

Note also that~\Cref{thm:main_informal} by itself is not sufficient to imply we can extract a single exponential decay from measured data. In general, we require the magnitude of $B$ to be significantly larger than $0$, $p$ to be significantly larger than $\gamma+6\delta$, and $\epsilon$ is sufficiently small for the URB experiment to be able to extract a single exponential decay with decay rate close to $p$ given sufficiently many repeated experiments. For the effect of $\epsilon$, see~\Cref{subsec:robustness} for an analysis of the robustness of fitting to a perturbed exponential decay.

\subsection{Examples of URB Schemes}
In the URB framework, the post-processing POVM $M$ is defined abstractly for generality. To our knowledge, all known URB schemes can be approximately factorized, but we leave open other possibilities. We here give examples of schemes that fall into our framework. Note that we make the distinction between a scheme falling into our framework and it being guaranteed by our theorem to have a single exponential decay, which requires additional assumptions. Further assumptions are required for this exponential decay to be extractable. We summarize the relationship between the URB framework, our class of twirling schemes, and the existing group-based framework as well as other classes of RB schemes in~\Cref{fig:classification}.

\begin{figure}
    \centering
    \includegraphics[width=0.55\textwidth]{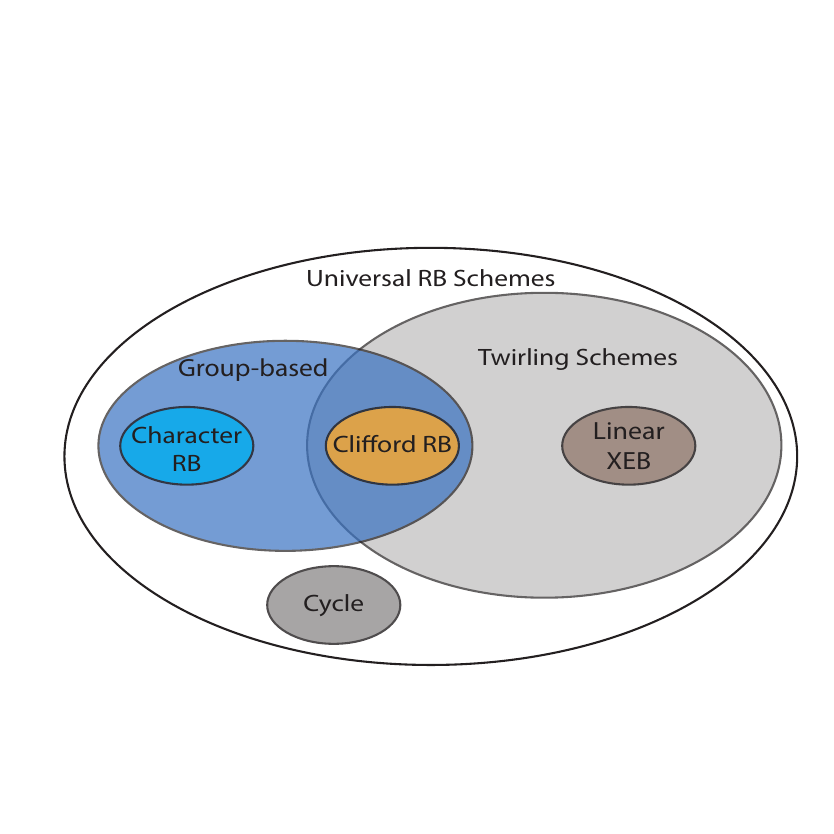}
    \caption{A diagram showing the relationships between different classes of RB schemes, including our URB framework and class of twirling schemes. Character RB refers to the class of schemes in~\cite{helsen2019new}. The space made by the URB and twirling classes leaves room for potentially new schemes. }
    \label{fig:classification}
\end{figure}

    \paragraph{Group-based Randomized Benchmarking} Standard group-based RB can be readily formulated as a URB scheme, where the gate set is taken as a group $\mathbb{G}$, with the distribution being  the uniform distribution over the group. The post-processing POVM is then $M(g_1,\cdots,g_m)=M_0\cdot \phi(g_1^{-1}\cdot\cdots\cdot g_m^{-1})$, i.e.\ applying a fixed measurement after physically applying the gate corresponding to the inverse of the product of the previous elements. This post-processing POVM admits a $\delta$-approximate factoring into the triple $(M_0,\tilde{\omega}^\dagger,\mathrm{id})$ the  when the implementation map $\phi$ is $\delta$-close to a representation $\tilde \omega$, that is, $$\mathbb{E}_{g\sim\mu}\Vert \phi(g)-\tilde{\omega}(g)\Vert_\diamond\leq \delta.$$
    Moreover, in the case that the uniform distribution over the image of the ideal map forms a unitary 2-design, the twirling map is an exact twirl, and a single exponential decay occurs whenever $\delta <\frac1{11}$. More general cases where there are different representations or multiplicities of representations still lie in the URB framework, but we do not consider them below.
    
    Note that unlike the results for example in~\cite{helsen2020general}, we have an additional constant error term $\epsilon$. This is because we want to reduce our analysis to twirling maps, and to do this we need to consider $\phi^*(g_1) \circ \cdots \circ \phi^*(g_m)$ instead of $\phi((g_m\cdots g_1)^{-1})$. We also cannot consider the latter option because we lack group structure.
     
    \paragraph{Non-uniform RB} There have been several extensions to standard RB~\cite{knill2008randomized,francca2018approximate, proctor2019direct} where the random gates are drawn from non-uniform distributions over a group, since a typical Haar random element is too costly. Such RB variants readily fit into our URB framework as it makes no assumption on the distribution. In fact, URB schemes with an ideal reference map $\omega$ naturally gives rise to a (not necessarily uniform) distribution on the special unitary group of corresponding dimension. 
    
  Our work extends~\cite{francca2018approximate} in several ways, but most importantly we do not require the distribution to be approximately uniform, nor inverse-symmetric, nor that its support contains the generators of a group. To be a twirling scheme, we only require that the twirling map is close to Haar under a certain norm. As a result, our result provides tighter bounds with less assumptions on the distribution, and also applies to infinite groups. We leave it to future work to study a twirling scheme where there are multiple irreducible representations or irreducible representations with multiplicities, which would result in a matrix exponential decay.
    
    \paragraph{Linear Cross-entropy Benchmarking (Linear XEB)} Linear cross-entropy benchmarking (Linear XEB) is a benchmarking scheme first introduced in~\cite{arute2019quantum}. In this framework, a certain number of layers, typically shallow, of random circuits $C_1,\cdots,C_m$  are applied to an initial state. A subsequent measurement then returns a bitstring $x$, whose ideal probability $q(x)$ is numerically simulated on a classical computer. The protocol returns the value $2^n q(x)-1$, where $n$ is the number of qubits in the system.
    
    We claim that linear XEB falls into the URB framework, provided that the random shallow circuits are chosen i.i.d.\ from a distribution $\mu$ over a set $S$. To see this, we express the expected outcome 
    \begin{align}
    \label{eq:linear_xeb}
    F(m)=\mathbb{E}_{C_1,\cdots, C_m\sim \mu}\mathbb{E}_{x\sim\tilde{q}}\left[2^nq(x)-1\right],
    \end{align}
    where the distributions $q,\tilde{q}$ are defined as
    $$q(x)=|\langle x| C_m\cdot\cdots\cdot C_1|0\rangle|^2,$$
    $$\tilde{q}(x)=\tr\left[M_x\cdot \phi(C_m)\circ\cdots\circ \phi(C_1)(\rho_0) \right]$$
    for some POVM $\{M_x\}_{x\in\{0,1\}^n}$ and initial state $\rho_0$.
    By lifting the distributions as diagonal operators, we have
    $$F(m)=2^n\cdot\mathbb{E}_{C_1,\cdots, C_m\sim \mu}\tr[Q\tilde{Q}]-1,$$
    where
    $$Q=\sum_x q(x)|x\rangle\langle x| =\mathcal{D}\circ\mathcal{C}_m\circ\cdots \circ\mathcal{C}_1(|0\rangle\langle 0|),$$
    $$\tilde{Q}=\sum_x \tilde{q}(x)|x\rangle\langle x| =\tilde{\mathcal{D}}\circ\phi(C_m)\circ\cdots \circ\phi(C_1)(\rho_0),$$
    with $\mathcal{D}:\rho\mapsto \sum_x\langle x|\rho|x\rangle\cdot|x\rangle\langle x|$ and $\tilde{\mathcal{D}}:\rho\mapsto \sum_x \tr[M_x\rho]\cdot|x\rangle\langle x|$ representing the ideal and physical measurement under the computational basis, and $\mathcal{C}_i$ the ideal implementation of $C_i$ as a unitary channel. Using the fact that $Q=Q^\dagger$, we have
    $$F(m)=2^n \cdot \mathbb{E}_{C_1,\cdots, C_m\sim \mu}\tr[|0\rangle\langle 0|\cdot \mathcal{C}_1^\dagger\circ\cdot \mathcal{C}_m^\dagger\circ \tilde{\mathcal{D}}\circ \phi(C_m)\circ\cdots \circ\phi(C_1)(\rho_0)]-1,$$
    that is, the post-processing POVM, being the combination of the quantum measurement followed by classical simulation, can be exactly factored in terms of a tuple $(|0\rangle\langle 0|, \tilde{\omega}^\dagger, \tilde{\mathcal{D}})$, with a prefactor of $2^n$.
    
    Linear XEB experiments suggest that $F(m)$ can be fit into a single exponential decay $ A'+B'\cdot p^m$ with $A'\approx 0$ and $B'=O(1)$. Thus, $B = 2^{-n} B' \ll 1$. This implies that the single exponential decay may not be observable unless the circuit depth approaches $\omega(n\log(1-\gamma))$ so that the error term is negligible compared to the actual single exponential decay. We leave a more detailed analysis of linear XEB experiments for future work.

    \paragraph{Cycle Benchmarking} Cycle benchmarking~\cite{erhard2019characterizing} is a specialized protocol to measure the Pauli errors of Clifford gadgets in a large-scale quantum processor. Given an $n$-qubit Clifford gate $g\in \mathcal{C}_n$, let $k = \mathrm{ord}(g) := \min\{k\in\mathbb{Z}_+|g^k=I\}$. Each random element is chosen uniformly from the set of $n$-qubit Pauli operators $\mathcal{P}_n^{\otimes k}$, and is implemented by
    $$\phi((P_1,\cdots, P_k)):=\varphi(P_1)\circ \varphi(g)\circ\cdots\circ \varphi(P_k)\circ \varphi(g),$$
    for some physical implementation $\varphi$ defined on all Pauli gates and $g$. It can be proven that in the noiseless case such implementations result in a Pauli operator, and it is therefore sufficient to calculate and implement the final Pauli gate as the recovery gate. The post-processing POVM can be approximately factored given $\varphi$ on the set of Pauli operators is close to ideal. 
    
    One distinct feature of the cycle benchmarking is that there is almost never a single exponential decay: the twirling map would be far from an approximate twirl since the distribution effectively defined on the set of Pauli channels. Instead, there is typically an exponential number of exponential decay components from a single experiment. Multiple experiments, typically with different measurements, are required in order to isolate the exponential decay components to give useful information about the Pauli error channels.

\section{Main Result}
\label{sec:result}
\subsection{Proof of Exponential Decay under Approximate Twirl}
\begin{thm}
\label{thm:main}
Let $R=(S,\mu,\phi,M,\rho_0)$ be an $(\epsilon,\delta,\gamma)$ twirling scheme with respect to the tuple $(M_0,\phi^*,\mathcal{I},\omega)$.  
Given that $\delta \leq\frac{1-\gamma}{11}$, there exists $A, B \in \mathbb{R}$ and $p \in [1-2\delta,1]$\footnote{This bound can be further tightened to $p\in[1-\delta, 1]$ using the Bauer-Fike theorem, but we leave optimization of constants in our results for sake of conciseness.} such that
\begin{align*}
    \vert p_R(m) - (A + B p^m) \vert \le \epsilon + 16(\gamma+6\delta)^m.
\end{align*}
\end{thm}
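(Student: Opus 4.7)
The plan is to turn $p_R(m)$ into a linear functional evaluated on the $m$-th power of a single operator on the space of superoperators, and then feed that operator into~\cref{cor:simplified_perturbation}. The approximate factoring of the POVM, combined with $\|M_0\|_\infty\le 1$ and $\|\rho_0\|_1=1$, immediately yields
\begin{align*}
\left|p_R(m) - \mathbb{E}_{g_1,\ldots,g_m}\tr\bigl[M_0\cdot\phi^*(g_1)\circ\cdots\circ\phi^*(g_m)\circ\mathcal{I}\circ\phi(g_m)\circ\cdots\circ\phi(g_1)(\rho_0)\bigr]\right|\le \epsilon.
\end{align*}
Since the $g_i$ are i.i.d., I would integrate them out one matched pair $(\phi^*(g_i),\phi(g_i))$ at a time from the innermost outward, collapsing the expectation to $(\tilde{\Lambda}_R^*)^m(\mathcal{I})(\rho_0)$, where $\tilde{\Lambda}_R^*:\mathcal{N}\mapsto\int d\mu(g)\,\phi^*(g)\circ\mathcal{N}\circ\phi(g)$ is the noisy twirling map. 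The remaining task is to analyze $(\tilde{\Lambda}_R^*)^m$.

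Next I would set up the perturbation $\tilde{\Lambda}_R^*=A+E$ with $A:=\Lambda_R^*$ and $E:=\tilde{\Lambda}_R^*-\Lambda_R^*$. The near-ideal implementation hypothesis together with the data processing inequality~\cref{prp:diamond_norm_bound} gives $|||E|||_\diamond\le \delta$. Let $V_1:=\mathrm{span}(\mathrm{id},\mathcal{D}_0)$; each basis element is fixed by every unitary twirl, so $\Lambda_R^*|_{V_1}=I_{V_1}$, and a Haar-invariance and Fubini calculation shows $\Lambda^*\circ\Lambda_R^*=\Lambda^*$, which forces $\Lambda_R^*$ to preserve $V_2:=\ker\Lambda^*=V_1^\perp$. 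Hence $A$ has the block-diagonal structure demanded by the corollary with $A_1=X_1:=\Lambda^*$ and $\|A_2\|=\|\Lambda_R^*|_{V_2}\|\le|||\Lambda_R^*-\Lambda^*|||_\diamond\le\gamma$; the assumption $\delta\le(1-\gamma)/11$ is exactly what the corollary needs. Its conclusion produces the decomposition $(\tilde{\Lambda}_R^*)^m=R_1(A_1')^m L_1^\dagger+R_2(A_2')^m L_2^\dagger$ with $\|A_2'\|\le\gamma+6\delta$, $\|L_2\|\|R_2\|\le 16$, and eigenvalues of $A_1'$ within $2\delta$ of $1$. The second summand, contracted against $M_0$, $\mathcal{I}$, and $\rho_0$, yields an error of at most $16(\gamma+6\delta)^m$ by sub-multiplicativity and $\|\mathcal{I}\|_\diamond=1$.

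The subtle final step is to collapse the $2$-dimensional block $A_1'$ to the single exponential form $A+Bp^m$, which a priori admits two distinct decay modes. I would pin down an exact eigenvalue-$1$ eigenvector by a Brouwer argument: for any state $\sigma$, a direct computation using the trace-preservation of $\phi(g)$ shows $\tilde{\Lambda}_R^*(\mathcal{D}_\sigma)=\mathcal{D}_{\bar{\phi^*}(\sigma)}$ where $\mathcal{D}_\sigma:\rho\mapsto\tr(\rho)\sigma$ and $\bar{\phi^*}:=\int d\mu(g)\,\phi^*(g)$ is a CPTP map on states. The map $\bar{\phi^*}$ has a fixed state $\sigma^*$, so $\mathcal{D}_{\sigma^*}$ is an exact fixed point of $\tilde{\Lambda}_R^*$; because $\|A_2'\|<1$ the fixed point must lie in $\tilde V_1:=R_1 V_1$, so $1$ is a genuine eigenvalue of the real $2\times 2$ matrix $A_1'$. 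The companion eigenvalue $p$ is therefore real (paired with $1$ in a real $2\times 2$ matrix), within $2\delta$ of $1$ by the corollary, and satisfies $|p|\le|||\tilde{\Lambda}_R^*|||_\diamond\le 1$, hence $p\in[1-2\delta,1]$. In the diagonalizable case $(A_1')^m$ decomposes as $v_1 w_1^\top + p^m v_2 w_2^\top$, and contracting with the linear functional $\mathcal{N}\mapsto\tr[M_0\mathcal{N}(\rho_0)]$ produces exactly $A+Bp^m$, completing the proof after combining the $\epsilon$ and $V_2$ error bounds. The main obstacle is precisely this collapse of the $V_1$ block: without the replacer-channel fixed point supplying an exact $+1$ eigenvalue, the 2D dynamics would generically leave two comparable exponentials, and the degenerate coincidence $p=1$ would additionally require ruling out a Jordan block at $1$ (otherwise a linear-in-$m$ growth would violate $|||\tilde{\Lambda}_R^*|||_\diamond\le 1$).
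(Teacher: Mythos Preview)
Your proposal is correct and follows essentially the same route as the paper: reduce $p_R(m)$ to a linear functional of $\Lambda_R^m(\mathcal I)$ via the $\epsilon$-factoring, apply \cref{cor:simplified_perturbation} with $A=\Lambda_R^*$, $X_1=\Lambda^*$, $E=\Lambda_R-\Lambda_R^*$, bound the $V_2$-block by $16(\gamma+6\delta)^m$, and then collapse the $2\times 2$ block $A_1'$ to a single exponential using the Brouwer fixed point of $\bar{\phi^*}$ to pin the eigenvalue $1$, reality of the other eigenvalue, and a Jordan-block exclusion. The only places where you are terser than the paper are the justification that $A_1'$ is genuinely a real matrix (the paper argues this by exhibiting a basis in which $\Lambda_R,\Lambda_R^*,\Lambda^*$ and hence all perturbation data are real) and the explicit check $|||X_1|||_\diamond\le 1$; your Jordan-block argument via $|||\tilde\Lambda_R^*|||_\diamond\le 1$ is a slight variant of the paper's boundedness-of-the-functional argument but is equally valid.
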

\noindent Our proof centers around \emph{twirling maps} that map channels to channels; for a URB scheme $R=(S,\mu,\phi,M,\rho_0)$ that is a twirling scheme with parameters $(M_0,\phi^*,\mathcal{I},\omega)$, define the \emph{ideal twirling map}

$$\Lambda^*_R:\mathcal{N}\mapsto \int_{g\sim \mu}dg \omega(g)^\dagger\circ \mathcal{N}\circ \omega(g)$$
and the \emph{physical twirling map}
$$\Lambda_R:\mathcal{N}\mapsto \int_{g\sim \mu}dg \phi^*(g)\circ \mathcal{N}\circ \phi(g).$$
The following is a summary of the proof: We first approximate $p_R(m)$ as a linear function of the $m$-th power of the physical twirling map $\Lambda_R$, transforming the problem into the study of the major spectral components of $\Lambda_R$. This is then analyzed by regarding $\Lambda_R$ as a perturbed version of the ideal twirling map $\Lambda^*_R$, whose spectral properties are given by the $\gamma$-approximation with respect to the Haar twirl $\Lambda^*$. For sake of simplicity we denote the Haar random distribution $\eta$ without explicity specifying the dependence on the dimension $d$.
\begin{proof}
    We first compute $p_R(m)$:
    \begin{align*}
        p_R(m) & = \mathbb{E}_{g_1, \cdots, g_{m} \sim \mu}  \tr[M(g_1,\cdots,g_m) \cdot \phi(g_m) \circ \cdots \circ \phi(g_1) (\rho_0)],
    \end{align*}
    Then,
    \begin{align}
        &\big\vert p_R(m)- \tr[M_0\cdot\Lambda_R^{m}(\mathcal{I}) (\rho_0)]\big \vert  \nonumber\\
        & = \Big \vert \mathbb{E}_{g_1, \cdots, g_{m} \sim \mu} \tr[M(g_1,\cdots, g_m) \cdot \phi(g_{m}) \circ \cdots \circ \phi(g_1) (\rho_0)] \nonumber\\
        & - \mathbb{E}_{g_1, \cdots, g_{m} \sim \mu}\tr[M_0 \cdot \phi^*(g_1) \circ \cdots \circ \phi^*(g_{m}) \circ\mathcal{I}\circ \phi(g_{m}) \circ \cdots \circ \phi(g_1) (\rho_0)] \Big\vert \nonumber\\
        & = \Big \vert \mathbb{E}_{g_1, \cdots, g_{m} \sim \mu}  \tr[\big(M(g_1,\cdots, g_m) - M_0 \cdot \phi^*(g_1) \circ \cdots \circ \phi^*(g_{m}) \circ\mathcal{I}\big) (\phi(g_{m}) \circ \cdots \circ \phi(g_1) (\rho_0))] \Big\vert \nonumber\\
        & \leq  \mathbb{E}_{g_1, \cdots, g_{m} \sim \mu}  \Big \vert\tr[\big(M(g_1,\cdots, g_m) - M_0 \cdot \phi^*(g_1) \circ \cdots \circ \phi^*(g_{m}) \circ\mathcal{I}\big)(\phi(g_{m}) \circ \cdots \circ \phi(g_1) (\rho_0))] \Big\vert \nonumber\\
        & \leq  \mathbb{E}_{g_1, \cdots, g_{m} \sim \mu} \Vert M(g_1,\cdots, g_m) - M_0 \cdot \phi^*(g_1) \circ \cdots \circ \phi^*(g_{m}) \circ\mathcal{I}\Vert_{\infty}\Vert\phi(g_{m}) \circ \cdots \circ \phi(g_1) (\rho_0)] \Vert_{\tr} \nonumber\\& =  \mathbb{E}_{g_1, \cdots, g_{m} \sim \mu} \Vert M(g_1,\cdots, g_m) - M_0 \cdot \phi^*(g_1) \circ \cdots \circ \phi^*(g_{m}) \circ\mathcal{I}\Vert_{\infty} \nonumber\\
        & \le \epsilon, \label{eq:probability_to_exp}
    \end{align}
    where the first inequality uses the triangle inequality, the second inequality uses the H\"older inequality, the third equality uses the fact that all $\phi(g_i)$'s are channel and thus $\phi(g_m)\circ\cdots\circ\phi(g_1)(\rho_0)$ is a state with unit trace, and the final inequality follows by the factorization assumption.
    
    Now we turn to study the quantity $\tr[M_0\cdot \Lambda_R^m(\mathcal{I})(\rho_0)]$. We do this by studying the spectral properties of $\Lambda_R$. Specifically, we would like to invoke \Cref{cor:simplified_perturbation} with the following setup: 
    $$\mathcal{H}=V(d), A = \Lambda^*_R, X_1 = \Lambda^*, X_2 = I -\Lambda^*, E = \Lambda_R - \Lambda_R^*,\Vert\cdot\Vert = |||\cdot|||_\diamond.$$
    We verify that all the assumptions in~\Cref{cor:simplified_perturbation} hold. 
    \begin{enumerate}
        \item $X_1^\dagger = X_1,X_1^2=X_1$. The first is because for any $\mathcal{C},\mathcal{D}\in V(d)$,
        \begin{align*}
            \langle \mathcal{C},\Lambda^*(\mathcal{D})\rangle_{SO}&=\langle \mathcal{C}, \int_{g\sim \eta} dg\tilde{g}^\dagger\circ \mathcal{D}\circ\tilde{g} \rangle_{SO}\\
            &=\int_{g\sim \eta} dg \langle \mathcal{C}, \tilde{g}^\dagger\circ \mathcal{D}\circ\tilde{g} \rangle_{SO}\\
            &=\int_{g\sim \eta} dg \sum_i\langle \mathcal{C}(Y_i), \tilde{g}^\dagger\circ \mathcal{D}\circ\tilde{g}(Y_i) \rangle_{HS}\\
            &=\int_{g\sim \eta} dg \sum_i\langle \tilde{g}\circ\mathcal{C}(Y_i),  \mathcal{D}\circ\tilde{g}(Y_i) \rangle_{HS}\\
            &=\int_{g\sim \eta} dg \sum_i\langle \tilde{g}\circ\mathcal{C}\circ\tilde{g}^\dagger(Y^g_i),  \mathcal{D}(Y^g_i) \rangle_{HS}\\
            &=\int_{g\sim \eta} dg \langle \tilde{g}\circ\mathcal{C}\circ\tilde{g}^\dagger,  \mathcal{D} \rangle_{SO}\\
            &= \langle\int_{g\sim \eta} dg \tilde{g}\circ\mathcal{C}\circ\tilde{g}^\dagger,  \mathcal{D} \rangle_{SO}\\
            &= \langle\int_{g^\dagger\sim \eta} dg \tilde{g}\circ\mathcal{C}\circ\tilde{g}^\dagger,  \mathcal{D} \rangle_{SO}\\
            &= \langle\Lambda^*(\mathcal{C}),  \mathcal{D} \rangle_{SO}.
        \end{align*}
        Here we used the facts that $\{Y_i^g\}_i:=\{\tilde{g}(Y_i)\}_i$ is an orthonormal basis under the Hilbert-Schmidt inner product if $\{Y_i\}_i$ is, that $(\tilde{g}^\dagger)^\dagger = \tilde{g}$, and that the Haar measure is inverse symmetric. 
        
        To prove $X_1^2=X_1$, define $\Lambda(\nu):=\int_{g\sim\nu}dg \tilde{g}^\dagger\circ\cdot\circ\tilde{g}$ for arbitrary distribution $\nu$ defined on $SU(d)$. We then have
        \begin{align*}
            \Lambda(\nu)\Lambda^*&=\int_{g\sim \nu}dg\int_{h\sim \eta}dh \tilde{g}^\dagger\circ\tilde{h}^\dagger\cdot \tilde{h}\circ\tilde{g}\\
            &=\int_{g\sim \nu}dg\int_{h\sim \eta}dh \tilde{(hg)}^\dagger\cdot \tilde{(hg)}\\
            &=\int_{g\sim \nu}dg\int_{h g\sim \eta}d(h g) \tilde{(h g)}^\dagger\cdot \tilde{(h g)}\\
            &=\int_{g\sim \nu}dg\int_{h'\sim \eta}dh' \tilde{h'}^\dagger\cdot \tilde{h'}\\
            &=\int_{g\sim \nu}dg\Lambda^*=\Lambda^*,
        \end{align*}
        where we have used the fact that the Haar measure is right invariant. Taking $\nu=\eta$ proves $X_1^2=X_1$.
        \item $X_iAX_j=A_i\delta_{ij}$. Since $\Lambda^*_R$ is uniquely determined by the distribution $\mu_\omega$ on $SU(d)$ which is defined by the distribution $\mu$ on $S$ and the ideal map $\omega$, we have
        $\Lambda^*\Lambda^*_R=\Lambda^*$ and similarly $\Lambda^*_R\Lambda^*=\Lambda^*$ from the above. This directly leads to
        $$X_1AX_1=\Lambda^*=A_1, X_1AX_2=\Lambda^*\Lambda^*_R(I-\Lambda^*)=\Lambda^*(I-\Lambda^*)=\Lambda^*-\Lambda^*=0,$$
        $$X_2AX_1=0,X_2AX_2=(I-\Lambda^*)\Lambda^*_R(I-\Lambda^*)=\Lambda^*_R-\Lambda^*=A_2.$$
        \item $\Vert A_2\Vert = ||| \Lambda^*_R-\Lambda^*|||_\diamond\leq \gamma$ by $\Lambda^*_R$ being a $\gamma$-approximated twirl.
        \item $\Vert E\Vert \leq\delta$. This is because
        \begin{align}
        \Vert E\Vert &= |||\Lambda_R-\Lambda^*_R|||_\diamond\nonumber\\
        &=||| \int_{g\sim \mu}dg\big(\omega(g)^\dagger\circ\cdot\circ \omega(g) - \phi^*(g)\circ\cdot\circ \phi(g)\big)|||_\diamond\nonumber\\
        &=||| \int_{g\sim \mu}dg\big((\omega(g)^\dagger-\phi^*(g))\circ\cdot\circ \omega(g) + \phi^*(g)\circ\cdot\circ (\omega(g)-\phi(g))\big)|||_\diamond\nonumber\\
        &\leq\int_{g\sim \mu}dg\Vert \omega(g)^\dagger-\phi^*(g)\Vert_\diamond\Vert \omega(g)\Vert_\diamond + \int_{g\sim \mu}dg\Vert\phi^*(g)\Vert_\diamond\Vert \omega(g)-\phi(g)\Vert_\diamond\nonumber\\
        &\leq\int_{g\sim \mu}dg\Vert \omega(g)^\dagger-\phi^*(g)\Vert_\diamond + \int_{g\sim \mu}dg\Vert \omega(g)-\phi(g))\Vert_\diamond\nonumber\\
        &\leq \delta,\label{eqn:enorm}
        \end{align}
        where the first inequality uses the triangle inequality and \Cref{prp:diamond_norm_bound}, and the second inequality uses that $\Vert\omega(g)\Vert_\diamond,\Vert \phi^*(g)\Vert_\diamond= 1$ since they are channels. 
        \item $A_1=X_1$ is already proven above.
        \item $\Vert X_1 \Vert_\diamond \leq 1$ follows directly from~\Cref{prp:diamond_norm_bound}.
        \item This is assumed.
    \end{enumerate}
    We can now invoke~\Cref{cor:simplified_perturbation} to obtain $L_1,L_2,R_1,R_2, A_1', A_2'$ such that $\Lambda_R$ is diagonalized:
    $$\Lambda_R = L_1 A'_1 R_1+L_2A'_2R_2,$$
    and $\Vert A'_1-X_1\Vert\leq 2\delta, \Vert A'_2\Vert\leq \gamma+6\delta$, and $\Vert L_2\Vert\Vert R_2\Vert\leq 16$.
    
    The diagonalization ensures that $\Lambda_R^m = L_1 (A'_1)^m R_1+L_2(A'_2)^mR_2$ and therefore
    \begin{align*}
        \tr[M_0\cdot \Lambda^m_R(\mathcal{I})(\rho_0)]&=\tr[M_0\cdot (L_1 (A'_1)^m R_1+L_2(A'_2)^mR_2)(\mathcal{I})(\rho_0)]\\
        & = \tr[M_0\cdot L_1 (A'_1)^m R_1(\mathcal{I})(\rho_0)]+\tr[M_0\cdot L_2(A'_2)^mR_2(\mathcal{I})(\rho_0)].
    \end{align*}
    The second term is bounded by an exponential decay: 
    \begin{align}
        \vert \tr[M_0\cdot L_2(A'_2)^mR_2(\mathcal{I})(\rho_0)]\vert &\leq \Vert L_2(A'_2)^mR_2(\mathcal{I})\Vert_\diamond\nonumber\\
        &\leq \Vert L_2(A'_2)^mR_2\Vert\nonumber\\
        &\leq \Vert L_2\Vert\cdot\Vert(A'_2)^m\Vert\cdot\Vert R_2\Vert\nonumber\\
        &\leq 16\cdot \Vert A'_2\Vert^m\nonumber\\
        &\leq 16\cdot (\gamma+6\delta)^m\label{eqn:errnorm}.
    \end{align}
    We finally show that the term $\tr[M_0\cdot L_1 (A'_1)^m R_1(\mathcal{I})(\rho_0)]$ is exactly a single exponential decay.  Specifically, we show that the rank of $A'_1$ is 2, and it can be diagonalized with one eigenvalue 1 and the other $p\in[1-2\delta, 1]$.
    
   \begin{enumerate}
       \item \textbf{$A'_1$ is of rank 2}. First, by construction the rank is at most 2. Also, for any quantum channel $\mathcal{N}$,
       $$\Lambda^*(\mathcal{N})=\alpha_\mathcal{N} \cdot \mathrm{id} + (1-\alpha_\mathcal{N})\cdot \mathrm{dep}$$
       for some $\alpha_\mathcal{N}$. Therefore $\mathrm{Im}(\Lambda^*)=\mathrm{span}(\{\mathrm{dep},\mathrm{id}\})$ and $\mathrm{rank}(\Lambda^*)=2$. Moreover, since $\Vert A'_1-A_1\Vert\leq 2\delta$ and $A_1=\Lambda^*$ is a projector, the two eigenvalues $\lambda_1,\lambda_2$ of $A'_1$ must be $2\delta$-close to $1$ and $A'_1$ is therefore of rank $2$. 
       \item \textbf{Eigenvalues of $A'_1$ are bounded in magnitude by $1$.} For any non-zero eigen-channel $\mathcal{N}$ of $\Lambda_R$ with eigenvalue $\lambda$, we have
       $$|\lambda|\Vert\mathcal{N}\Vert_\diamond=\Vert \Lambda_R(\mathcal{N})\Vert_\diamond\leq ||| \Lambda_R|||_\diamond\cdot\Vert\mathcal{N}\Vert_\diamond\Rightarrow |\lambda|\leq ||| \Lambda_R|||_\diamond\leq 1.$$
       Therefore all eigenvalues of $\Lambda_R$ are bounded in magnitude by $1$, and so are $A'_1, A'_2$ since their two spectra constitute a bipartition of the spectrum of $\Lambda_R$.
       \item \textbf{All $\Lambda_R,\Lambda^*_R$ and $\Lambda^*$ are real-valued under a properly chosen basis.} 
       Consider a orthogonal basis $B$ (under the Hilbert-Schmidt inner product) of the space of Hermitian matrices. Then, any real superoperator is defined by its action on elements of $B$, giving rise to a matrix representation. In particular, real superoperators will be real-valued matrices since they preserve hermiticity. Now, we can consider a basis $\mathcal B := \{\mathcal{N}_i\}_i$ of real superoperators with a matrix representation with respect to $B$ being an elementary matrix: a single $1$ and $0$'s everywhere else. Since $\omega(g),\phi(g)$ and $\phi^*(g)$ are quantum channels for all $g$, $\Lambda_R,\Lambda_R^*,\Lambda^*$ maps $\mathcal N_i$ to real matrices with respect to $B$ and therefore are themselves real matrices with respect to $\mathcal B$. This also indicates that $L_1,L_2,R_1,R_2,A'_1,A'_2$ are real with respect to $\mathcal B$. Additionally, the intermediate channel $\mathcal{I}$, the initial state $\rho_0$, and the final measurement $M_0$ are all real with respect to the corresponding bases.
       \item \textbf{$A'_1$ has one eigenvalue $1$.} Consider the channel
       $$\mathcal{C}_{\phi^*}:=\int_{g\sim\mu} dg\phi^*(g).$$
       By Brouwer's fixed-point theorem, there exists a quantum state $\rho^*$ such that $\mathcal{C}_{\phi^*}(\rho^*)=\rho^*$ is an eigenstate of $\mathcal{C}_{\phi^*}$ with eigenvalue $1$. $\rho^*$ can then be lifted to an eigen-channel $\mathcal{E}_{\rho^*}:\rho\mapsto \rho^*\cdot\tr[\rho]$ of $\Lambda_R$ with eigenvalue $1$. This eigenvalue must lie in $A'_1$ as $ \Vert A'_2\Vert\leq \gamma+6\delta<1$. Without loss of generality denote $\lambda_1=1$. Then $\lambda_2$ must be real as $A'_1$ is real and so eigenvalues come in complex conjugate pairs, and $\lambda_2\in[1-2\delta, 1]$.
       \item \textbf{$A'_1$ is diagonalizable even if $\lambda_2=1$.} Suppose otherwise and $A'_1 = \begin{bmatrix}
       1 & a\\0&1
       \end{bmatrix}$ under properly chosen $L'_1,R'_1$ with $a\neq 0$. Then $\tr[M_0\cdot L_1 (A'_1)^m R_1(\mathcal{I})(\rho_0)]$ becomes a linear function with respect to $m$ with non-zero coefficients for some $M_0,\mathcal{I},\rho_0$, and thus becomes unbounded. This contradicts to the fact that it is the difference of two bounded functions:
       $$\tr[M_0\cdot L_1 (A'_1)^m R_1(\mathcal{I})(\rho_0)] =\tr[M_0\cdot \Lambda^m_R(\mathcal{I})(\rho_0)]-\tr[M_0\cdot L_2(A'_2)^mR_2(\mathcal{I})(\rho_0)], $$
       where the magnitude of the first term is bounded by $p_R(m)+\epsilon\leq 1+\epsilon$ and that of the second term by $16(\gamma+6\delta)^m\leq 16$.
   \end{enumerate}
\end{proof}
\begin{rmk}
\label{rmk:difference_of_channels}
\emph{
In the proof we chose the Hilbert space $\mathcal{H}$ to be the channel space $V(d)$ as it is an invariant space under all twirling maps $\Lambda^*,\Lambda^*_R$ and $\Lambda_R$. A similar proof considers the channel difference space $V_0(d)$ as it is again an invariant subspace under all above twirling maps by taking as input the channel difference $\mathcal{I}-\mathcal{E}_{\rho^*}$. We will use this fact when giving upper bounds on $\gamma$.
}
\end{rmk}

\subsection{Proving Measures are Approximate Twirls}
\label{subsec:gamma_bounds}
In~\Cref{thm:main}, a key requirement is for the ideal twirling map $\Lambda^*_R$ to be a $\gamma$-approximate twirl with $\gamma < 1$. Here, we give several possible ways to upper bound $\gamma$. A trivial upper bound is 2 which follows directly from~\Cref{prp:diamond_norm_bound}.

We establish some notation. Note that a probabilistic distribution $\mu$ on the set $S$ together with the ideal map $\omega:S\rightarrow SU(d)$ defines a probabilistic distribution on $SU(d)$, from which the ideal twirling map $\Lambda^*_R$ is uniquely determined. For the rest of the section, we consider probabilistic distributions $\mu$ on $SU(d)$, and let 
\begin{align*}
    \Lambda(\mu) := \int_{g \sim \mu} \omega(g)^\dagger\circ \cdot \circ\omega(g) dg.
\end{align*}

\subsubsection{Bounds on $\gamma$ for Convex Combinations}
One straightforward yet somewhat restrictive way of bounding $\gamma$ is when $\mu$ is a convex combination of a unitary 2-design $\nu$ with any other measure. 


We first observe that if $\mu$ is a $p$-convex combination of some measure $\mu'$ and $\nu$, then
$$\Lambda(\mu) = \Lambda(p\mu'+(1-p)\nu) = p\Lambda(\mu') + (1-p)\Lambda(\nu).$$
Note that this implies the set of twirling maps corresponding to some measure is convex. However, this implies
$$||| \Lambda(\mu) - \Lambda(\nu) |||_\diamond = ||| p \Lambda(\mu') + (1-p) \Lambda(\nu) - \Lambda(\nu)|||_\diamond \leq p ||| \Lambda(\mu')|||_\diamond + p ||| \Lambda(\nu)|||_\diamond =2p.$$
Moreover, $\Lambda(\nu) = \Lambda(\eta)$, where $\eta$ is the Haar random distribution on $SU(d)$. We conclude the following:
\begin{prp}
If $\mu$ is a $p$-convex combination of any measure and a unitary 2-design, then $\mu$ is a $2p$-approximate twirl.
\end{prp}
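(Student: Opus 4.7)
The plan is to exploit the linearity of $\Lambda$ in its measure argument and the defining property of a unitary 2-design. Write $\mu = p\mu' + (1-p)\nu$ where $\nu$ is the 2-design. First, I would directly expand the integral
\begin{align*}
\Lambda(\mu) = \int_{g\sim \mu} \omega(g)^\dagger \circ \cdot \circ \omega(g)\, dg = p\,\Lambda(\mu') + (1-p)\,\Lambda(\nu),
\end{align*}
from which $\Lambda(\mu) - \Lambda(\nu) = p\bigl(\Lambda(\mu') - \Lambda(\nu)\bigr)$.

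Next, I would invoke the characterization of unitary 2-designs: $\nu$ being a 2-design means that $\Lambda(\nu)$ agrees with the Haar twirling map $\Lambda^*$, because the twirling map $\Lambda(\mu)$ (acting on a channel, viewed in Choi form) is determined by the second moments of $\mu$. Hence $\Lambda(\nu) = \Lambda^*$, and the goal $|||\Lambda(\mu) - \Lambda^*|||_\diamond \leq 2p$ reduces to bounding $p\cdot |||\Lambda(\mu') - \Lambda^*|||_\diamond$.

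To finish, I would apply the triangle inequality and the Data Processing Inequality (\cref{prp:diamond_norm_bound}) to each piece:
\begin{align*}
|||\Lambda(\mu) - \Lambda^*|||_\diamond
&= p\,|||\Lambda(\mu') - \Lambda(\nu)|||_\diamond \\
&\leq p\bigl(|||\Lambda(\mu')|||_\diamond + |||\Lambda(\nu)|||_\diamond\bigr) \\
&\leq p(1 + 1) = 2p,
\end{align*}
where the final step uses that for any probability measure $\kappa$ on $SU(d)$, \cref{prp:diamond_norm_bound} gives $|||\Lambda(\kappa)|||_\diamond \leq \int \|\omega(g)^\dagger\|_\diamond \|\omega(g)\|_\diamond \, d\kappa(g) = 1$ since unitary channels have unit diamond norm.

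There is essentially no hard step here; the only subtle point worth spelling out is the passage $\Lambda(\nu) = \Lambda^*$, which relies on recognizing that a unitary 2-design reproduces the Haar second moment and that the twirling superoperator $\Lambda(\cdot)$ depends only on the second moment of the measure (as can be seen by writing $\Lambda(\mu)$ in the Liouville/Choi representation where it becomes $\mathbb{E}_{g\sim\mu}[\bar g \otimes g \otimes \bar g \otimes g]$ up to reshuffling). Everything else is linearity of expectation and the triangle inequality.
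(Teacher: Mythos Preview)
Your proposal is correct and follows essentially the same argument as the paper: linearity of $\Lambda$ in the measure gives $\Lambda(\mu)-\Lambda(\nu)=p(\Lambda(\mu')-\Lambda(\nu))$, then the triangle inequality together with $|||\Lambda(\kappa)|||_\diamond\le 1$ and $\Lambda(\nu)=\Lambda^*$ yields the bound $2p$. If anything, your write-up is slightly more explicit than the paper's in invoking \cref{prp:diamond_norm_bound} and in justifying $\Lambda(\nu)=\Lambda^*$ via the second-moment characterization of 2-designs.
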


Consider the Clifford group $C(d) \subseteq SU(d)$ and let the measure $\mu_C$ be the uniform distribution over $C(d)$. Suppose a discrete measure $\mu$ has support that includes $C(d)$. Then, define
$$m := \min_{c \in C(d)} \mu(c) > 0.$$
We have that
$$m \times \vert C(d) \vert \leq \sum_{c \in C(d)} \mu(c) \leq 1.$$
If $m$ satisfies this with equality then $\mu = \mu_C$. Assume it doesn't. Then, define the measure $\mu'$ as 
$$\mu' = \frac{1}{1-m\vert C(d)\vert} (\mu - m \vert C(d) \vert \mu_C).$$
By construction, $\mu'$ is a legitimate measure. That is, $\mu$ is a $(1-m \vert C(d)\vert)$-convex combination of $\mu'$ and $\mu_C$. Note that $\mu$ having support that includes $C(d)$ is also a necessary condition for it to be a nontrivial convex combination involving $\mu_C$. By the above result, we can conclude $\mu$ is a $2(1-m\vert C(d)\vert)$-approximate twirl. We can therefore conclude
\begin{prp}
If a measure $\mu$'s support includes the Clifford group with all probabilities greater than half of that of the uniform distribution over the Clifford group, it is a $\gamma$-approximate twirl with $\gamma<1$.
\end{prp}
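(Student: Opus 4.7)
The plan is to observe that this proposition is essentially a quantitative strengthening of the construction in the preceding paragraph, and to show it follows almost mechanically once the hypothesis is translated into a bound on the minimum Clifford probability. Specifically, the hypothesis is that $m := \min_{c \in C(d)} \mu(c)$ strictly exceeds $\frac{1}{2 |C(d)|}$, so that $m |C(d)| > \frac{1}{2}$.

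My first step would be to invoke the construction already carried out in the excerpt: since $\mu$ is supported on $C(d)$ with $m > 0$, we can write
\begin{align*}
    \mu = m |C(d)| \, \mu_C + (1 - m |C(d)|)\, \mu',
\end{align*}
where $\mu' = \frac{1}{1 - m|C(d)|}(\mu - m|C(d)| \mu_C)$ is a well-defined probability measure by construction. Thus $\mu$ is a $(1 - m|C(d)|)$-convex combination of some measure $\mu'$ with the uniform Clifford measure $\mu_C$.

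Next, I would use the fact that $\mu_C$ is a unitary $2$-design (a standard property of the Clifford group). Applying the preceding proposition on convex combinations then immediately yields that $\mu$ is a $2(1 - m|C(d)|)$-approximate twirl. Finally, the hypothesis $m > \frac{1}{2|C(d)|}$ forces $2(1 - m|C(d)|) < 1$, so $\gamma := 2(1 - m|C(d)|) < 1$, which is the desired conclusion.

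There is no real obstacle here; the argument is a direct corollary of the previous proposition together with the Clifford group being a $2$-design. The only subtlety worth flagging is the edge case $m |C(d)| = 1$, in which $\mu = \mu_C$ outright and the convex decomposition degenerates; but then $\mu$ is already a $2$-design and $\gamma = 0$, so the conclusion still holds trivially. I would include a one-line remark on this degenerate case for completeness.
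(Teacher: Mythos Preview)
Your proposal is correct and follows essentially the same approach as the paper: the paper's argument is precisely the convex-combination construction with $\mu_C$ carried out in the paragraph preceding the proposition, combined with the earlier bound $\gamma \leq 2p$ for a $p$-convex combination with a unitary $2$-design, and your write-up reproduces this reasoning (including the degenerate case $m|C(d)|=1$) faithfully.
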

\noindent In particular, \emph{we can find another distribution over the Clifford group for which we can do URB.} This can be extended to any unitary 2-design that is a uniform distribution over a finite set. We can also consider more general two-designs. Let $\nu$ be a measure with support a finite subset $S \subseteq SU(d)$ that is a unitary 2-design. We define
$$M_S(\nu) := \max_{g \in S} \nu(S) > 0.$$
We have
$$M_S(\nu) \times \vert S \vert \geq \sum_{g \in S} \nu(g) = 1.$$
Consider a measure $\mu$ whose support includes $S$. Define
$$m_S(\mu) := \min_{g\in S}\mu(S).$$
Now, 
$$m_S(\mu) \times \vert S \vert \le \sum_{g\in S} \mu(g) \leq 1.$$
We conclude $m_S(\mu) \leq M_S(\nu)$. Then, we can define the measure
$$\mu' := \frac{1}{1-m_S(\mu)/M_S(\nu)} \left(\mu -\frac{m_S(\mu)}{M_S(\nu)} \nu\right).$$
By construction, $\mu'$ is a legitimate measure. Thus, $\mu$ is a convex combination of $\mu'$ and $\nu$. Note that $\mu$ having support on $S$ is a necessary condition for $\mu$ to be a nontrivial convex combination involving $\nu$. 
We conclude
\begin{prp}
Given a unitary 2-design $\nu$ with finite support $S$,  if a measure $\mu$ has probability greater than half of the max probability of $\nu$ on all of $S$, $\mu$ is a $\gamma$-approximate twirl with $\gamma<1$. 
\end{prp}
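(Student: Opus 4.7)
The plan is to reduce the statement to the earlier proposition which says that any $p$-convex combination of an arbitrary measure with a unitary 2-design gives a $2p$-approximate twirl. This mirrors the Clifford-specific argument given just above: the only change is that the Clifford uniform distribution $\mu_C$ (for which $m = 1/|C(d)|$ is the common probability) is replaced by a general finite-support 2-design $\nu$ whose probabilities may vary across $S$.

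First I would establish the basic inequality $m_S(\mu) \le M_S(\nu)$, which the excerpt has already done via the two bounds $m_S(\mu)\,|S| \le 1$ and $M_S(\nu)\,|S| \ge 1$. Next I would verify that the candidate measure
\begin{equation*}
\mu' := \frac{1}{1 - m_S(\mu)/M_S(\nu)} \left( \mu - \frac{m_S(\mu)}{M_S(\nu)}\, \nu \right)
\end{equation*}
is a legitimate probability distribution. Normalization follows immediately from $\sum_g \mu(g) = \sum_g \nu(g) = 1$. Non-negativity needs to be checked in two cases: for $g \notin S$, we have $\nu(g) = 0$ and $\mu(g) \ge 0$; for $g \in S$, the key estimate is
\begin{equation*}
\frac{m_S(\mu)}{M_S(\nu)}\, \nu(g) \;\le\; \frac{m_S(\mu)}{M_S(\nu)}\, M_S(\nu) \;=\; m_S(\mu) \;\le\; \mu(g),
\end{equation*}
where the first inequality uses the definition of $M_S(\nu)$ and the second uses the definition of $m_S(\mu)$. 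Having verified this, $\mu$ is by construction a $p$-convex combination of $\mu'$ and $\nu$ with $p = 1 - m_S(\mu)/M_S(\nu)$.

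Finally, the hypothesis that $\mu(g) > M_S(\nu)/2$ for every $g \in S$ is equivalent to $m_S(\mu) > M_S(\nu)/2$, which gives $p < 1/2$. Applying the earlier $2p$-approximate twirl proposition then yields $\gamma \le 2p < 1$, as desired.

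I do not anticipate any substantive obstacle: the argument is essentially bookkeeping that generalizes the Clifford case, and the only mildly delicate point is that the non-negativity check of $\mu'$ requires invoking the two extremal quantities $m_S(\mu)$ and $M_S(\nu)$ in exactly the right order. The hypothesis that $\nu$ has support precisely equal to $S$ ensures that the subtraction $\mu - (m_S(\mu)/M_S(\nu))\nu$ is only nontrivial on $S$, making the pointwise inequality sufficient.
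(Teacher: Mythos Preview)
Your proposal is correct and follows essentially the same approach as the paper: the paper likewise constructs $\mu'$, asserts it is a legitimate measure so that $\mu$ is a $(1-m_S(\mu)/M_S(\nu))$-convex combination of $\mu'$ and $\nu$, and then invokes the earlier $2p$-approximate twirl proposition. Your write-up is in fact slightly more detailed than the paper's, since you spell out the two-case non-negativity check for $\mu'$ that the paper leaves as ``by construction.''
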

\noindent Note that this implies if $M_S(\nu) \geq 2/\vert S\vert$,
$$\sum_{g\in S} \mu(g) \geq m_S(\mu) \times \vert S \vert > \frac 1 2 M_S(\nu) \times \vert S \vert \geq 1,$$
which is impossible. Thus in this case we cannot find such a measure $\mu$. Note that we can easily extend these arguments to the case when the support of $\mu$ is not finite.

\subsubsection{Bounds on $\gamma$ via Pauli Representation}
For the following bounds we restrict ourselves to the case of qubits, that is $d= 2^n$. We consider the Pauli matrices $\{P_i\}_{i=1}^{d^2}$, that is tensor products of $\{I, X, Y, Z\}$, that span $L(\mathbb{C}^d)$ and are orthogonal under the Hilbert-Schmidt inner product
\begin{align*}
    \tr[P_i P_j] = d\delta_{ij}.
\end{align*}

With this basis, we can express any linear operator. Since superoperators can be expressed in terms of left and right multiplications by operators, in general we can express a superoperator as a matrix $\alpha$ whose coefficients are defined by:
\begin{align*}
    \mathcal T = \sum_{i,j} \alpha_{i,j} P_i \cdot P_j.
\end{align*} 
We refer to this as the $\alpha$ matrix and will sometimes specify which superoperator we're considering by $\alpha(\mathcal T)$. Unless otherwise specified, summation indices range from $1$ to $d^2$. Going further, given a supersuperoperator $\Lambda$, we now introduce the $\beta$ tensor whose coefficients are defined by
\begin{align*}
    \Lambda: P_i \cdot P_j \mapsto \sum_{k,l} \beta_{i,j}^{k,l} P_k \cdot P_l.
\end{align*}
We can linearly extend this to obtain the action of $\Lambda$ on any superoperator. We refer to this as the $\beta$ tensor. The supersuperoperators we will mainly be considering are twirling maps corresponding to a measure $\mu$, and we sometimes write $\beta(\mu)$ as the $\beta$ tensor for $\Lambda(\mu)$. For a more detailed introduction to these representations and some basic facts that will be relevant to proving upper bounds on $\gamma$, see~\Cref{sec:app_alpha_beta}.

The first bound is the following:
\begin{prp}
\label{prp:l2_norm_bound}
Let $\beta(\mu)$ be the $\beta$ tensor for the twirling map $\Lambda(\mu)$. Then,
\begin{align*}
    ||| \Lambda(\mu) - \Lambda(\eta) |||_{\diamond} \leq  2 \sqrt{\sum_{\substack{k,l,k',l'=1}}^{d^2} \left\vert\sum_{i,j=1}^{d^2} (\beta^{k,l}_{i,j}(\mu)-\beta^{k,l}_{i,j}(\eta))(\beta^{k',l'}_{i,j}(\mu)-\beta^{k',l'}_{i,j}(\eta))\right\vert}.
\end{align*}
\end{prp}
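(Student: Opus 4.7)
The plan is to pass from the induced diamond norm of the supersuperoperator $\Delta_\Lambda := \Lambda(\mu)-\Lambda(\eta)$ to the combinatorial expression in the $\beta$-tensor entries $\Delta^{k,l}_{i,j} := \beta^{k,l}_{i,j}(\mu) - \beta^{k,l}_{i,j}(\eta)$ by chaining three ingredients: a Pauli triangle inequality, a Gram-matrix-aware Cauchy--Schwarz estimate, and a Pauli Parseval identity.

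First I would write any test superoperator $\mathcal{C} = \sum_{i,j}\alpha_{i,j}(\mathcal{C})\,P_i\cdot P_j$ so that, by the $\beta$-tensor definition, $\alpha_{k,l}(\Delta_\Lambda(\mathcal{C})) = \langle \Delta^{k,l}, \alpha(\mathcal{C})\rangle$ with $\Delta^{k,l}$ viewed as a vector indexed by $(i,j)$. Since each elementary Pauli superoperator $P_k\cdot P_l$ is a composition of two unitary conjugations and therefore has unit diamond norm, the Pauli triangle inequality gives $\Vert\Delta_\Lambda(\mathcal{C})\Vert_\diamond \leq \sum_{k,l}|\langle\Delta^{k,l},\alpha(\mathcal{C})\rangle|$.

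Second, I would apply the following Gram-matrix Cauchy--Schwarz estimate. Setting $s_{k,l} := \mathrm{sgn}\,\langle\Delta^{k,l},\alpha(\mathcal{C})\rangle$ and using $|s_{k,l}s_{k',l'}|\leq 1$,
\begin{align*}
    \Bigl(\sum_{k,l}|\langle\Delta^{k,l},\alpha(\mathcal{C})\rangle|\Bigr)^2 &= \Bigl\langle\sum_{k,l}s_{k,l}\Delta^{k,l},\alpha(\mathcal{C})\Bigr\rangle^2 \\
    &\leq \Bigl\|\sum_{k,l}s_{k,l}\Delta^{k,l}\Bigr\|_2^2\,\|\alpha(\mathcal{C})\|_2^2 \\
    &\leq \Bigl(\sum_{k,l,k',l'}|\langle\Delta^{k,l},\Delta^{k',l'}\rangle|\Bigr)\,\|\alpha(\mathcal{C})\|_2^2.
\end{align*}
To cancel the remaining normalization I would invoke a Pauli Parseval identity $\Vert\mathcal{C}\Vert_{\mathrm{SO}}^2 = d^2\|\alpha(\mathcal{C})\|_2^2$ (the family $\{P_i\cdot P_j\}$ is orthogonal under the SO inner product with common SO norm $d$) combined with $\Vert\mathcal{C}\Vert_{\mathrm{SO}} \leq d\Vert\mathcal{C}\Vert_{\tr} \leq d\Vert\mathcal{C}\Vert_\diamond$ from \Cref{sec:prelim} to conclude $\|\alpha(\mathcal{C})\|_2 \leq \Vert\mathcal{C}\Vert_\diamond$. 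Dividing by $\Vert\mathcal{C}\Vert_\diamond^2$, taking the supremum over $\mathcal{C}$, and extracting a square root yields the proposition.

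The main obstacle is the second step: the naive entrywise Cauchy--Schwarz instead gives the strictly looser bound $(\sum_{k,l}\Vert\Delta^{k,l}\Vert_2)^2\|\alpha(\mathcal{C})\|_2^2$, which fails to produce the sum-of-absolute-values-of-Gram-entries structure demanded by the proposition. The rescue is to introduce the signs $s_{k,l}$ \emph{before} invoking Cauchy--Schwarz so that the many inner products collapse into a single $\langle\,\cdot\,,\alpha(\mathcal{C})\rangle$; then Cauchy--Schwarz is applied only once, and the resulting $\ell_2$ norm of $\sum_{k,l}s_{k,l}\Delta^{k,l}$ unfolds into the bilinear form $\sum s_{k,l}s_{k',l'}\langle\Delta^{k,l},\Delta^{k',l'}\rangle$, which is dominated by the entrywise $\ell_1$ norm of the (PSD) Gram matrix $G_{(k,l),(k',l')} := \langle\Delta^{k,l},\Delta^{k',l'}\rangle$ independently of the $s_{k,l}$'s, exactly the quantity under the square root in the proposition.
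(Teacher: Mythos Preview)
Your argument is correct and essentially identical to the paper's proof: both expand in the Pauli basis, use that $\Vert(P_k\otimes I)\rho(P_l\otimes I)\Vert_1\leq\Vert\rho\Vert_1$, then apply a single Cauchy--Schwarz to the phased sum $\sum_{k,l}s_{k,l}\Delta^{k,l}$ (the paper writes the phases as $k_m\in U(1)$ rather than $\mathrm{sgn}$) before expanding the squared norm into the Gram matrix and invoking $\Vert\alpha(\mathcal{C})\Vert_{\ell_2}=d^{-1}\Vert\mathcal{C}\Vert_{\mathrm{SO}}\leq\Vert\mathcal{C}\Vert_\diamond$. One small wording fix: $P_k\cdot P_l$ is not a ``composition of two unitary conjugations'' but left/right multiplication by unitaries, which still has unit diamond norm by unitary invariance of the trace norm.
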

\begin{proof}
    We prove this in~\Cref{sec:app_l2_bound}.
\end{proof}
\noindent This is a direct and general upper bound on the diamond norm. However, its form is reminiscent of an $\ell_2$-norm, while we would expect something that looks more like an induced $\ell_1$-norm. We can prove such a bound for a special case. 
\begin{prp}
\label{prp:induced_l1_bound}
Suppose
\begin{align*}
    \Lambda(\mu) = \Lambda(\mu) \circ \Lambda(\mu_P),
\end{align*}
where $\mu_P$ is the uniform distribution over Pauli operators. Furthermore, let $\mathcal T$ be a difference of two channels: $\mathcal T = \mathcal N_1 - \mathcal N_2$. Then,
\begin{align*}
    \Vert [\Lambda(\mu) - \Lambda(\eta)](\mathcal T)\Vert_\diamond \leq \max_{i \in [d^2]}\sum_{k,l=1}^{d^2}  \vert\beta_{i,i}^{k,l}(\mu)-\beta_{i,i}^{k,l}(\eta) \vert \Vert \mathcal T \Vert_\diamond
\end{align*}
where $[d^2] := \{1, 2, \cdots, d^2\}$.
\end{prp}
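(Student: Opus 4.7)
The core idea is that the assumed identity $\Lambda(\mu) = \Lambda(\mu) \circ \Lambda(\mu_P)$ lets me replace $\mathcal T$ by its Pauli twirl, which is a difference of two bona fide Pauli channels whose diagonal $\alpha$-matrix structure matches the shape of the claimed bound exactly. Once this reduction is done, the remaining work is a bookkeeping estimate using the $\beta$ tensor together with a tight diamond-norm computation for Pauli channel differences.

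First I would establish two auxiliary identities. (a) The Haar measure also absorbs the Pauli twirl on the right, $\Lambda(\eta) \circ \Lambda(\mu_P) = \Lambda(\eta)$. This follows by the same translation-invariance argument used to prove $\Lambda(\nu)\Lambda^* = \Lambda^*$ inside the proof of~\cref{thm:main}, with the roles of left/right invariance interchanged. (b) The Pauli twirl annihilates off-diagonal basis superoperators, $\Lambda(\mu_P)(P_i \cdot P_j) = \delta_{ij}\, P_i \cdot P_i$. A direct computation shows that $\omega(P_k)^\dagger \circ (P_i \cdot P_j) \circ \omega(P_k) = \chi_i(k)\chi_j(k)\, P_i \cdot P_j$, where $\chi_i(k) \in \{\pm 1\}$ records whether $P_k$ commutes with $P_i$; averaging the product of characters over all $d^2$ Paulis yields $d^2 \delta_{ij}$, killing the normalization.

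Next, set $\mathcal T' := \Lambda(\mu_P)(\mathcal T)$. Using (b) and the fact that Pauli-twirling a channel produces a Pauli channel with genuine probability vector $p^{(k)}$, write $\mathcal T' = \sum_i c_i\, P_i \cdot P_i$ with $c_i := p^{(1)}_i - p^{(2)}_i$. Combining the assumed identity with (a) gives
\begin{align*}
[\Lambda(\mu)-\Lambda(\eta)](\mathcal T) \;=\; [\Lambda(\mu)-\Lambda(\eta)] \circ \Lambda(\mu_P)(\mathcal T) \;=\; [\Lambda(\mu)-\Lambda(\eta)](\mathcal T').
\end{align*}
Expanding $\mathcal T'$ in the basis, applying the definition of $\beta$, the triangle inequality, and the estimate $\Vert P_k \cdot P_l \Vert_\diamond \le 1$ (a H\"older-type bound using unitarity of Paulis), I obtain
\begin{align*}
\Vert [\Lambda(\mu)-\Lambda(\eta)](\mathcal T)\Vert_\diamond
 \leq \sum_i |c_i| \sum_{k,l} |\beta^{k,l}_{i,i}(\mu) - \beta^{k,l}_{i,i}(\eta)|
 \leq \Big(\sum_i |c_i|\Big) \max_i \sum_{k,l} |\beta^{k,l}_{i,i}(\mu) - \beta^{k,l}_{i,i}(\eta)|.
\end{align*}

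The remaining and only non-algebraic step is to show $\sum_i |c_i| \leq \Vert \mathcal T\Vert_\diamond$. By~\cref{prp:diamond_norm_bound} the induced diamond norm $|||\Lambda(\mu_P)|||_\diamond$ is bounded by $1$, so $\Vert \mathcal T'\Vert_\diamond \leq \Vert \mathcal T\Vert_\diamond$. It thus suffices to prove $\sum_i |c_i| \leq \Vert \mathcal T'\Vert_\diamond$. For this I would evaluate $\mathcal T' \otimes \mathrm{id}$ on the maximally entangled state $|\Phi\rangle\langle\Phi|$: the output is $\sum_i c_i\, |\Phi_i\rangle\langle\Phi_i|$, where $|\Phi_i\rangle := (P_i \otimes I)|\Phi\rangle$ form an orthonormal Bell basis, so the trace norm is exactly $\sum_i |c_i|$. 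Since $\Vert |\Phi\rangle\langle\Phi|\Vert_1 = 1$, this supplies the required lower bound on $\Vert \mathcal T'\Vert_\diamond$ and closes the argument. The main obstacle to anticipate is therefore this last tightness computation for Pauli channel differences; everything preceding it is structural, driven by how the Pauli-twirl absorbing property diagonalizes the problem into the induced $\ell_1$-style form on the right-hand side.
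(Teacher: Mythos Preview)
Your proposal is correct and follows essentially the same route as the paper: reduce to the Pauli-twirled $\mathcal T'$, expand via the $\beta$ tensor, apply the triangle inequality together with $\Vert P_k\cdot P_l\Vert_\diamond\le 1$, and then bound $\Vert\mathrm{diag}(\alpha(\mathcal T))\Vert_{\ell_1}\le\Vert\mathcal T\Vert_\diamond$. The only notable differences are that you justify $\Lambda(\eta)\circ\Lambda(\mu_P)=\Lambda(\eta)$ by Haar invariance (the paper reads it off from the explicit $\beta(\eta)$ in~\cref{eq:beta_haar}) and you give a self-contained Bell-state argument for $\sum_i|c_i|\le\Vert\mathcal T'\Vert_\diamond$ where the paper cites~\cite{sacchi2005optimal} for the equality.
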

\begin{proof}
    We prove this in~\Cref{sec:app_induced_l1_bound}.
\end{proof}
\noindent Intuitively, the Pauli representation lets us express supersuperoperators as a linear map on matrices, and our upper bound on the corresponding induced diamond norm resembles the induced $\ell_1$-norm for regular matrices:
\begin{align*}
    \Vert M \Vert_{1 \to 1} = \max_j \sum_i \vert M_{i,j} \vert.
\end{align*}
The assumption that $\Lambda(\mu)$ is right invariant under $\Lambda(\mu_P)$ is natural in that this allows us to connect the trace norm of $\alpha$ with the superoperator's diamond norm. This right invariance is easily enforced by adding a uniformly random Pauli operator before each random gate for any URB scheme, along with the fact that $\Lambda(\mu_P)$ is a projector as shown in~\Cref{sec:app_induced_l1_bound}. Finally, the restriction that $\mathcal T$ is a difference of channels still allows us to bound $\gamma$ in~\Cref{thm:main}. This is because of~\Cref{rmk:difference_of_channels} and the fact that 
\begin{align*}
    V_0(d) = \{ c \cdot(\mathcal N_1-\mathcal N_2) \vert c \in \mathbb{R}, \mathcal N_i \in \mathcal{C}(d)\}.
\end{align*}
To see this, we simply observe that a linear combination of a difference of channels is simply a scaled difference of channels. Without loss of generality, let $c_1, c_2 \geq 0$ and consider
\begin{align*}
    c_1 (\mathcal N_1 - \mathcal N_2) + c_2 (\mathcal N_3 - \mathcal N_4) = (c_1+c_2) \cdot \left[\left(\frac{c_1}{c_1+c_2} \mathcal N_1 + \frac{c_2}{c_1+c_2} \mathcal N_3\right) -\left(\frac{c_1}{c_1+c_2} \mathcal N_2 + \frac{c_2}{c_1+c_2} \mathcal N_4\right)\right].
\end{align*}
As an exercise, we can evaluate $\gamma$ for $\mu_P$:
\begin{align*}
    \max_{i \in [d^2]}\sum_{k,l=1}^{d^2}  \vert\beta_{i,i}^{k,l}(\mu)-\beta_{i,i}^{k,l}(\eta) \vert  &= \max_{i>1}\sum_{k,l=1}^{d^2}  \vert\beta_{i,i}^{k,l}(\mu)-\beta_{i,i}^{k,l}(\eta) \vert \\
    & = \left\vert 1 - \frac{1}{d^2-1} \right\vert + (d^2-2)\cdot \left\vert 0- \frac{1}{d^2-1}\right\vert\\
    & = 2 \frac{d^2-2}{d^2-1}.
\end{align*}
This for $d=2$ is already $4/3>1$, so this does not justify uniform Pauli randomized benchmarking. However, as $d\to\infty$, the upper bound goes to $2$, which is the highest the induced diamond norm distance can be.

\subsection{Approximate Twirls in Other Norms}
\Cref{thm:main} shows that a single exponential decay can be observed when $R$ is a twirling scheme with respect to the induced diamond norm. We give upper bounds on the induced diamond norm, but in general this is difficult to compute. Other norms are easier to compute and characterize, and we give the following corollaries as alternatives to~\Cref{thm:main}. 
\begin{cor}[Single exponential decay with respect to the trace norm]\label{cor:iitrace}
Let $R=(S,\mu,\phi,M,\rho_0)$ be an $(\epsilon,\delta,\gamma)$ twirling scheme with respect to the tuple $(M_0,\phi^*,\mathcal{I},\omega)$ under the trace norm. 
Given that $\delta \leq\frac{1-\gamma}{11}$, there exists $A, B \in \mathbb{R}$ and $p \in [1-2\delta,1]$ such that, there exists $A, B \in \mathbb{R}$ and $p \in [1-2\delta,1]$ such that
\begin{align*}
    \vert p_R(m) - (A + B p^m) \vert \le \epsilon + 16(\gamma+6\delta)^m.
\end{align*}
\end{cor}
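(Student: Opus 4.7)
The plan is to adapt the proof of \Cref{thm:main} directly, replacing the induced diamond norm with the induced trace norm $|||\cdot|||_{\tr}$ at the supersuperoperator level. In the definition of an $(\epsilon,\delta,\gamma)$ twirling scheme, the factorization tolerance $\epsilon$ is fixed in spectral norm and the near-ideal implementation tolerance $\delta$ is fixed in diamond norm, both independently of the twirl-comparison norm, so under this corollary's hypothesis only the third condition $|||\Lambda^*_R - \Lambda^*|||_{\tr} \leq \gamma$ changes. Consequently, the first step of the proof --- the H\"older-type bound yielding $|p_R(m) - \tr[M_0 \cdot \Lambda_R^m(\mathcal{I})(\rho_0)]| \leq \epsilon$ --- carries over verbatim, since it depends only on $\Vert M - M_0\cdot\phi^*\cdots\Vert_\infty$ and on $\phi(g_m)\circ\cdots\circ\phi(g_1)(\rho_0)$ being a state.

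For the spectral analysis, I would invoke \Cref{cor:simplified_perturbation} with $A = \Lambda^*_R$, $X_1 = \Lambda^*$, $E = \Lambda_R - \Lambda^*_R$, but now with $\Vert \cdot \Vert = |||\cdot|||_{\tr}$. The identities $X_1^\dagger = X_1 = X_1^2$ and $X_i A X_j = A_i \delta_{ij}$ are purely algebraic and require no change. The bound $|||A_2|||_{\tr} \leq \gamma$ is the new hypothesis, $A_1 = X_1$ is unchanged, and $|||X_1|||_{\tr} \leq 1$ follows from \Cref{prp:diamond_norm_bound} applied in the trace norm. The one step requiring nontrivial recomputation is $\Vert E\Vert = |||\Lambda_R - \Lambda^*_R|||_{\tr} \leq \delta$; repeating the calculation at \Cref{eqn:enorm} with the trace norm throughout yields
\begin{align*}
|||\Lambda_R - \Lambda^*_R|||_{\tr} &\leq \int_{g\sim \mu} dg \left[\Vert \omega(g)^\dagger - \phi^*(g)\Vert_{\tr} + \Vert \omega(g) - \phi(g)\Vert_{\tr}\right]\\
&\leq \int_{g\sim \mu} dg \left[\Vert \omega(g)^\dagger - \phi^*(g)\Vert_{\diamond} + \Vert \omega(g) - \phi(g)\Vert_{\diamond}\right] \leq \delta,
\end{align*}
where we used $\Vert \cdot\Vert_{\tr} \leq \Vert \cdot\Vert_\diamond$ (from \Cref{eqn:channel_dim}) together with the near-ideal implementation assumption.

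With the perturbation corollary in hand, the remaining bookkeeping is routine and, if anything, cleaner than the diamond-norm version. The error-term estimate becomes
\begin{align*}
\vert\tr[M_0 \cdot L_2 (A'_2)^m R_2(\mathcal{I})(\rho_0)]\vert &\leq \Vert L_2 (A'_2)^m R_2(\mathcal{I})\Vert_{\tr} \leq |||L_2|||_{\tr}\, |||A'_2|||_{\tr}^m\, |||R_2|||_{\tr} \leq 16(\gamma+6\delta)^m,
\end{align*}
using $\Vert M_0\Vert_\infty \leq 1$, $\Vert \rho_0\Vert_1 = 1$, $\Vert \mathcal{I}\Vert_{\tr}\leq 1$, and the conclusions of \Cref{cor:simplified_perturbation}. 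The remaining structural arguments --- that $A'_1$ has rank two with real spectrum, that its eigenvalues have magnitude at most $1$ (using $|||\Lambda_R|||_{\tr}\leq 1$ since $\Lambda_R$ is a convex combination of channel conjugations), that one eigenvalue equals $1$ via Brouwer's fixed-point theorem applied to $\mathcal C_{\phi^*}$, and that $A'_1$ is diagonalizable by the boundedness-of-sequence argument --- are either purely algebraic or depend only on the spectral radius in the chosen norm, and both behave identically under the trace norm. There is therefore essentially no obstacle here: the only genuine content is the short derivation above that invokes the diamond-to-trace inequality to control $\Vert E\Vert$, after which the diamond-norm proof transfers line for line.
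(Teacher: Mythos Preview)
Your proposal is correct and follows exactly the paper's approach: the paper's own proof simply states that the result follows by replacing the diamond norm with the trace norm throughout the proof of \Cref{thm:main}. Your write-up is more explicit (in particular your observation that $\Vert\cdot\Vert_{\tr}\leq\Vert\cdot\Vert_\diamond$ is what controls $\Vert E\Vert$ from the unchanged near-ideal implementation hypothesis), but the argument is the same line for line.
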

\begin{proof}
    This can be readily proved by replacing the diamond norms in the proof of \Cref{thm:main} by the trace norm.
\end{proof}

Given a twirling map, it is often much easier to observe its spectrum through diagonalization of its matrix representation. In contrast, the diamond norm or trace norm are typically computed via semidefinite programming. The following gives an analog of \Cref{thm:main} in terms of a spectral gap, although with an error term involving the dimension of the quantum system.
\begin{cor}[Single exponential decay with respect to the Frobenius norm]\label{cor:spectral}
Let $R=(S,\mu,\phi,M,\rho_0)$ be an $(\epsilon,\delta,\gamma)$ twirling scheme on quantum systems with dimension $d$ with respect to the tuple $(M_0,\phi^*,\mathcal{I},\omega)$ under the $\Vert \cdot \Vert_2$ norm. Let $\delta':=d\cdot \delta$.
Given that $\delta'\leq \frac{1-\gamma}{11}$, there exists $A, B \in \mathbb{R}$ and $p \in [1-2\delta',1]$ such that
\begin{align*}
    \vert p_R(m) - (A + B p^m) \vert \le \epsilon + 16d^{3/2}(\gamma+6\delta')^m.
\end{align*}
Furthermore, in the case that $\phi^*$ or $\phi$ maps to unitary mixtures we can restrict to $\delta'= \sqrt{d}\cdot\delta$.
\end{cor}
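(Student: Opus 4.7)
The plan is to run the proof of \cref{thm:main} essentially verbatim, replacing the induced diamond norm $|||\cdot|||_\diamond$ by the induced Frobenius norm $|||\cdot|||_2$ throughout the perturbation step, and then absorbing dimension-dependent conversion factors at the two places where $|||\cdot|||_2$ is less well-suited to the task than $|||\cdot|||_\diamond$: bounding $\Vert E\Vert$ from the near-ideal implementation assumption, and extracting the scalar $\tr[M_0\cdot L_2(A_2')^m R_2(\mathcal I)(\rho_0)]$ from an operator norm. The factoring step that yields $|p_R(m)-\tr[M_0\cdot \Lambda_R^m(\mathcal I)(\rho_0)]|\leq \epsilon$ uses only the spectral norm on the POVM difference and is entirely unchanged.

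I would first verify assumptions (1)--(7) of \cref{cor:simplified_perturbation} with $\Vert\cdot\Vert=|||\cdot|||_2$. The structural items (1), (2), (5) are identical to the proof of \cref{thm:main}. Item (3) is now the hypothesis $|||\Lambda_R^*-\Lambda^*|||_2\leq \gamma$, and item (6), $|||\Lambda^*|||_2\leq 1$, follows immediately from \cref{prp:diamond_norm_bound} since every unitary channel has unit induced Frobenius norm. The only nontrivial verification is (4): applying the sharper second part of \cref{prp:diamond_norm_bound} gives
\begin{align*}
|||E|||_2\leq \int dg\,\Vert \omega(g)^\dagger-\phi^*(g)\Vert_2\,\Vert\omega(g)\Vert_2 + \int dg\,\Vert\phi^*(g)\Vert_2\,\Vert\omega(g)-\phi(g)\Vert_2.
\end{align*}
Since $\omega(g)$ is unitary, $\Vert\omega(g)\Vert_2=1$; for the quantum channel $\phi^*(g)$, the bound $\Vert\phi^*(g)\Vert_2\leq d$ from \cref{eqn:channel_dim} yields $|||E|||_2\leq d\cdot\delta=:\delta'$, while when $\phi^*$ (or symmetrically $\phi$) maps to mixtures of unitaries one can replace this by $\Vert\phi^*(g)\Vert_2\leq \sqrt d$ (the sub-unit induced Frobenius norm of unitary mixtures, combined with the $\sqrt d$ loss from \cref{eqn:channel_dim}), giving the sharper $\delta'=\sqrt d\cdot \delta$ of the furthermore clause. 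Item (7) is the stated hypothesis $\delta'\leq (1-\gamma)/11$.

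With all assumptions verified, \cref{cor:simplified_perturbation} produces the diagonalization $\Lambda_R=L_1A_1'R_1+L_2A_2'R_2$ with $|||A_2'|||_2\leq \gamma+6\delta'$, $|||L_2|||_2\,|||R_2|||_2\leq 16$, and $A_1'$ of rank at most $2$ with all eigenvalues $2\delta'$-close to $1$. The argument that $A_1'$ has exactly one eigenvalue equal to $1$ (via Brouwer's fixed-point theorem applied to $\mathcal C_{\phi^*}$), is diagonalizable even in the degenerate case, and has a second real eigenvalue $p\in[1-2\delta',1]$---so that the main term $\tr[M_0\cdot L_1(A_1')^m R_1(\mathcal I)(\rho_0)]$ is exactly $A+Bp^m$---is independent of the norm used in the perturbation step and carries over verbatim. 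The one genuinely new computation is the error-term bound, obtained by chaining H\"older's inequality, \cref{eq:hermitian_norm_inequality}, the definition of the induced Frobenius norm on superoperators, the standard fact $\Vert\mathcal N\Vert_2\leq \Vert\mathcal N\Vert_{SO}$ (operator norm bounded by Frobenius norm), submultiplicativity of $|||\cdot|||_2$ with respect to $\Vert\cdot\Vert_{SO}$, and \cref{eq:SO_tr_inequality} applied to the intermediate channel:
\begin{align*}
|\tr[M_0\cdot L_2(A_2')^m R_2(\mathcal I)(\rho_0)]|
&\leq \Vert L_2(A_2')^m R_2(\mathcal I)(\rho_0)\Vert_1\leq \sqrt d\,\Vert L_2(A_2')^m R_2(\mathcal I)(\rho_0)\Vert_F\\
&\leq \sqrt d\,\Vert L_2(A_2')^m R_2(\mathcal I)\Vert_{SO}\\
&\leq \sqrt d\,|||L_2|||_2\,|||A_2'|||_2^m\,|||R_2|||_2\,\Vert\mathcal I\Vert_{SO}\leq 16\,d^{3/2}(\gamma+6\delta')^m,
\end{align*}
where we used $\Vert\rho_0\Vert_F\leq 1$ between the first two lines and $\Vert\mathcal I\Vert_{SO}\leq d$ in the last step.

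The main obstacle is pure bookkeeping: keeping track of which of the many norms (on Hermitian matrices, on superoperators, and on supersuperoperators) appears at each step and choosing the tightest available interconversion from \cref{eq:hermitian_norm_inequality}, \cref{eqn:channel_dim}, \cref{eq:SO_tr_inequality}, and \cref{eqn:twirl_dim}, so that the final dimension penalty is no worse than $d^{3/2}$ on the error coefficient and $d$ (or $\sqrt d$ in the unitary-mixture case) in the effective $\delta'$. There is no conceptual departure from the proof of \cref{thm:main}; the unitary-mixture refinement comes for free by using the tighter Frobenius-norm bound for mixtures of unitaries in place of the general channel bound.
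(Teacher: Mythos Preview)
Your overall strategy is exactly the paper's: rerun the proof of \cref{thm:main} with $|||\cdot|||_2$ in place of $|||\cdot|||_\diamond$, paying dimension penalties only at the $\Vert E\Vert$ bound and at the error-term extraction. Your error-term chain (through $\Vert\cdot\Vert_F$ on states and $\Vert\cdot\Vert_{\mathrm{SO}}$ on superoperators) is a slightly different route than the paper's one-line use of $|||\cdot|||_{\tr}\leq d^{3/2}|||\cdot|||_2$ from \cref{eqn:twirl_dim}, but it is correct and lands on the same $16d^{3/2}$ coefficient.

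There is, however, a genuine bookkeeping gap in your $\Vert E\Vert$ bound. After arriving at
\[
|||E|||_2 \leq \int dg\,\Vert\omega(g)^\dagger-\phi^*(g)\Vert_2\,\Vert\omega(g)\Vert_2 + \int dg\,\Vert\phi^*(g)\Vert_2\,\Vert\omega(g)-\phi(g)\Vert_2,
\]
you still need to convert the $\Vert\cdot\Vert_2$ norms of the \emph{differences} back to the diamond norms appearing in the $\delta$-hypothesis (the near-ideal implementation condition is always stated in diamond norm; only the $\gamma$-condition changes). You implicitly use $\Vert\cdot\Vert_2\leq\Vert\cdot\Vert_\diamond$, which is false in general: the replacement channel $\rho\mapsto\tr[\rho]\,|0\rangle\langle 0|$ has diamond norm $1$ but induced Frobenius norm at least $\sqrt d$. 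The correct accounting, as in the paper, uses the tighter $\Vert\phi^*(g)\Vert_2\leq\sqrt d$ (from $\sqrt{d^{-1}}\Vert\cdot\Vert_2\leq\Vert\cdot\Vert_{\tr}=1$) rather than your loose $\leq d$, together with $\Vert\cdot\Vert_2\leq\sqrt d\,\Vert\cdot\Vert_\diamond$ on each difference; the two $\sqrt d$ factors then combine to give $d\delta$. With your $\Vert\phi^*(g)\Vert_2\leq d$ you would end up at $d^{3/2}\delta$, not $d\delta$. The unitary-mixture refinement is similarly misstated: the point is that $\Vert\phi^*(g)\Vert_2\leq 1$ (not $\sqrt d$) for mixtures of unitaries, so only the single $\sqrt d$ from the difference conversion survives and one obtains $\delta'=\sqrt d\,\delta$.
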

\begin{proof}
There are only two places in the proof of \Cref{thm:main} where the diamond norm is explicitly used, and we replace them with the spectral norms. The first one is at \Cref{eqn:enorm}, to bound the norm of the perturbation $E$. Here we instead have
\begin{align}
        \Vert E\Vert &= |||\Lambda_R-\Lambda^*_R|||_2\nonumber\\
        &=||| \int_{g\sim \mu}dg\big(\omega(g)^\dagger\circ\cdot\circ \omega(g) - \phi^*(g)\circ\cdot\circ \phi(g)\big)|||_2\nonumber\\
        &=||| \int_{g\sim \mu}dg\big((\omega(g)^\dagger-\phi^*(g))\circ\cdot\circ \omega(g) + \phi^*(g)\circ\cdot\circ (\omega(g)-\phi(g))\big)|||_2\nonumber\\
        &\leq\int_{g\sim \mu}dg||| (\omega(g)^\dagger-\phi^*(g))\circ\cdot\circ \omega(g)|||_2 + \int_{g\sim \mu}dg|||\phi^*(g)\circ\cdot\circ (\omega(g)-\phi(g))|||_2\nonumber\\
        &\leq\int_{g\sim \mu}dg\Vert \omega(g)^\dagger-\phi^*(g)\Vert_2\Vert \omega(g)\Vert_2 + \int_{g\sim \mu}dg\Vert\phi^*(g)\Vert_2\Vert \omega(g)-\phi(g)\Vert_2\nonumber\\
        &\leq\sqrt{d}(\int_{g\sim \mu}dg\Vert \omega(g)^\dagger-\phi^*(g)\Vert_2 + \int_{g\sim \mu}dg\Vert \omega(g)-\phi(g))\Vert_2)\nonumber\\
        &\leq d\big(\int_{g\sim \mu}dg\Vert \omega(g)^\dagger-\phi^*(g)\Vert_\diamond + \int_{g\sim \mu}dg\Vert \omega(g)-\phi(g))\Vert_\diamond\big)\nonumber\\
        &\leq d\delta,\label{eqn:enorm_spec}
        \end{align}
where in the third inequality we used the fact that  $\Vert\omega(g)\Vert_2=1\leq\sqrt{d}$ and $\phi^*(g)$ have 2 norm upper bounded by $\sqrt{d}$.  In case that $\phi^*(g)$ is a unitary mixture (similarly for $\phi(g)$), the bound can further be tightened to $\sqrt{d}\delta$ by observing $\Vert\phi^*(g)\Vert_2\leq \Vert\phi^*(g)\Vert_{\tr{}}=1$.
The second one is at \Cref{eqn:errnorm}. Here we have
\begin{align}
        \tr[M_0\cdot L_2(A'_2)^mR_2(\mathcal{I})(\rho_0)]&\leq \Vert L_2(A'_2)^mR_2(\mathcal{I})\Vert_{\tr}\nonumber\\
        &\leq ||| L_2(A'_2)^mR_2|||_{\tr}\nonumber\\
        &\leq d^{3/2}\cdot ||| L_2(A'_2)^mR_2|||_2\nonumber\\
        &\leq d^{3/2}\cdot |||L_2|||_2\cdot|||(A'_2)^m|||_2\cdot||| R_2|||_2\nonumber\\
        &\leq 16d^{3/2}\cdot \Vert A'_2\Vert_2^m\nonumber\\
        &\leq 16d^{3/2}\cdot (\gamma+6\delta')^m\label{eqn:errnorm_spec}.
    \end{align}
\end{proof}
As in the case of~\Cref{thm:main}, the crucial property we need to establish is $\gamma < 1$, but now with respect to these other norms.

\subsection{Interpretation of the Decay Rate}
\label{subsec:decay_exp}
\subsubsection{Gauge Invariance}
RB schemes are known to be \emph{gauge invariant}: that an implementation map $\phi$ only needs to be close to a conjugation of the ideal map $\omega$ in order to exhibit the desired exponential decay, even if $\phi$ and $\omega$ were far apart under diamond norm. This is still true in the URB setting, and gives us the following alternative definition for the near-ideal implementation.
\begin{dfn}[Near-ideal implementation under gauge]
A URB scheme $R = (S,\mu,\phi,M,\rho_0)$ with $\epsilon$ approximate factoring of $M$ into $(M_0,\phi^*,\mathcal{I})$ is said to have a $(\delta,\kappa)$ near-ideal implementation, if there exists an \emph{ideal map} $\omega:S\rightarrow \mathcal{C}(d)$, and \emph{gauges} $\mathcal{U},\mathcal{V}$ being invertible real superoperators, such that $$||| \mathcal{U}|||_\diamond||| \mathcal{U}^{-1}|||_\diamond||| \mathcal{V}|||_\diamond||| \mathcal{V}^{-1}|||_\diamond\leq \kappa,$$ and 
$$\mathbb{E}_{g\sim \mu}\big[\Vert \mathcal{U}\circ\phi(g)\circ\mathcal{U}^{-1}-\omega(g)\Vert_{\diamond} +\Vert \mathcal{V}\circ\phi^*(g)\circ\mathcal{V}^{-1}-\omega(g)^{\dagger}\Vert_{\diamond} \big]\leq \delta.$$
\end{dfn}

With the definition of near-ideal implementation under gauge, we can further define that a URB scheme $R$ is a $(\epsilon,\delta, \gamma,\kappa)$ twirling scheme under gauge if it has such a near-ideal implementation. The gauge free version corresponds to the case that $\mathcal{U}=\mathcal{V}=\mathrm{id}$ and $\kappa=1$. Our single exponential decay results for different norms can be readily applied to twirling schemes under gauge. We have the following.
\begin{cor}[Single exponential decay under gauge]\label{cor:gauge}
Let $R=(S,\mu,\phi,M,\rho_0)$ be an $(\epsilon,\delta,\gamma,\kappa)$ twirling scheme with respect to the tuple $(M_0,\phi^*,\mathcal{I},\omega,\mathcal{U},\mathcal{V})$, under the diamond norm or the trace norm.
Given that $\delta \leq\frac{1-\gamma}{11}$, there exists $A, B \in \mathbb{R}$ and $p \in [1-2\delta,1]$ such that
\begin{align*}
    \vert p_R(m) - (A + B p^m) \vert \le \epsilon + 16\kappa(\gamma+6\delta)^m.
\end{align*}
Alternatively, let $R=(S,\mu,\phi,M,\rho_0)$ be an $(\epsilon,\delta,\gamma,\kappa)$ twirling scheme on quantum systems with dimension $d$ with respect to the tuple $(M_0,\phi^*,\mathcal{I},\omega,\mathcal{U},\mathcal{V})$ under the Frobenius norm. Let $\delta':=d\cdot \delta$.
Given that $\delta'\leq \frac{1-\gamma}{11}$, there exists $A, B \in \mathbb{R}$ and $p \in [1-2\delta',1]$ such that
\begin{align*}
    \vert p_R(m) - (A + B p^m) \vert \le \epsilon + 16\kappa d^{3/2}(\gamma+6\delta')^m.
\end{align*}
Furthermore, in the case that $\phi^*$ or $\phi$ maps to unitary mixtures we can restrict to $\delta'= \sqrt{d}\cdot\delta$.
\end{cor}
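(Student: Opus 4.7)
The plan is to reduce the gauge-invariant setting to the gauge-free proofs of~\cref{thm:main,cor:iitrace,cor:spectral} via an auxiliary conjugation of the physical twirling map. Define the gauge-transformed implementations
$$\tilde{\phi}(g):=\mathcal{U}\circ\phi(g)\circ\mathcal{U}^{-1},\qquad \tilde{\phi}^*(g):=\mathcal{V}\circ\phi^*(g)\circ\mathcal{V}^{-1},$$
the corresponding physical twirling map $\tilde{\Lambda}_R(\mathcal{N}):=\int dg\,\tilde{\phi}^*(g)\circ\mathcal{N}\circ\tilde{\phi}(g)$, and the conjugation twirling map $\Phi:\mathcal{N}\mapsto\mathcal{V}\circ\mathcal{N}\circ\mathcal{U}^{-1}$ on $V(d)$, with inverse $\Phi^{-1}:\mathcal{N}\mapsto\mathcal{V}^{-1}\circ\mathcal{N}\circ\mathcal{U}$. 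A direct computation, in which adjacent $\mathcal{U}^{-1}\mathcal{U}$ and $\mathcal{V}^{-1}\mathcal{V}$ pairs cancel, yields the key similarity
$$\tilde{\Lambda}_R=\Phi\circ\Lambda_R\circ\Phi^{-1}.$$
By the near-ideal-implementation-under-gauge hypothesis, $\tilde{\phi}$ and $\tilde{\phi}^*$ satisfy the gauge-free near-ideal condition with the same parameter $\delta$, so re-running~\cref{eqn:enorm} on $\tilde{\Lambda}_R-\Lambda^*_R$ in place of $\Lambda_R-\Lambda^*_R$ yields $|||\tilde{\Lambda}_R-\Lambda^*_R|||_\diamond\le\delta$; meanwhile $\Lambda^*_R$ is unchanged and remains a $\gamma$-approximate twirl.

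First I would apply the approximate-factoring step verbatim from the proof of~\cref{thm:main} (it only used that each $\phi(g_i)$ is a channel, which is unaffected by the gauge) to obtain $|p_R(m)-\tr[M_0\cdot\Lambda_R^m(\mathcal{I})(\rho_0)]|\le\epsilon$. Invoking~\cref{cor:simplified_perturbation} on $\tilde{\Lambda}_R$ with the same hypotheses gives the decomposition $\tilde{\Lambda}_R=\tilde{L}_1 A'_1\tilde{R}_1+\tilde{L}_2 A'_2\tilde{R}_2$ with $A'_1$ of rank $2$ and eigenvalues $2\delta$-close to $1$, $\Vert A'_2\Vert_\diamond\le\gamma+6\delta$, and $|||\tilde{L}_2|||_\diamond|||\tilde{R}_2|||_\diamond\le 16$. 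Conjugating by $\Phi$,
$$\Lambda_R^m=\Phi^{-1}\tilde{L}_1(A'_1)^m\tilde{R}_1\Phi+\Phi^{-1}\tilde{L}_2(A'_2)^m\tilde{R}_2\Phi.$$
The first summand yields the single exponential decay $A+Bp^m$ with $p\in[1-2\delta,1]$ by the same eigenvalue-$1$ argument as in~\cref{thm:main}: Brouwer's theorem produces a fixed state $\rho^*$ of $\mathcal{C}_{\phi^*}$, lifting to an eigenchannel $\mathcal{E}_{\rho^*}$ of $\Lambda_R$ with eigenvalue $1$; then $\Phi(\mathcal{E}_{\rho^*})$ is an eigenoperator of $\tilde{\Lambda}_R$ with the same eigenvalue, and since conjugate operators share spectra this eigenvalue belongs to the $A'_1$ block. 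The reality argument and the bounded-growth diagonalizability argument transfer unchanged because $\tr[M_0\cdot\Lambda_R^m(\mathcal{I})(\rho_0)]$ is still uniformly bounded in $m$.

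The error summand picks up the advertised $\kappa$ factor via submultiplicativity of the induced norms:
$$|\tr[M_0\cdot\Phi^{-1}\tilde{L}_2(A'_2)^m\tilde{R}_2\Phi(\mathcal{I})(\rho_0)]|\le|||\Phi^{-1}|||_\diamond\,|||\Phi|||_\diamond\cdot 16(\gamma+6\delta)^m\le 16\kappa(\gamma+6\delta)^m,$$
where $|||\Phi|||_\diamond\le\Vert\mathcal{V}\Vert_\diamond\Vert\mathcal{U}^{-1}\Vert_\diamond$ and analogously for $\Phi^{-1}$, so the four gauge factors multiply to at most $\kappa$. The trace-norm variant is identical after replacing diamond by trace throughout. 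For the Frobenius-norm variant, the same template applies but inherits the $d\delta$ perturbation bound of~\cref{eqn:enorm_spec} and the $d^{3/2}$ induced-norm conversion of~\cref{eqn:errnorm_spec}, yielding $16\kappa d^{3/2}(\gamma+6\delta')^m$ with $\delta'=d\delta$, improving to $\sqrt{d}\,\delta$ when $\phi$ or $\phi^*$ are unitary mixtures. The main obstacle is purely bookkeeping: tracking how the similarity $\tilde{\Lambda}_R=\Phi\circ\Lambda_R\circ\Phi^{-1}$ interacts with each of the three operator-norm frameworks, and ensuring the Brouwer step still produces an eigenoperator of $\tilde{\Lambda}_R$ even though $\tilde{\phi}^*$ need not itself be a channel (the existence is obtained by $\Phi$-transport of the channel eigenoperator of $\Lambda_R$).
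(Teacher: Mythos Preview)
Your proposal is correct and follows essentially the same approach as the paper: both introduce the gauge-corrected twirling map $\tilde{\Lambda}_R$, apply the perturbation machinery of \cref{cor:simplified_perturbation} to it in place of $\Lambda_R$, and then absorb the four gauge norms into the error-term bound to produce the extra factor $\kappa$. Your explicit conjugation map $\Phi$ with $\tilde{\Lambda}_R=\Phi\circ\Lambda_R\circ\Phi^{-1}$ is simply a clean repackaging of the paper's identity $\tr[M_0\cdot\Lambda_R^m(\mathcal{I})(\rho_0)]=\tr[M_0\cdot\mathcal{V}^{-1}\circ\tilde{\Lambda}_R^m(\mathcal{V}\circ\mathcal{I}\circ\mathcal{U}^{-1})\circ\mathcal{U}(\rho_0)]$, and your spectrum-transport argument for the eigenvalue-$1$ block matches what the paper abbreviates as ``similar to the proof of \cref{thm:main}.''
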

\begin{proof}
    The proofs closely mimic the proofs for \Cref{thm:main}, \Cref{cor:iitrace} and \Cref{cor:spectral}. The central difference is to substitute the use of $\Lambda_R$ with a \emph{gauge-corrected twirling map}
    $$\tilde{\Lambda}_R:=\int_{g\sim \mu}dg\mathcal{V}\circ\phi^*(g)\circ\mathcal{V}^{-1}\circ\cdot\circ \mathcal{U}\circ\phi(g)\circ\mathcal{U}^{-1},$$
    with the observation that
    $$\tr[M_0\cdot \Lambda_R^m(\mathcal{I})(\rho_0)]= \tr[M_0\cdot \mathcal{V}^{-1} \circ\tilde{\Lambda}_R^m(\mathcal{V}\circ\mathcal{I}\circ\mathcal{U}^{-1})\circ\mathcal{U}(\rho_0)].$$
    By treating $\tilde{\Lambda}_R$ as a perturbed version of $\Lambda^*_R$, we can similarly get a diagonalization $(L_1,L_2,R_1,R_2,A'_1,A'_2)$ of $\tilde{\Lambda}_R$. This gives
    \begin{align}
        &\tr[M_0\cdot \Lambda^m_R(\mathcal{I})(\rho_0)] \nonumber\\
        &=\tr[M_0\cdot\mathcal{V}^{-1}\circ L_1(A'_1)^mR_1(\mathcal{V}\circ\mathcal{I}\circ\mathcal{U}^{-1})\circ\mathcal{U}(\rho_0)]\nonumber\\
        &+\tr[M_0\cdot\mathcal{V}^{-1}\circ L_2(A'_2)^mR_2(\mathcal{V}\circ\mathcal{I}\circ\mathcal{U}^{-1})\circ\mathcal{U}(\rho_0)].\label{eqn:two_terms_gauge}
    \end{align}
    The second term is an exponential decay term; we have
    \begin{align*}
        &\tr[M_0\cdot\mathcal{V}^{-1}\circ L_2(A'_2)^mR_2(\mathcal{V}\circ\mathcal{I}\circ\mathcal{U}^{-1})\circ\mathcal{U}(\rho_0)]\\
        \leq &||| \mathcal{V}^{-1}\circ L_2(A'_2)^mR_2(\mathcal{V}\circ\mathcal{I}\circ\mathcal{U}^{-1})\circ\mathcal{U} |||_\diamond\\
        \leq &||| \mathcal{V}^{-1}|||_\diamond\cdot |||\mathcal{U} |||_\diamond\cdot||| L_2(A'_2)^mR_2|||_\diamond\cdot||| \mathcal{V}\circ\mathcal{I}\circ\mathcal{U}^{-1}|||_\diamond\\
        \leq &\kappa  \cdot||| L_2(A'_2)^mR_2|||_\diamond,
    \end{align*}
    from which \Cref{eqn:errnorm} or \Cref{eqn:errnorm_spec} apply. 
    
    The analysis of the first term is similar to that of the proof of \Cref{thm:main}.
\end{proof}
\subsubsection{Relating the Decay Rate to the Average Fidelity}
The decay rate in an RB or URB scheme is often believed to indicate the average ``quality'' of a collection of gate implementations. Mathematically, it corresponds to the second largest eigenvalue (guaranteed to be real when the URB scheme is a twirling scheme) of the twirl operator $\Lambda_R$. Relating such an eigenvalue with an indicator of the average ``quality'' of gates can be non-obvious and sometimes tricky~\cite{proctor2017randomized}. Of course, we can always avoid this question by defining a URB scheme's decay rate as a figure of merit by fiat, but we can consider some cases where there is a relatively clear connection between the average fidelity and the second largest eigenvalue.

During the proof of~\Cref{thm:main}, we see that the eigen-channel with eigenvalue 1 is always a replacement channel $\mathcal{E}_{\rho^*}$. All other eigen-superoperators, including the one corresponding to the second largest eigenvalue, lie in the space of channel differences $V_0(d)$. To see this, note that any eigne-channel $\mathcal{N}\in\mathcal{C}(d)$ of $\Lambda_R$ must be of eigenvalue $1$ (otherwise $\Lambda_R(\mathcal{N})$ is no longer a channel) and consequently corresponds to an eigen-superoperator $\mathcal{N}-\mathcal{E}_{\rho^*}\in V_0(d)$. Let $\mathcal{D}_R\in V_0(d)$ be the eigen-superoperator corresponding to the eigenvalue $p_R=\lambda_2$.

An easy case that $p$ has a natural interpretation is when the line $\{\mathcal{E}_{\rho^*}+\lambda\cdot \mathcal{D}_R|\lambda\in\mathbb{R}\}$ contains a unitary channel $\mathcal{U}$. In this case we can perform a unitary gauge transformation $\phi(g)\mapsto \mathcal{U}\circ\phi(g)\circ\mathcal{U}^{-1}$ so that we can assume without loss of generality that the unitary channel is the identity channel $\mathrm{id}$.\footnote{This can be done in general when the line $\{\mathcal{E}_{\rho^*}+\lambda\cdot \mathcal{D}_R|\lambda\in\mathbb{R}\}$ contains an invertible superoperator $\mathcal{R}$. However in this case the gauge transformation $\phi(g)\mapsto \mathcal{R}\circ\phi(g)\circ\mathcal{R}^{-1}$ may no longer preserve channels, making the physical interpretation of the decay rate less physical.} Then
$$\Lambda_R(\mathrm{id}) = p\cdot \mathrm{id} + (1-p)\cdot \mathcal{E}_{\rho^*}.$$
The decay rate $p$ is then related to the fidelity of the channel $\Lambda_R(\mathrm{id}) = \int_{g\sim u}dg \phi^*(g)\phi(g)$, which in turn is the average fidelity of the channels $\phi^*(g)\phi(g)$ under the probability distribution $\mu$, specifically,
$$p_R = \frac{d\mathbb{E}_{g\sim \mu} [F_{avg}(\phi^*(g)\phi(g))]-1}{d-1},$$
where the average fidelity of a channel $\mathcal{C}$ is defined as $$F_{avg}(\mathcal{C}):=\int_{U}dUF
\big(U|0\rangle\langle 0|U^\dagger, \mathcal{C}(U|0\rangle\langle 0|U^\dagger)\big)$$ over Haar random unitaries $U$. One example of an error model satisfying this condition is the \emph{gate-dependent replacement model}: that there exists probabilities $p(g)$ and states $\rho(g)$ for each $g$, such that
$$\phi^*(g)\phi(g)=p(g)\cdot \mathrm{id} + (1-p(g))\cdot \mathcal{E}_{\rho(g)},$$ $$\int_{g\in\mu}dg(1-p(g))\rho(g)\propto \rho^*.$$

For gate independent noise models $$\phi^*(g) = \mathcal{N}_{L}^{(1)}\circ \omega(g)^\dagger\circ \mathcal{N}_L^{(2)}, \phi(g) = \mathcal{N}_{R}^{(1)}\circ \omega(g)\circ \mathcal{N}_R^{(2)},$$ we can without loss of generality consider either \emph{in-between noise models}
$$\phi^*(g) = \omega(g)^\dagger\circ \mathcal{N}_L, \phi(g) = \mathcal{N}_{R}\circ \omega(g)$$
or \emph{sandwiched noise models}
$$\phi^*(g) = \mathcal{N}_L\circ\omega(g)^\dagger, \phi(g) = \omega(g)\circ\mathcal{N}_{R}$$
through gauge transformation, where $\mathcal{N}_L:=\mathcal{N}_L^{(2)}\circ \mathcal{N}_{L}^{(1)}$ and $\mathcal{N}_{R}:=\mathcal{N}_R^{(2)}\circ \mathcal{N}_{R}^{(1)}$.
\begin{itemize}
    \item For the \emph{in-between noise model}, we have $$\Lambda_R(\mathrm{id}) = \Lambda^*_R(\mathcal{N}_L\circ \mathcal{N}_R).$$
    In the case that $\mu$ gives rise to a unitary 2-design, or equivalently $\Lambda_R^*=\Lambda^*$, we have
    $$\Lambda_R(\mathrm{id}) = \Lambda^*_R(\mathcal{N}_L\circ \mathcal{N}_R)=p\cdot \mathrm{id} + (1-p)\cdot\mathrm{dep},$$
    where $p_R = \frac{d\cdot F_{avg}(\mathcal{N}_L\circ\mathcal{N}_R)-1}{d-1}$. This recovers the case that RB on unitary two-designs extract the average fidelity of gate-independent noises.
    \item For the \emph{sandwiched noise model}, we have $$\Lambda_R(\mathrm{id}) = \mathcal{N}_L\circ\mathcal{N}_R.$$
    Then the RB exponent $p_R$ relates to the average fidelity of $\mathcal{N}_L\circ\mathcal{N}_R$ if the following holds:
    $$\exists p\in[0,1], \mathcal{N}_L\circ\mathcal{N}_R=p\cdot\mathrm{id} + (1-p)\cdot \mathcal{E}_{\rho^*},$$
    where $\rho^*$ is the eigenstate of the channel $\int_{g\sim \mu}dg\phi^*(g)=\mathcal{N}_L\circ\int_{g\sim\mu}dg\omega(g)^\dagger.$ This also guarantees that $p_R=p$.
\end{itemize}

To summarize, for gate-independent noise models, the RB exponent extracted from the experiment is not always determined by the average fidelity of the error channels, partially because that the ideal twirl is not always a full twirl. Special cases where there is such a determination is when the ideal twirl is a full twirl, and when the noise channels adds up to an replacement channel with the maximal eigenstate.

\subsection{Analysis on Robustness of Data Fitting}
\label{subsec:robustness}
In this section, we will prove that a sufficiently small perturbation to an exponential decay curve will not significantly affect the extracted decay rate. More specifically, we will prove the following lemma:
\begin{lem}
\label{lem:robustness}
Let $C_0(m)=A_0 \alpha^m +B_0$ be an exponential decay curve. Given $\epsilon \ll A_0\alpha^{M}/2$, if there is another exponential curve $C_1(m)=A_1 \beta^m+B_1$ such that $\abs{C_1(m)-C_0(m)} \leq \epsilon$ for all integers $m\geq M$, then the decay rates $\beta$ and $\alpha$ must satisfy $$\abs{\beta-\alpha}\leq 2 \epsilon\frac{\alpha+1}{A_0 \alpha^{M}-2\epsilon}.$$
\end{lem}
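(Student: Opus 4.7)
The plan is to eliminate the constant offsets $B_0, B_1$ so that the approximation reduces to comparing the pure exponentials $A_0 \alpha^m$ and $A_1 \beta^m$. Since both curves are bona fide decay curves, $|\alpha|, |\beta| < 1$, so taking $m \to \infty$ in the hypothesis $|C_0(m) - C_1(m)| \le \epsilon$ yields $|B_0 - B_1| \le \epsilon$. Feeding this back via the triangle inequality gives
$$|A_0 \alpha^m - A_1 \beta^m| \le 2\epsilon \quad \text{for every integer } m \ge M.$$

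Specializing to $m = M$ and $m = M+1$ produces two constraints, which I would combine through the elementary identity
$$(\beta - \alpha)\, A_1 \beta^M = (A_1 \beta^{M+1} - A_0 \alpha^{M+1}) + \alpha(A_0 \alpha^M - A_1 \beta^M),$$
obtained by rewriting $\beta \cdot A_1 \beta^M = A_1 \beta^{M+1}$ and adding and subtracting $\alpha A_0 \alpha^M$. The triangle inequality together with the two bounds from the previous step then gives $|\beta - \alpha|\,|A_1 \beta^M| \le 2\epsilon(1 + \alpha)$. The reverse triangle inequality yields $|A_1 \beta^M| \ge A_0 \alpha^M - 2\epsilon$, which is positive under the smallness hypothesis $\epsilon \ll A_0 \alpha^M/2$; dividing through produces exactly the advertised bound.

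The main subtlety is avoiding an unwanted factor of $|\alpha - 1|$ in the denominator. A more naive algebraic approach that cancels the offsets using consecutive differences $C_i(M+1) - C_i(M)$ — equivalent to extracting the decay rate via the ratio $(C_0(M+2) - C_0(M+1))/(C_0(M+1) - C_0(M))$ — does work, but only yields a bound of the form $|\beta - \alpha| \le 2\epsilon(\alpha+1)/(A_0 \alpha^M|\alpha-1| - 2\epsilon)$, which degrades badly in precisely the high-fidelity regime $\alpha \to 1$ that is relevant for RB. The asymptotic step in the first paragraph is what sidesteps this degradation: exploiting that both curves are genuine decays lets one reduce directly to pure exponentials without paying an $|\alpha - 1|^{-1}$ factor, and the rest of the argument is just careful bookkeeping with the triangle inequality.
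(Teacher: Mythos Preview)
Your proof is correct and takes essentially the same approach as the paper: both eliminate the constant offsets by letting $m\to\infty$ to obtain $|A_0\alpha^m - A_1\beta^m|\le 2\epsilon$, and then compare the values at $m=M$ and $m=M+1$. The only difference is cosmetic---the paper assumes without loss of generality that $\beta\ge\alpha$ and argues via the chain $\beta(A_0\alpha^M-2\epsilon)\le A_1\beta^{M+1}\le A_0\alpha^{M+1}+2\epsilon$, whereas your identity $(\beta-\alpha)A_1\beta^M=(A_1\beta^{M+1}-A_0\alpha^{M+1})+\alpha(A_0\alpha^M-A_1\beta^M)$ together with the triangle and reverse triangle inequalities handles both orderings at once.
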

\begin{proof}
Since $|B_0-B_1|=\lim_{m \rightarrow\infty}|C_0(m)-C_1(m)|\leq \epsilon$, we have $\abs{A_1 \beta^m-A_0 \alpha^m}\leq 2\epsilon$ for any integer $m\geq M$. 

Assume without loss of generality that $1>\beta\geq \alpha>0$. Then we have
\begin{equation}
\begin{split}
&    \beta (A_0 \alpha^{M}-2\epsilon)\leq \beta A_1\beta^{M}=A_1 \beta^{M+1} \leq A_0 \alpha^{M+1}+2\epsilon\\
\rightarrow&  \beta-\alpha \leq \frac{A_0 \alpha^{M+1}+2\epsilon}{A_0 \alpha^{M}-2\epsilon}-\alpha=\frac{2\epsilon \alpha+2\epsilon}{A_0 \alpha^{M}-2\epsilon}.
\end{split}
\end{equation}
\end{proof}

\Cref{lem:robustness} bounds the deviation of the decay rate, given that the fitted curve is $O(\epsilon/A_0)$-close to a certain ground truth single exponential decay curve under the $\ell_{\infty}$-norm. In our case, such $\ell_\infty$ deviations come from the error terms $\epsilon + 16(\gamma+6\delta)^m$, standard deviation from sampling $\Theta\left(\frac{\log k}{\sqrt{K}}\right)$ ($k$ being the number of sequence lengths chosen and $K$ the number of repeats for each sequence length), and imperfect fitting algorithms. Even though finite $\epsilon$ makes the error term non-vanishing, we conclude that unless the number of samples for each sequence length exceeds $\epsilon^{-2}$, the inaccuracy of decay rate extraction mainly comes from stochastic fluctuations of the random experiments.

\section{Discussion and Open Questions}
\label{sec:discussion}
In this paper we propose a new framework for randomized benchmarking, which we call the URB framework. To formulate this we take major components of widely used benchmarking schemes such as group-based RB and linear XEB and then generalize them. With this generalization we can express a much wider variety of schemes that could possibly overcome the shortcomings of current schemes, such as scalability. For a certain class of URB schemes, which we call twirling schemes, we can prove an exponential decay using~\Cref{thm:main}, a hallmark feature of randomized benchmarking schemes. We saw that the crucial property we need to establish to apply~\Cref{thm:main} is for the twirling operator $\Lambda_R^*$ to be a $\gamma$-approximate twirl for $\gamma<1$. We gave upper bounds for $\gamma$ and give alternative exponential decay results for easier bounds to compute, such as the $|||\cdot |||_2$ norm. We also discuss issues of gauge invariance and the interpretation of the decay rate. 

One recent RB variant, called randomized benchmarking with mirror circuits~\cite{proctor2021scalable}, aims to extract system fidelity information from shallow Clifford circuits. The random gate sequence consists of 
\begin{enumerate}
    \item Clifford layers with a mirror structure, i.e. the second half of the Clifford circuits are inversions of the first half
    \item Uniform Pauli gates interleaving the Clifford layers
    \item Single-qubit Cliffords at the beginning and the end of the circuits.
\end{enumerate}
 This falls into our RB framework as all gates after the first half of Clifford layers can be seen as the post-processing POVM. However, with the presence of the interleaving Pauli gates, it is impossible to separate out ``inverting maps'' that  locally approximately inverts the implementation maps. Hence, this scheme does not readily yield to our analysis. In this scheme, the Pauli error is corrected on a global scale after the final measurement. We leave it to future work to analyze such schemes involving complex correlations between the random gates.

We leave some open questions for future work. Obviously, the value of our framework is really proved by novel benchmarking schemes that have better performance by bypassing the group-based requirement. One work towards this direction is~\cite{chen2022linear}, which proposes performing linear XEB with shallow Clifford circuits so that the classical simulation step is much more scalable to larger numbers of qubits. They simulate systems with more than a thousand qubits to showcase this scalability. An earlier work~\cite{boone2019randomized} also considered a variant of RB that does not form a group.

The upper bounds on $\gamma$ are somewhat loose and could be analyzed more systematically. In particular, it would be interesting to express the induced diamond norm as an SDP so that it can be bounded or even exactly calculated. Furthermore, various inequalities of the proof of~\Cref{thm:main} could be tightened, such as improving the factor of 11 in the $\delta \leq (1-\gamma)/11$ bound and a possible typicality argument for the bounding of $\Vert (A_2')^m\Vert$. Also, our result on robustness of data fitting naturally leads to a question of sample complexity, which we also leave for future work. As discussed above, the interpretation of the decay rate is still a matter of discussion. Also, the exponential decay of the $F(m)$ quantity for linear XEB defined in~\Cref{eq:linear_xeb} is still to be further investigated. Lastly, from our work we recognize the value of a systematic formulation of norms for multilinear algebra as applied to quantum information science.

Another direction to pursue is to recover matrix exponential decay results for group-based RB~\cite{wallman2018randomized, helsen2020general, merkel2021randomized, kong2021framework}. There are two major differences between our result and Fourier-analysis based results for group-based RB. First, we treat the final recovery gate as an imperfect implementation of the perfect recovery gate, which enables factorization and hence the study of twirling maps, but this introduces a constant error term which is not present in the Fourier-based analyses. Intuitively, this is because our twirling map approach fails to capture the correlation between the noise in the final recovery gate with those of the previously applied gates due to taking the expectation over the group. This removes the constant error term in Fourier analysis-based approaches. We leave the incorporation of such correlated noise in a twirling maps approach to future work. Also, the case where multiple irreducible representations are concerned corresponds to the case where $\Lambda^*_R\neq \Lambda^*$ is a projector on a larger space and is thus not an approximate twirl. Recovering the matrix exponential result then requires an investigation into $\Lambda^*_R$ that do not form approximate twirls. However, our framework is not concerned with group structure or irreducible representations, but only the twirling map. This could provide a conceptually easier approach when dealing with infinite groups.

\section*{Acknowledgements}
We would like to thank Chunqing Deng, Linghang Kong, Yaoyun Shi, and Tenghui Wang for helpful comments. DD would like thank God for all of His provisions.

\begin{appendices}
\section{Properties of the $\alpha$ Matrix and $\beta$ Tensor}
\label{sec:app_alpha_beta}
We first recall that we have a basis $\{P_i\}_{i=1}^{d^2}$ of $L(\mathbb{C}^d)$ that satisfies
\begin{align*}
    \tr[P_i P_j] = d \delta_{ij}.
\end{align*}
We can use this basis to express superoperators 
\begin{align*}
    \mathcal T = \sum_{i,j} \alpha_{i,j} P_i \cdot P_j
\end{align*}
and supersuperoperators
\begin{align*}
    \Lambda: P_i \cdot P_j \mapsto \sum_{k,l} \beta_{i,j}^{k,l} P_k \cdot P_l.
\end{align*}
We establish some facts relating the norms of the $\alpha$ matrix with the norms of its corresponding superoperator. We will use these results when proving the bounds on $\gamma$.
\begin{prp}
\label{prp:alpha_hs_t_so}
Let $\mathcal T$ be a superoperator. Then,
\begin{align*}
    \Vert \mathcal T\Vert_\mathrm{SO} = d \Vert \alpha(\mathcal T) \Vert_\mathrm{HS},
\end{align*}
where $\Vert \cdot \Vert_\mathrm{SO}$ is the norm on superoperators induced by the $\langle \cdot, \cdot\rangle_\mathrm{SO}$ inner product.
\end{prp}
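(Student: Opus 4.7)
The plan is to prove the identity by moving to the vectorized picture, where the action of $\mathcal T$ becomes matrix multiplication and its SO norm becomes an ordinary Hilbert-Schmidt norm of a $d^2 \times d^2$ matrix. This allows me to directly read off the Frobenius-type norm from the $\alpha$ coefficients.

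First I would set up vectorization: use the standard map $\mathrm{vec}: L(\mathbb{C}^d) \to \mathbb{C}^{d^2}$ and the identity $\mathrm{vec}(AXB) = (A \otimes B^T)\mathrm{vec}(X)$. Under this map, $\mathcal T = \sum_{a,b}\alpha_{a,b} P_a \cdot P_b$ is represented by the matrix $\widetilde{\mathcal T} := \sum_{a,b}\alpha_{a,b}\, P_a \otimes P_b^T$ on $\mathbb{C}^{d^2}$. Since $\mathrm{vec}$ is an isometry between $(L(\mathbb{C}^d),\langle\cdot,\cdot\rangle_\mathrm{HS})$ and $\mathbb{C}^{d^2}$ with its standard inner product, and since any orthonormal basis of Hermitian matrices under $\langle \cdot,\cdot\rangle_\mathrm{HS}$ is also an orthonormal basis of $L(\mathbb{C}^d)$ over $\mathbb{C}$ (so that the SO inner product coincides with the Frobenius inner product of $\widetilde{\mathcal T}$), we have $\Vert\mathcal T\Vert_\mathrm{SO} = \Vert \widetilde{\mathcal T}\Vert_\mathrm{HS}$.

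Next I would compute $\Vert \widetilde{\mathcal T}\Vert_\mathrm{HS}$ directly by exploiting the orthogonality of the tensor-product Pauli basis. The key observation is that $P_b^T$ is still Hermitian when $P_b$ is, so
\[
\tr\big[(P_a \otimes P_b^T)^\dagger (P_{a'}\otimes P_{b'}^T)\big] = \tr[P_a P_{a'}]\,\tr[P_b^T P_{b'}^T] = d^2\,\delta_{aa'}\delta_{bb'},
\]
using $\tr[P_b^T P_{b'}^T] = \tr[(P_{b'}P_b)^T] = \tr[P_{b'}P_b] = d\delta_{bb'}$. Thus $\{P_a \otimes P_b^T\}_{a,b}$ is an orthogonal set in $L(\mathbb{C}^{d^2})$, and expanding $\widetilde{\mathcal T}$ gives
\[
\Vert \widetilde{\mathcal T}\Vert_\mathrm{HS}^2 = \sum_{a,b,a',b'} \overline{\alpha_{a,b}}\alpha_{a',b'}\cdot d^2 \delta_{aa'}\delta_{bb'} = d^2 \sum_{a,b}|\alpha_{a,b}|^2 = d^2 \Vert \alpha\Vert_\mathrm{HS}^2.
\]
Combining the two equalities yields $\Vert \mathcal T\Vert_\mathrm{SO} = d\Vert \alpha(\mathcal T)\Vert_\mathrm{HS}$.

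I do not anticipate any significant obstacle here; the calculation is essentially a bookkeeping exercise in changing bases. The only care-point is the compatibility of the SO norm (defined via an orthonormal basis of Hermitian matrices over $\mathbb{R}$) with the Frobenius norm of the vectorized operator on $\mathbb{C}^{d^2}$. This is straightforward because the normalized Paulis $\{P_i/\sqrt{d}\}$ simultaneously form an orthonormal basis in both senses, so one can either argue abstractly via isometry of vectorization or simply pick this basis and compute both sides explicitly.
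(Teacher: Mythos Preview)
Your proof is correct and takes a genuinely different route from the paper's. The paper computes $\Vert\mathcal T\Vert_\mathrm{SO}^2$ directly in the Pauli basis by expanding $\sum_m\langle \mathcal T(P_m),\mathcal T(P_m)\rangle_\mathrm{HS}$ and invoking the depolarizing identity $\sum_m P_m X P_m = d\,\tr[X]\,I$ to collapse the quadruple sum; you instead pass to the transfer-matrix picture, identify $\Vert\mathcal T\Vert_\mathrm{SO}$ with the Frobenius norm of $\widetilde{\mathcal T}=\sum_{a,b}\alpha_{a,b}\,P_a\otimes P_b^T$, and then use orthogonality of $\{P_a\otimes P_b^T\}$ under the matrix Hilbert--Schmidt inner product. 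Both arguments ultimately rest on Pauli orthogonality, but yours is more structural: the isometry step makes clear that the result holds for any Hermitian orthogonal operator basis with the same normalization, and it avoids the somewhat special identity the paper uses. The paper's argument, on the other hand, stays entirely within the superoperator language without introducing vectorization machinery. Your care-point about a real-orthonormal Hermitian basis also being a complex-orthonormal basis of $L(\mathbb{C}^d)$ is handled correctly and is the only nontrivial step in your approach.
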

\begin{proof}
    Since the Pauli operators $\{P_m\}_{m=1}^{d^2}$ are orthogonal under the Hilbert Schmidt inner product, we can directly compute
    \begin{align*}
        \Vert \mathcal T \Vert_\mathrm{SO}^2 & = \frac{1}{d}\sum_{m=1}^{d^2} \langle \mathcal T(P_m),\mathcal T(P_m)\rangle_\mathrm{HS} \\
        & = \frac{1}{d} \sum_{m=1}^{d^2} \tr\left[\sum_{i,j,k,l=1}^{d^2} \alpha_{i,j}^* \alpha_{k,l} P_j P_m P_i P_k P_m P_l\right] \\
        & = \frac{1}{d} \tr\left[\sum_{i,j,k,l=1}^{d^2} \alpha_{i,j}^* \alpha_{k,l} P_j \sum_{m=1}^{d^2} (P_m P_i P_k P_m )P_l\right]\\
        & = \frac{1}{d} \tr\left[\sum_{i,j,k,l=1}^{d^2} \alpha_{i,j}^* \alpha_{k,l} P_j (d \tr[P_i P_k] I)P_l\right]\\
        & =  d \tr\left[\sum_{i,j,k,l=1}^{d^2} \alpha_{i,j}^* \alpha_{k,l} \delta_{ik} P_j P_l\right]\\
        & =  d^2 \sum_{i,j,k,l=1}^{d^2} \alpha_{i,j}^* \alpha_{k,l} \delta_{ik} \delta_{jl} \\
        & = d^2 \sum_{i,j=1}^{d^2} \vert \alpha_{i,j} \vert^2.
    \end{align*}
    In the fourth equality we used for any operator $X$,
    \begin{align*}
        \sum_{m=1}^{d^2} P_m X P_m & = d \tr[X] I.
    \end{align*}
    This directly gives our conclusion.
\end{proof}
\begin{prp}
\label{prp:alpha_channel}
Let $\mathcal N$ be a quantum channel. Its $\alpha$ matrix is positive semidefinite with unit trace.
\end{prp}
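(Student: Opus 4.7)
The plan is to prove the two assertions separately, leveraging the fact that a quantum channel is by definition completely positive and trace preserving. The matrix $\alpha(\mathcal{N})$ is essentially the process/chi representation of $\mathcal{N}$ in the Pauli basis, for which these facts are standard; the proof is therefore a direct computation once the correct decomposition of $\mathcal{N}$ is chosen.

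For positive semidefiniteness, the natural starting point is a Kraus decomposition $\mathcal{N}(\rho) = \sum_k K_k \rho K_k^\dagger$, which exists because $\mathcal{N}$ is CP. Since the Paulis span $L(\mathbb{C}^d)$, each Kraus operator can be expanded as $K_k = \sum_i c_{k,i} P_i$ with coefficients $c_{k,i} = \frac{1}{d}\tr[P_i K_k]$. Substituting and using $P_j^\dagger = P_j$ gives
\begin{align*}
    \mathcal{N}(\rho) = \sum_{i,j}\Big(\sum_k c_{k,i}\,\overline{c_{k,j}}\Big) P_i\, \rho\, P_j.
\end{align*}
Matching with the definition of the $\alpha$ matrix yields $\alpha_{i,j} = \sum_k c_{k,i}\,\overline{c_{k,j}}$, which is manifestly a Gram-matrix expression: for any vector $u \in \mathbb{C}^{d^2}$,
\begin{align*}
    u^\dagger \alpha u = \sum_k \Big\vert \sum_i c_{k,i}\,\overline{u_i}\Big\vert^2 \geq 0.
\end{align*}

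For the unit-trace claim, the key is to use trace preservation, which is equivalent to $\mathcal{N}^\dagger(I) = I$ by the Hilbert-Schmidt adjoint identity $\tr[\mathcal{N}(\rho)] = \tr[\mathcal{N}^\dagger(I)\,\rho]$. A direct computation of the adjoint from the $\alpha$ representation (swapping the roles of left and right multiplication under the trace pairing) gives $\mathcal{N}^\dagger(\sigma) = \sum_{i,j} \alpha_{i,j} P_j \sigma P_i$. Setting $\sigma = I$ yields $\sum_{i,j}\alpha_{i,j} P_j P_i = I$; taking the trace of both sides and invoking Pauli orthogonality $\tr[P_i P_j] = d\delta_{ij}$ produces $d\sum_i \alpha_{i,i} = d$, i.e., $\tr[\alpha] = 1$.

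The proof is almost entirely mechanical once the $\alpha$ matrix is identified with the chi matrix of the channel. The only point requiring any care is bookkeeping of the conjugate convention in the Kraus-to-$\alpha$ conversion, and the factor of $d$ arising from the normalization $\tr[P_iP_j] = d\delta_{ij}$, which cancels cleanly in the trace computation. I do not foresee any genuine obstacle: both halves reduce to one-line identities once the right representation is in place.
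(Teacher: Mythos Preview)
Your proof is correct and follows essentially the same route as the paper: both expand the Kraus operators in the Pauli basis to exhibit $\alpha$ as a Gram matrix, and both reduce the unit-trace claim to the identity $\sum_{i,j}\alpha_{j,i}P_iP_j=I$ followed by tracing against the orthogonality relation $\tr[P_iP_j]=d\delta_{ij}$. The only cosmetic difference is that you obtain this identity via $\mathcal{N}^\dagger(I)=I$, whereas the paper writes it directly from the Kraus completeness relation $\sum_k A_k^\dagger A_k = I$; these are the same statement.
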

\begin{proof}
    Let $\mathcal N$ be a quantum channel. Then, it has a Kraus representation
    \begin{align*}
        \mathcal N = \sum_{k=1}^K A_k \cdot A_k^\dagger,
    \end{align*}
    where $K \leq d^2$ is the Kraus rank and 
    \begin{align*}
        \sum_{k=1}^K A_k^\dagger A_k = I.
    \end{align*}
    Suppose the decomposition of $A_k \in L(\mathbb{C}^d)$ in terms of $\{P_i\}_i$ is given by
    \begin{align*}
        A_k = \sum_i a_{k,i} P_i.
    \end{align*}
    Then,
    \begin{align*}
        \alpha(\mathcal N)_{i,j} = \sum_{k=1}^K a_{k,i} a_{k,j}^*.
    \end{align*}
    This implies $\alpha$ is a Gram matrix, which is equivalent to being positive semidefinite. Furthermore,
    \begin{align*}
        \sum_{k=1}^K A_k^\dagger A_k = \sum_{i,j} \alpha_{j,i} P_i P_j = I.
    \end{align*}
    Taking traces on both sides,
    \begin{align*}
        \sum_{i,j} \alpha_{j,i} d\delta_{ij} = d.
    \end{align*}
    Hence, $\alpha$ has unit trace.
\end{proof}
\noindent This is a nice property as intuitively one would expect a quantum channel to be a higher-order tensor analogue of a quantum state. However, we can't take this analogy that far since the converse is not true. For example, suppose $d=2$ and we choose the order $\{I,X,Y,Z\}$ with 
\begin{align*}
    \alpha = 
    \begin{bmatrix}
    \frac 1 2 & \epsilon & 0 & 0\\
    \epsilon & \frac 1 2 & 0 & 0\\
    0 & 0 & 0 & 0\\
    0 & 0 & 0 & 0
    \end{bmatrix}.
\end{align*}
For small positive $\epsilon$, $\alpha$ is positive semidefinite with unit trace. However, 
\begin{align*}
    \sum_{i,j} \alpha_{j,i} B_i B_j = I +  2\epsilon X \neq I.
\end{align*}

We next consider the $\beta$ tensor for twirling maps corresponding to a measure. We write out:
$$\Lambda(\mu) : P_i \cdot P_j \mapsto \int_{g \sim \mu} dg U_g^\dagger P_i U_g \cdot U_g^\dagger P_j U_g.$$
Denote
$$U_g^\dagger P_i U_g = \sum_{k=1}^{d^2} u(g)_{i,k} P_k.$$
Since the Paulis are Hermitian and orthogonal, $u(g)_{i,k} \in \mathbb{R}$ and $\sum_{k=1}^{d^2} \vert u(g)_{i,k}\vert^2 =1$. Then,
$$\Lambda(\mu) : P_i \cdot P_j \mapsto \int_{g \sim \mu} dg U_g^\dagger P_i U_g \cdot U_g^\dagger P_j U_g = \int_{g \sim \mu} dg \sum_{k,l=1}^{d^2} u(g)_{i,k} u(g)_{j,l} P_k \cdot P_l,$$
so 
\begin{align*}
    \beta_{i,j}^{k,l}(\mu) = \int_{g \sim \mu} dg u(g)_{i,k} u(g)_{j,l} = \frac{1}{d^2}\int_{g \sim \mu} dg \, tr[U_g^\dagger P_i U_g P_k]tr[U_g^\dagger P_j U_g P_l].
\end{align*}
This proves $\beta_{i,j}^{k,l}(\mu) \in \mathbb{R}$. From this representation we also obtain the identity:
$$\beta_{i,j}^{k,l}(\mu) = \beta_{j,i}^{l,k}(\mu)$$
as well as the fact
\begin{align}
\label{eq:beta_II}
    \Lambda(\mu)(I \cdot I) = I \cdot I.
\end{align}
Furthermore, since $\Lambda(\mu)$ preserves the set of channels, if $\mathcal N$ is a channel,
$$\Lambda(\mu)(\mathcal N) = \sum_{i,j=1}^{d^2} \alpha_{i,j} \sum_{k,l = 1}^{d^2} \beta_{i,j}^{k,l} P_k \cdot P_l = \sum_{k,l= 1}^{d^2} (\sum_{i,j=1}^{d^2} \alpha_{i,j} \beta_{i,j}^{k,l}(\mu)) P_k \cdot P_l$$
is also a channel, which puts additional constraints on $\beta(\mu)$. 

Lastly, we compute the $\beta$ tensor for the Haar measure: $\beta(\eta)$. We can use the expression for a Haar twirl from~\cite{emerson2005scalable} to compute
\begin{align*}
\Lambda(\mu)(P_i \cdot P_j)(\rho)& = \frac{\tr[P_iP_j] \tr[\rho] }{d} \frac{I}{d} +\frac{d\tr[P_i]\tr[P_j] - \tr[P_i P_j]}{d (d^2-1)} \left(\rho - \tr[\rho]\frac{I}{d}\right).
\end{align*}
If exactly one of $P_i,P_j$ is $I$, then the RHS is zero. If they're both not $I$, 
\begin{align*}
\Lambda(\mu)(P_i \cdot P_j)(\rho)& = \delta_{ij} \tr[\rho]\frac{I}{d}  -\frac{\delta_{ij}}{d^2-1} \left(\rho- \tr[\rho] \frac{I}{d}\right)\\
& = \frac{\delta_{ij}}{d^2-1} \left[ -\rho +d^2 \tr[\rho] \frac{I}{d}\right] \\
& = \frac{\delta_{ij}}{d^2-1} \left[ - I\rho I + \sum_{k=1}^{d^2} P_k \rho P_k\right] \\
& = \frac{\delta_{ij}}{d^2-1} \sum_{k=2}^{d^2} P_k \rho P_k.
\end{align*}
Along with~\Cref{eq:beta_II}, we conclude
\begin{align}
\label{eq:beta_haar}
\beta_{i,j}^{k,l}(\eta) = 
\begin{cases}
1 & i=j=k=l=1\\
\frac{1}{d^2-1} & (i=j > 1 )\land( k=l>1) \\
0 & \text{otherwise}
\end{cases}.
\end{align}
In words, $\Lambda(\eta)$ maps a pair of the same non-identity Pauli's to a uniform convex combination of same non-identity Pauli's.

\section{$\ell_2$-Norm Like Bound}
\label{sec:app_l2_bound}
Here we prove~\Cref{prp:l2_norm_bound}:
\begin{proof}
    Let $\mathcal T$ be a superoperator. We compute
    \begin{align*}
    & \Vert [\Lambda(\mu) - \Lambda(\mu_H) ](\mathcal T)\Vert_\diamond\\
    & = \max_{\Vert \rho \Vert_1 =1} \Vert (\Lambda(\mu)(\mathcal T) \otimes \mathbb{I} - \Lambda(\mu_H)(\mathcal T) \otimes \mathbb{I})(\rho)  \Vert_1\\
    & = \max_{\Vert \rho \Vert_1=1} \left\Vert \sum_{\substack{k,l=1}}^{d^2} \left[\sum_{i,j=1}^{d^2} \alpha_{i,j} \left(\beta_{i,j}^{k,l}(\mu)-\beta_{i,j}^{k,l}(\mu_H)\right) \right] (P_k \otimes I)\rho (P_l\otimes I) \right\Vert_1 \\
    & \le \sum_{\substack{k,l=1}}^{d^2} \left\vert\sum_{i,j=1}^{d^2} \alpha_{i,j} \left(\beta_{i,j}^{k,l}(\mu)-\beta_{i,j}^{k,l}(\mu_H)\right) \right\vert\max_{\Vert \rho \Vert_1=1} \left\Vert (P_k \otimes I)\rho (P_l\otimes I) \right\Vert_1 \\
    & \le \Vert \alpha\Vert_{\ell_2}   \sqrt{\sum_{\substack{k,l,k',l'=1}}^{d^2} \left\vert \sum_{i,j=1}^{d^2} (\beta^{k,l}_{i,j}(\mu)-\beta^{k,l}_{i,j}(\mu_H))(\beta^{k',l'}_{i,j}(\mu)-\beta^{k',l'}_{i,j}(\mu_H))\right\vert} \\
    & = \frac{1}{d} \Vert \mathcal T \Vert_\mathrm{SO} \sqrt{\sum_{\substack{k,l,k',l'=1}}^{d^2} \left\vert \sum_{i,j=1}^{d^2} (\beta^{k,l}_{i,j}(\mu)-\beta^{k,l}_{i,j}(\mu_H))(\beta^{k',l'}_{i,j}(\mu)-\beta^{k',l'}_{i,j}(\mu_H))\right\vert} \\
    &\leq  2 \Vert \mathcal T \Vert_\diamond\sqrt{\sum_{\substack{k,l,k',l'=1}}^{d^2} \left\vert \sum_{i,j=1}^{d^2} (\beta^{k,l}_{i,j}(\mu)-\beta^{k,l}_{i,j}(\mu_H))(\beta^{k',l'}_{i,j}(\mu)-\beta^{k',l'}_{i,j}(\mu_H))\right\vert},
    \end{align*}
    where the second inequality involves the following argument:
    \begin{align*}
        \sum_m |\langle a, b_m\rangle| &=  \max_{k_m \in U(1)}\langle a, \sum_m  k_m\cdot b_m \rangle \\
        &\leq \Vert a\Vert_{\ell_2}\cdot \max_{k_m\in U(1)}\Vert \sum_m  k_m\cdot b_m\Vert_{\ell_2}\\
        & = \Vert a\Vert_{\ell_2}\cdot \max_{k_m, k_{m'} \in U(1)}\sqrt{\sum_{m,m'}(  k_m\cdot k_{m'}) \langle b_m,b_{m'}}\rangle\\
        &\leq \Vert a\Vert_{\ell_2}\cdot\sqrt{\sum_{m,m'}| \langle b_m, b_{m'}\rangle|},
    \end{align*}
    the third equality follows from~\Cref{prp:alpha_hs_t_so}, and the last inequality follows from~\Cref{eq:SO_tr_inequality}.
\end{proof}

\section{Induced $\ell_1$-Norm Like Bound}
\label{sec:app_induced_l1_bound}
Here we prove~\Cref{prp:induced_l1_bound}:
\begin{proof}
    We first establish that 
    \begin{align*}
        \Vert \Lambda(\mu)(\mathcal T) \Vert_\diamond = \left\Vert \int_{g \sim \mu} \omega(g^{-1}) \circ \mathcal T \circ \omega(g) dg \right \Vert_\diamond \leq \int_{g \sim \mu} dg \Vert  \omega(g^{-1})\Vert_\diamond \Vert \mathcal T \Vert_\diamond \Vert \omega(g) \Vert_\diamond = \Vert \mathcal T \Vert_\diamond  .
    \end{align*}
    Furthermore, we have that
    \begin{align}
        \Lambda(\mu_P)(\mathcal T) & = \Lambda(\mu_P) \left(\sum_{i,j} \alpha_{i,j} P_i \cdot P_j\right) \nonumber\\
        & = \frac{1}{d^2} \sum_k \sum_{i,j} \alpha_{i,j} P_k P_i P_k \cdot P_k P_j P_k \nonumber\\
        & = \frac{1}{d^2} \sum_k \sum_{i,j} \alpha_{i,j} (-1)^{\langle b_k, b_i \rangle_S + \langle b_k, b_j \rangle_S} P_i \cdot P_j \nonumber\\
        & = \frac{1}{d^2} \sum_k \sum_{i,j} \alpha_{i,j} (-1)^{\langle b_k, b_i \oplus b_j \rangle_S} P_i \cdot P_j \nonumber\\
        & = \sum_i \alpha_{i,i}(\mathcal T) P_i \cdot P_i \label{eq:pauli_twirl},
    \end{align}
    where $\langle \cdot ,\cdot\rangle_S$ denotes the symplectic inner product of the binary symplectic representations of a Pauli string~\cite{gottesman1997stabilizer} and $\oplus$ denotes bitwise addition. Two Paulis commute iff the inner product vanishes. The fourth equality follows from the distributive property of the inner product. The last inequality follows because when $i=j$, $b_i\oplus b_j=0$, while when $i\neq j$, half of the summed $b_k$ is orthogonal to $b_i \oplus b_j$ while the other half is not and therefore cancels out.
    
    Since $\mathcal N_i$ is a channel, $\alpha(\mathcal N_i)$ has unit trace. Thus, by~\Cref{eq:pauli_twirl}, $\Lambda(\mu_P)(\mathcal N_i)$ is a Pauli channel, that is, a mixture of Pauli unitaries, and $\Lambda(\mu_P)(\mathcal T)$ is a difference of Pauli channels. Hence we conclude
    \begin{align}
        \Vert \mathcal T \Vert_\diamond & \geq \Vert \Lambda(\mu_P)(\mathcal T) \Vert_\diamond \nonumber\\
        & = \Vert \mathrm{diag}(\alpha(\mathcal T) )\Vert_{\ell_1} \label{eq:data_processing},
    \end{align}
    where the equality is proven in~\cite{sacchi2005optimal}.

    By~\Cref{eq:beta_haar},
    \begin{align*}
        \Lambda(\eta) = \Lambda(\eta) \circ \Lambda(\mu_P).
    \end{align*}
    We can therefore argue
    \begin{align*}
    & \Vert [\Lambda(\mu) - \Lambda(\eta) ](\mathcal T)\Vert_\diamond\\
    & =\Vert [\Lambda(\mu) - \Lambda(\eta) ] \circ \Lambda(\mu_P)(\mathcal T)\Vert_\diamond\\
    & = \max_{\Vert \rho \Vert_1 =1} \Vert [\Lambda(\mu)( \Lambda(\mu_P)(\mathcal T) )\otimes \mathbb{I} - \Lambda(\eta)( \Lambda(\mu_P)(\mathcal T)) \otimes \mathbb{I}](\rho)  \Vert_1\\
    & = \max_{\Vert \rho \Vert_1=1} \left\Vert \sum_{\substack{k,l=1}}^{d^2} \left[\sum_{i=1}^{d^2} \alpha_{i,i}(\mathcal T) \left(\beta_{i,i}^{k,l}(\mu)-\beta_{i,i}^{k,l}(\eta)\right) \right] (P_k \otimes I)\rho (P_l\otimes I) \right\Vert_1 \\
    & \le \sum_{\substack{k,l=1}}^{d^2} \left\vert\sum_{i=1}^{d^2} \alpha_{i,i}(\mathcal T) \left(\beta_{i,i}^{k,l}(\mu)-\beta_{i,i}^{k,l}(\eta)\right) \right\vert\max_{\Vert \rho \Vert_1=1} \left\Vert (P_k \otimes I)\rho (P_l\otimes I) \right\Vert_1 \\
    & \le  \sum_{i=1}^{d^2} \left\vert\alpha_{i,i}(\mathcal T) \right\vert\sum_{\substack{k,l=1}}^{d^2}\vert\beta_{i,i}^{k,l}(\mu)-\beta_{i,i}^{k,l}(\eta)\vert  \\
    & \le \max_{i \in [d^2]} \sum_{\substack{k,l=1}}^{d^2}  \left\vert\beta_{i,i}^{k,l}(\mu)-\beta_{i,i}^{k,l}(\eta) \right\vert  \Vert\mathrm{diag}(\alpha(\mathcal T))\Vert_{\ell_1}  \\
    & \leq  \max_{i \in [d^2]} \sum_{\substack{k,l=1}}^{d^2}  \left\vert\beta_{i,i}^{k,l}(\mu)-\beta_{i,i}^{k,l}(\eta) \right\vert \Vert \mathcal T \Vert_\diamond,
    \end{align*}
    where the last inequality uses~\Cref{eq:data_processing}.
\end{proof}

\end{appendices}

\bibliographystyle{unsrt}

\end{document}